 \documentclass[review,3p,times,authoryear]{elsarticle}

\makeatletter
\def\ps@pprintTitle{%
  \let\@oddhead\@empty
  \let\@evenhead\@empty
  \let\@oddfoot\@empty
  \let\@evenfoot\@empty
}
\makeatother

\usepackage{amssymb}
\usepackage{amsmath}
\usepackage{amsthm}
\usepackage{mathtools}
\usepackage{float}
\usepackage{booktabs}
\usepackage{comment}
\usepackage{graphicx}
\usepackage{subcaption}
\usepackage{url}
\usepackage{xcolor}

\newtheorem{theorem}{Theorem}
\newtheorem{proposition}{Proposition}

\newtheorem{definition}{Definition}
\usepackage{siunitx}
\usepackage{threeparttable}
\usepackage{caption}
\usepackage{multirow}
\usepackage{tabularx}

\newcommand{\Tau}{\mathsf{T}}

\journal{}

\begin{document}

\begin{frontmatter}

\title{How optimistic inflow forecasts distort dispatch, prices, and contracts in hydro-dominated power systems: evidence from Brazil}

\author[a]{Arthur Brigatto}
\author[a,b]{Alexandre Street}
\author[a,c]{Joaquim Dias Garcia}

\affiliation[a]{organization={LAMPS, Pontifical Catholic University of Rio de Janeiro},
            city={Rio de Janeiro},
            country={Brazil}}

\affiliation[b]{organization={Stanford University},
            city={Stanford, CA},
            country={USA}}

\affiliation[c]{organization={PSR},
            city={Rio de Janeiro},
            country={Brazil}}            

\begin{abstract}
Centralized hydrothermal planning models determine generation schedules and electricity spot prices based on inflow forecasts in audited-cost power systems, such as those prevalent in Latin America, and provide operational benchmarks and decision support in hydro-dominated competitive electricity markets. Consequently, biased forecasts can propagate directly into both operational decisions and market outcomes. This paper studies how persistent optimistic inflow-forecast bias propagates through the Brazilian hydrothermal power system and market. For a stylized hydrothermal model, we show analytically that optimistic bias weakly reduces water values and weakly increases first-stage hydro discharge relative to the unbiased optimum, thereby lowering reservoir storage and postponing thermal commitment. Using official Brazilian planning and operational data, we provide empirical evidence consistent with this mechanism. We then conduct a controlled SDDP experiment to compare policies trained under biased and bias-corrected inflow-forecast processes, evaluating both under the same bias-corrected inflow scenarios. The policy trained under biased forecasts produces lower reservoir levels, delayed dry-season thermal dispatch, sharper spot-price peaks, higher reliability risk, and higher expected operating costs. Finally, we show that these distortions increase the price–quantity risk for hydropower producers and reduce their willingness to contract. The results indicate that inflow-forecast bias is not merely a statistical forecasting problem, but can be a source of operational inefficiency, reliability risk, and distorted market incentives in hydro-dominated power systems. We argue that the insights and policy implications drawn in this paper may be relevant beyond Brazil to other hydro-dominated systems and electricity markets that are increasingly reliant on energy storage.


\end{abstract}

\begin{keyword}

Brazilian power system\sep forecast bias \sep hydrothermal dispatch scheduling \sep market distortion \sep stochastic dual dynamic programming \sep storage opportunity cost

\end{keyword}

 \end{frontmatter}

\section{Introduction} \label{sec:introduction}
According to recent reports \citep{IHA_WorldHydropowerOutlook2025, IEA_Hydroelectricity2025}, hydropower is still one of the world’s largest renewable sources of electricity. Brazil has one of the world’s largest installed hydropower capacities. As of December 2025, hydroelectric plants accounted for 42\% of the country’s installed capacity and supplied approximately 56\% of the 704 TWh of electricity generated in 2025 \citep{MME_Consolidation2025}. Beyond their large share, these plants play a strategic role in system reliability by providing the operational flexibility needed to accommodate the large-scale integration of distributed and centralized intermittent renewable sources and to meet peak demand. Large hydropower systems are also present in the United States, Canada, New Zealand, Norway, and China; yet, among the world's ten largest economies, Brazil occupies a distinctive position, as it is simultaneously hydro-dominated, endowed with deep multi-year storage, and operated under a centralized, cost-based dispatch in which the spot price is the marginal cost obtained with the planning model.


On the operational side, hydrothermal coordination is particularly complex in Brazil, where more than 160 hydroelectric plants with multi-year regulation reservoirs are arranged in interconnected cascades under different ownership structures. The resulting temporal coupling and the market externalities of long multi-owner cascades largely explain why the country's operation and short-term electricity market still rely on a cost-based, centralized scheduling framework. 

To support operational decisions, Brazil relies on a chain of statistical and optimization models (see \cite{silva2025overview} for the Brazilian framework and \cite{fullner2025stochastic} for a survey of related methodologies). At the core of this chain is the NEWAVE model, which applies the Stochastic Dual Dynamic Programming (SDDP) algorithm of \cite{pereira1991multi} to approximate the cost-to-go functions\footnote{Also known as recourse or value functions.} that guide the management of stored water through time. These cost-to-go functions quantify the opportunity cost of water by expressing the expected future operating cost as a function of current reservoir storage and hydrological conditions. 

Despite the central role of inflow forecasts in reservoir management, recent studies have documented a persistent optimistic bias in Brazil's official inflow forecasts given by the Periodic Autoregressive model (PARp-A, see \citealp{pereira1984stochastic, noakes1985forecasting}). Using official data and a statistical rolling-window analysis, \cite{brigatto2025assessing} shows that the official implementation of the PARp-A model has systematically over-predicted inflows across all relevant horizons for more than a decade. 

In this context, there are growing concerns about the effects of the reported bias on Brazilian supply adequacy and market incentives (see evidence and further references in \cite{brigatto2025assessing}). From 2013 to 2021, the country experienced a series of severe droughts \citep{w14040601}, while official inflow forecasts systematically overestimated observed inflows. During the same period, the official dispatch methodology continued to postpone preventive thermoelectric generation, even as realized hydrological and storage conditions deteriorated. In practice, however, the system operator and the Brazilian Electric Sector Monitoring Committee\footnote{The Brazilian Electric Sector Monitoring Committee (CMSE) is a federal government body created by Law No.\,10,848/2004, responsible for monitoring supply adequacy, assessing operational risks, and recommending preventive and corrective measures.} repeatedly relied on emergency out-of-merit thermal dispatch to preserve supply adequacy. This divergence between the model-prescribed dispatch and the implemented operation is consistent with the operator assigning greater scarcity value to stored water than that implied by the official model.

Although out-of-merit dispatch may also reflect other operational concerns, its persistence during a period of low realized inflows and optimistic forecasts suggests that the official water-value signal was insufficient to support the reliability actions ultimately deemed necessary. These emergency interventions imposed more than USD 3.8 billion in out-of-market costs on consumers during the 2021 crisis and, within months, undermined the credibility of the official policy framework.\footnote{The Brazilian independent system operator reports the official cost assessment in Figure 33 of \cite{ONS_out_of_merit}.}

These concerns have already prompted regulatory responses. Since 2022, Brazilian authorities have advanced model- and data-governance reforms, culminating in National Energy Policy Council Resolution No. 1/2024 of March 12, 2024, which seeks greater coherence across the data, parameters, methodologies, and computational models used in sector planning and operation\footnote{See \cite{street2024modelgovernance} for a compilation of studies, official decisions, and an English translation of the resolution.}. Yet these reforms do not provide a systematic scientific assessment of how forecast bias propagates through operation, prices, and market incentives.

Given the central role of inflow forecasts in assessing water values in hydro-dependent power systems, and the concerning evidence of systematically biased inflow forecasts in the Brazilian power system, one of the largest Latin American economies in terms of GDP and a global reference in hydro scheduling, the following two closely related questions remain unanswered: 

\begin{enumerate}
    \item Is the current optimistically biased forecasting methodology capable of biasing water values, in the short- and long-term operation, yielding higher hydro discharges than those that would be obtained under unbiased forecasts?
    \item If affirmative, can the induced dispatch policy generate long-term operational and market distortions, ultimately affecting the system supply adequacy, spot-price formation, operating costs, and forward-contracting incentives?
\end{enumerate}  


Concerns about optimistic water-value assessments are not exclusive to Brazil and have also been documented in other Latin American power systems (e.g., \cite{ISCI2024}, Section~4.3). Earlier scientific work has raised related concerns in the Brazilian system. In particular, previous studies have shown that network simplifications \citep{rosemberg2021assessing} and simplified hydropower representations \citep{navarro2024medium} can affect the opportunity cost of stored water and lead to dispatch policies that perform poorly when implemented under more realistic operating conditions. In such cases, the model-implied scarcity signal may be too weak, requiring the system operator to correct the planned policy through out-of-merit dispatch to preserve reliability. 

Related distortions also arise in capacity expansion studies, where simplified representations of hydropower production functions can yield investment plans that perform poorly once uncertainty and operational detail are realized \citep{ramirez2019effect}. These concerns have motivated the development of energy scheduling and expansion models that explicitly pursue time-consistent formulations \citep{pisciella2016time}. 

Despite the relevance of prior work, no study has examined how systematic bias in the inflow-forecasting process itself propagates through water values, reservoir trajectories, dispatch, price formation, and contracting incentives in the Brazilian system. \emph{Therefore, the objective of this paper is to extend the forecast bias findings in \cite{brigatto2025assessing} and provide formal and empirical evidence on this propagation channel.} To do that, first, we present a formal analytical result showing that, under mild conditions, a positive (optimistic) inflow bias reduces estimated water values, thereby increasing water utilization, reducing storage, and postponing thermoelectric dispatch in comparison to the unbiased optimal case. Second, using official Brazilian planning and operation data, we provide empirical evidence consistent with these effects in a system-wide setting. Third, we conduct controlled SDDP experiments to isolate and quantify the model-implied counterfactual effects of biased inflow forecasts on reservoir trajectories, thermal dispatch, spot prices, system risk, and operating costs. Fourth, we add a stylized forward-contracting overlay showing that biased forecasts can change the joint distribution of hydropower generation and spot prices, increase price--quantity risk, and reduce the contracting level of a representative risk-averse hydropower producer.

The relevance of the mechanism studied in this paper is not limited to Brazil or to hydroelectric storage. Similar hydrothermal planning models are used to determine generation schedules and spot prices in cost-based electricity markets, many of which are found in Latin America, and to provide operational and market monitoring benchmarks in hydro-dominated competitive electricity markets \citep{rangel2008competition, wolfgang2009hydro}. Thus, the long-term management of scarce hydro resources remains a central challenge in countries such as Chile, Colombia, Uruguay, and Peru, as well as in other regions with significant hydropower dependence, including parts of the United States, Canada, central Mexico, Norway, Vietnam, and New Zealand. 

More broadly, the mechanism studied here is relevant to storage-intensive power systems. Battery storage is expanding rapidly as a flexibility resource for integrating variable renewable generation, with global installed capacity projected to reach approximately 900 GW by 2030 and 1,700 GW by 2035 \citep{IEA2025WEO}. Therefore, although this paper focuses on hydro reservoirs, the underlying mechanism may also provide insights for future regulatory discussions in the broader context of energy storage planning and operation. As hydro reservoirs, short-duration storage, and long-duration storage technologies become increasingly important, biased expectations may similarly distort shadow values, intertemporal dispatch decisions, and market signals, potentially affecting investment decisions and the deployment of energy storage resources \citep{antweiler2021storage}. For the interested reader, we highlight the following literature \citep{butters2025soaking, merrick2024representation, tabari2020paying, shan2024allocation, shan2022evaluating}.

\section{Implications of Forecast Bias for the Cost-to-Go Function} \label{sec:implications}

In this section, we formally analyze how optimistic inflow forecasts distort water values and induce higher conditional first-stage hydro discharge decisions and, as a consequence, reduce second-stage storage levels relative to the unbiased optimal values. Because the system operator implements only the current-period (first-stage) dispatch decision, updates the storage level based on realized inflow and hydro discharge, and then repeats the process in the next period, under mild assumptions the sequence of biased dispatch decisions produces storage levels systematically below the unbiased optimal trajectory. This result provides the key analytical link between biased forecasts, depressed water values, delayed preventive thermal dispatch, and the broader effects discussed in Section \ref{sec:introduction}. 

Forecast bias is a statistical concept. In this section, however, we assume a deterministic setting for simplicity and didactic purposes. Therefore, a positive (optimistic) bias is modeled as a positive shift relative to a perfect forecast, which coincides with the realized inflow. Additionally, we consider a stylized version of the hydrothermal dispatch problem with a single hydropower plant, a set of thermal power plants, and no network constraints. 
At the end of this section, we discuss the limitations of this analysis and the extent to which the insights obtained here are valid. We will also discuss some intuition regarding the results derived here and how they extend to practical cases.

For now, suppose the ISO determines the optimal first-stage dispatch decisions, given by the thermal generation vector ($\boldsymbol{g}_1$), hydro discharge by the hydro unit to generate energy ($u_1$), and the hydro spillage ($s_1$), by solving a deterministic multistage optimization problem. The look-ahead or planning component, which comprises a set of advisory future dispatch decisions ($\boldsymbol{g}_t, u_t, s_t$) for stages $t = 2, \ldots, T$, is used to capture the opportunity cost of water observing the future demand ($d_t$), the water inflows ($y_t$), the vector of marginal costs for the thermal units ($\boldsymbol{c}_t$), storage levels ($v_t$), and the system's constraints.\footnote{As shown in \cite{pereira1991multi}, the planning component can be efficiently approximated via SDDP through a recursive procedure. In this section, however, we use an explicit representation of the cost-to-go function to isolate the effect of bias and leave technical details about SDDP to be discussed in \ref{sec:appendix_sddp}.} Assume also that, for each period $t \in \{1,\dots,T\}$, the ISO is given a deterministic inflow forecast $y_t$. The resulting optimization problem used to define the first-stage dispatch (thermal generation) is given by:
{\allowdisplaybreaks
\begin{align}
&\underset{\substack{v_t, \boldsymbol{g}_t, u_t, s_t
}
}{\min}\;  
\sum_{t=1}^T\boldsymbol{c}^\top_{t} \boldsymbol{g}_t 
\label{eq:objective_hypothesis} \\
&\text{subject to:} \notag \\
& u_t + \boldsymbol{1}^\top\boldsymbol{g}_t = d_{t},\quad \forall t \in \{1,\dots,T\} \label{eq:energy_balance_hypothesis} \\
& v_{t} = v_{t-1} + y_t - u_t - s_t\label{eq:water_balance_hypothesis},\quad \forall t \in \{1,\dots,T\}  \\
& \underline{V} \leq v_t \leq \overline{V},\quad \forall t \in \{1,\dots,T\} \label{eq:reservoir_bounds_hypothesis} \\
& \underline{\boldsymbol{G}} \leq \boldsymbol{g}_t \leq \overline{\boldsymbol{G}},\quad \forall t \in \{1,\dots,T\} \label{eq:thermal_bounds_hypothesis} \\
%
%
&(u_t, s_t) \in \mathcal{P},\quad \forall t \in \{1,\dots,T\}. \label{eq:hydro_bounds_hypothesis}
\end{align}}
For each period $t$, in \eqref{eq:objective_hypothesis}-\eqref{eq:hydro_bounds_hypothesis}, constraints \eqref{eq:energy_balance_hypothesis} and \eqref{eq:water_balance_hypothesis} represent the energy and water balance equations, respectively. Constraints \eqref{eq:reservoir_bounds_hypothesis}-\eqref{eq:hydro_bounds_hypothesis} impose bounds on the decision variables and polyhedral constraints, $(u_t, s_t) \in \mathcal{P}$, to model hydro discharge and spillage joint and individual upper limits. For the analysis that follows, we take the free-spill specialization $\mathcal{P}=\{(u_t,s_t): 0\le u_t\le\overline U,\ s_t\ge 0\}$, under which spillage disposes of any surplus inflow at zero cost.

Now, suppose the ISO is given an optimistically biased inflow forecast $y_t + \delta_t$, where $\delta_t \geq 0$ represents a bias from the second stage onward for each period. In this scenario, the future water availability is overestimated by a vector $\boldsymbol{\delta}=(\delta_2,\dots,\delta_T)\geq0$. In order to study the effect of these biases on the first-stage decisions, it is useful to analyze the problem in its dynamic programming version. 

For each period $t$, given the storage state $v_{t-1}$ 
and a forecast bias vector $\boldsymbol{\delta} = (\delta_2, \dots, \delta_T) 
\geq \boldsymbol{0}$, the cost-to-go function is defined recursively as:
\begin{align}
Q_t^{(\boldsymbol{\delta})}(v_{t-1})
=
\min_{r_t,v_t}
&\quad
\phi_t(r_t)+Q_{t+1}^{(\boldsymbol{\delta})}(v_t)
\label{eq:reduced_stage_problem_obj}\\
\text{\emph{subject to}}
&\quad
v_t=v_{t-1}+y_t+\delta_t-r_t,
&&[\lambda_t^{(\boldsymbol{\delta})}\in\mathbb R],
\label{eq:reduced_stage_problem_balance}\\
&\quad
\underline V\le v_t\le \overline V,
&&[\underline\mu_t^{(\boldsymbol{\delta})},\overline\mu_t^{(\boldsymbol{\delta})}\ge0].
\label{eq:reduced_stage_problem_storage_bounds}
\end{align}
Where $\phi_t$, the immediate cost of period $t$ written as a function of the total water release $r_t$, is given by
\begin{align}
\phi_t(r_t)
:=
\min_{\boldsymbol g_t, u_t}
&\quad
\boldsymbol c_t^\top \boldsymbol g_t
\label{eq:phi_definition_obj}\\
\text{\emph{subject to}}
&\quad
u_t+\boldsymbol 1^\top\boldsymbol g_t=d_t,
\label{eq:phi_definition_balance}\\
&\quad
\underline{\boldsymbol{G}} \le \boldsymbol g_t\le \overline{\boldsymbol G}
\label{eq:phi_definition_bounds}\\
&\quad
u_t \leq r_t
\label{eq:phi_definition_release}\\
&\quad
0\le u_t\le \overline U,
\label{eq:phi_definition_turbine_bound}
\end{align}
where the slack of constraint \eqref{eq:phi_definition_release} is the hydro spillage variable. Thus, \(\phi_t\) is piecewise-linear convex and non-increasing in \(r_t\), as the more water released in period $t$, the lower the thermal dispatch cost in that period. We fix the boundary condition $Q_{T+1}^{(\boldsymbol{\delta})}(v_T) = 0$ 
for all $v_T$, and set $\delta_1 = 0$ by convention (no bias at the 
first stage). Each $Q_t^{(\boldsymbol{\delta})}$ is convex and 
non-increasing in $v_{t-1}$, with subdifferential $\partial_v 
Q_t^{(\boldsymbol{\delta})} \leq 0$. Subgradients in $\partial_v 
Q_t^{(\boldsymbol{\delta})}$ play a central role in hydrothermal power system operation planning because they measure the reduction in future operating costs associated with an additional unit of water stored. Interpreting each element of this set as an opportunity cost of using water today, commonly referred to as a water value, we define:
\begin{definition}
The water-value correspondence is defined as the negative of the cost-to-go function subdifferential with respect to storage:
\begin{align}
W_t^{(\boldsymbol{\delta})}(v_{t-1})
=
-\partial_v Q_t^{(\boldsymbol{\delta})}(v_{t-1}).
\end{align}
Each element $w_t \in W_t^{(\boldsymbol{\delta})}(v_{t-1})$ is a possible marginal opportunity cost of stored water at state $v_{t-1}$.
\end{definition}

\subsection{Positive forecast bias induces higher water use}

We prove this result in two steps. First, Proposition \ref{prop:bias_reduces_water_value} shows that, regardless of any existing bias from $t=t_0+1$ to $T$, with $t_0\in\{2,...,T-1\}$, an additional non-negative bias at period $t_0$ weakly reduces the water values in all periods $t=2,...,t_0$ and, consequently, weakly increases the optimal hydro discharge $u_t$ from $t=1,..., t_0$. The term weakly indicates that equality is allowed. Thus, weakly reduces means “reduces or remains unchanged,” and weakly increases means “increases or remains unchanged.” Then, Theorem~\ref{theo:bias_increase_discharge} applies this result recursively to conclude that the optimal first-stage hydro discharge is weakly larger under any optimistic bias than under unbiased forecasts.

\begin{proposition}[marginal bias effect]
\label{prop:bias_reduces_water_value}
Let  $\boldsymbol{\delta} = (0,\dots,0, \allowbreak\delta_{t_0+1},\dots,\delta_{T}) \geq \boldsymbol{0}$, with zeros in periods $1,\dots,t_0$, and let
$\boldsymbol{\delta}'=\boldsymbol{\delta}+\eta\,\boldsymbol{e}_{t_0}$, for some $t_0\in\{2,\dots,T\}$ and $\eta\geq 0$. Then:
\begin{enumerate}
\item[(i)] For every $v_{t-1}\in[\underline V,\overline V]$ and every $t=2,\dots,t_0$,
\begin{align}
\partial_v Q_t^{(\boldsymbol{\delta}')}(v_{t-1})
\;\succeq\;
\partial_v Q_t^{(\boldsymbol{\delta})}(v_{t-1}),
\label{eq:prop_subgrad_ordering}
\end{align}
where $\succeq$ denotes endpoint ordering of subdifferential intervals.\footnote{%
For nonempty closed intervals $A,B\subseteq\overline{\mathbb R}$, we write $A\succeq B$ iff
$\inf A\ge \inf B$ and $\sup A\ge \sup B$. In other words, the interval $A$ lies weakly to the right of $B$ in
the endpoint sense; reading the endpoints as $\inf/\sup$ in the extended reals accommodates the half-line subdifferentials that arise at the storage bounds.}
\item[(ii)] Among the optimal solutions of each stage-$t$ problem the discharges form a closed interval; taking $u_t^{*}$ to be its greatest or lowest element, for any given $t \in \{1, ...,t_0\}$ and state $v_{t-1}\in[\underline V,\overline V]$,
\begin{align}
u_t^{*(\boldsymbol{\delta}')}\;\geq\;u_t^{*(\boldsymbol{\delta})}. 
\label{eq:prop_u_ordering}
\end{align}

\item[(iii)] Consequently, for any common initial state $v_{t-1}$, the corresponding (lowest or greatest) optimal storage is such that
\begin{align}
v_t^{*(\boldsymbol{\delta}')}\;\leq\;v_t^{*(\boldsymbol{\delta})}. \label{eq:prop_v_ordering}
\end{align}
\end{enumerate}
\end{proposition}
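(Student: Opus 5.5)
The plan is a backward induction on $t=t_0,t_0-1,\dots,2$. The key observation is that $\boldsymbol\delta$ and $\boldsymbol\delta'$ coincide from $t_0+1$ onward, so $Q_{t_0+1}^{(\boldsymbol\delta')}=Q_{t_0+1}^{(\boldsymbol\delta)}=:Q_{t_0+1}$. The perturbation therefore enters only through the water balance of stage $t_0$.

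\emph{Base step ($t=t_0$).} Define the ``post-inflow'' value function
\[
G(x):=\min_{r\ge 0,\,v}\{\phi_{t_0}(r)+Q_{t_0+1}(v):\ v=x-r,\ \underline V\le v\le\overline V\}.
\]
Free spill makes every $x\ge\underline V$ feasible. $G$ is the infimal convolution of convex functions and hence convex. By construction $Q^{(\boldsymbol\delta)}_{t_0}(v)=G(v+y_{t_0})$ and $Q^{(\boldsymbol\delta')}_{t_0}(v)=G(v+y_{t_0}+\eta)$. Since the subdifferential of a convex function of one variable is monotone in the endpoint sense (one-sided derivatives are nondecreasing, read in $\overline{\mathbb R}$ at the domain edge), we get $\partial G(x+\eta)\succeq\partial G(x)$. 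This is (i) at $t_0$.

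For (ii) at $t_0$, I would show that the lowest and greatest optimal releases $r^*(x)$ are nondecreasing in $x$. The objective $\phi_{t_0}(r)+Q_{t_0+1}(x-r)$ has decreasing differences in $(r,x)$, because $Q_{t_0+1}$ is convex. Topkis' theorem on the lattice $\mathbb R$ then gives monotonicity of the extreme selections. The discharge $u^*$ is obtained from $r$ by solving the LP defining $\phi_{t_0}$. Taking the extreme selection $u^*=\min\{r,\overline U,\cdot\}$, which is a nondecreasing map of $r$, gives $u^{*(\boldsymbol\delta')}_{t_0}\ge u^{*(\boldsymbol\delta)}_{t_0}$. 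The fact that optimal discharges form a closed interval follows because the stage problem is an LP with a convex piecewise-linear cost-to-go: its solution set is a polytope, and its projection onto $u_t$ is a closed interval.

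\emph{Inductive step ($t<t_0$).} Here $\delta_t=\delta'_t=0$. Assume $\partial Q_{t+1}^{(\boldsymbol\delta')}\succeq\partial Q_{t+1}^{(\boldsymbol\delta)}$ pointwise. In terms of right derivatives, this means $h:=Q_{t+1}^{(\boldsymbol\delta')}-Q_{t+1}^{(\boldsymbol\delta)}$ is nondecreasing on $[\underline V,\overline V]$. Then
\[
f'(r)-f(r)=h(v_{t-1}+y_t-r)
\]
is nonincreasing in $r$, where $f$ and $f'$ are the stage-$t$ objectives in $r$ under $\boldsymbol\delta$ and $\boldsymbol\delta'$. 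By Topkis' theorem, applied with the index set $\{\boldsymbol\delta,\boldsymbol\delta'\}$ ordered by $\boldsymbol\delta<\boldsymbol\delta'$, the extreme minimizers satisfy $r'^*\ge r^*$, and hence $u'^*\ge u^*$ as in the base step.

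For (i), the subgradient of $Q_t$ at $v_{t-1}$ equals the multiplier $-\lambda_t$. By the envelope/optimality condition, this is an element of $\partial Q_{t+1}(v_t^*)$, intersected with $-\partial\phi_t(r^*)$ after accounting for the storage-bound multipliers. Combining $v_t'^*\le v_t^*$ (see below), the ordering of $\partial Q_{t+1}$, and the monotonicity of $\partial\phi_t$ in $r$ yields $\partial Q_t^{(\boldsymbol\delta')}(v_{t-1})\succeq\partial Q_t^{(\boldsymbol\delta)}(v_{t-1})$. A cleaner alternative I would try first is to note that, for the univariate convex value function $Q_t(v)=\min_r f(r;v)$, the endpoints of $\partial Q_t$ are the one-sided derivatives. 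Their difference across $\boldsymbol\delta,\boldsymbol\delta'$ can then be bounded directly by the monotonicity of $h$.

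\emph{Storage, (iii).} For $t<t_0$, $v_t^*=v_{t-1}+y_t-r_t^*$ with a common $v_{t-1}$, so $r'^*\ge r^*$ immediately gives $v_t'^*\le v_t^*$. At $t=t_0$, the storage that materializes is computed from the realized inflow $y_{t_0}$, namely $v_{t_0}=v_{t_0-1}+y_{t_0}-r_{t_0}^*$, because $\eta$ is a forecast error and not water. Monotonicity of the extreme releases then gives (iii) in this sense. I expect this step to be the main obstacle. The internal model variable $v_{t_0-1}+y_{t_0}+\eta-r^*$ is nondecreasing in $x$, so it is \emph{not} ordered as in (iii). The proof must therefore make explicit that the storage in (iii) is the realized, implemented state, consistent with the paper's rolling implementation described at the start of this section.
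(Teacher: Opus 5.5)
Your outline follows the paper's proof. At the base you use the shift identity $Q_{t_0}^{(\boldsymbol\delta')}(v)=Q_{t_0}^{(\boldsymbol\delta)}(v+\eta)$. For $t<t_0$ you order the releases from the nondecreasing difference $h$; your two-point Topkis argument is what the paper does by adding the two optimality inequalities.

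The gap is in carrying (i) from stage $t+1$ to stage $t$, which is what keeps the induction going. As you phrase it, the ingredients point in opposite directions. Suppose $v_t^{*(\boldsymbol\delta')}<v_t^{*(\boldsymbol\delta)}$. The hypothesis puts $\partial Q_{t+1}^{(\boldsymbol\delta')}(v_t^{*(\boldsymbol\delta')})$ above $\partial Q_{t+1}^{(\boldsymbol\delta)}(v_t^{*(\boldsymbol\delta')})$. But monotonicity of $\partial Q_{t+1}^{(\boldsymbol\delta)}$ puts the latter below $\partial Q_{t+1}^{(\boldsymbol\delta)}(v_t^{*(\boldsymbol\delta)})$, so the storage-side sets cannot be compared. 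Moreover, ``is an element of'' gives only an inclusion, which cannot order both endpoints of $\partial Q_t$.

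You need two things. (a) The exact identity $\partial_{v_{t-1}}Q_t^{(i)}(v_{t-1})=\partial\phi_t(r_t^{*(i)})\cap D^{(i)}$, proved in both directions from strong duality and independent of the primal optimum chosen. Here $D^{(i)}$ is $\partial Q_{t+1}^{(i)}(v_t^{*(i)})$, enlarged to a half-line when $v_t^{*(i)}$ sits at a storage bound. Note that stationarity in $r_t$ gives $-\lambda_t\in\partial\phi_t(r_t^*)$, so your $-\partial\phi_t$ has the wrong sign. (b) A case split:
\begin{itemize}
\item If $r_t^{*(\boldsymbol\delta')}>r_t^{*(\boldsymbol\delta)}$, then $\min\partial\phi_t(r_t^{*(\boldsymbol\delta')})\ge\max\partial\phi_t(r_t^{*(\boldsymbol\delta)})$. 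The release factor alone separates the two nonempty intersections, and the storage-side sets never need comparing.
\item If the releases coincide, so do the post-states and the release factor. The hypothesis at that single point then orders both endpoints of the intersection, in the interior, lower-bound, and upper-bound cases alike.
\end{itemize}

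Your ``cleaner alternative'' has a related hole: monotonicity of $h$ alone is not enough, and you also need convexity of $\phi_t$. With that, an exchange argument works without KKT. Write $Q_t$ as a function of $x=v_{t-1}+y_t$ and take $x_1<x_2$. Let $v_2'$ be optimal for $\boldsymbol\delta'$ at $x_2$ and $v_1$ optimal for $\boldsymbol\delta$ at $x_1$.
\begin{itemize}
\item If $v_1\le v_2'$, use $v_1$ under $\boldsymbol\delta'$ at $x_1$ and $v_2'$ under $\boldsymbol\delta$ at $x_2$. The total cost changes by $h(v_1)-h(v_2')\le 0$.
\item If $v_2'<v_1$, use $v_2'$ under $\boldsymbol\delta'$ at $x_1$ and $v_1$ under $\boldsymbol\delta$ at $x_2$; both are feasible. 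The cost-to-go terms are unchanged, and $\phi_t(x_1-v_2')+\phi_t(x_2-v_1)\le\phi_t(x_2-v_2')+\phi_t(x_1-v_1)$, since both pairs have the same sum and the second brackets the first.
\end{itemize}
Hence $Q_t^{(\boldsymbol\delta')}(x_1)+Q_t^{(\boldsymbol\delta)}(x_2)\le Q_t^{(\boldsymbol\delta')}(x_2)+Q_t^{(\boldsymbol\delta)}(x_1)$, so the difference is nondecreasing, which is (i) at stage $t$. In the second case, convexity of $\phi_t$ does the job that the separation of $\partial\phi_t$ does in the KKT route.

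Your diagnosis of (iii) at $t=t_0$ is correct, and the paper's proof passes over it. The paper obtains $v_{t_0}^{*(\boldsymbol\delta')}\le v_{t_0}^{*(\boldsymbol\delta)}$ ``from the water balance'', which is valid only with the unbiased inflow $y_{t_0}$. Inside the $\boldsymbol\delta'$ model, the post-state is the $\boldsymbol\delta$ post-state at the shifted pre-state, hence weakly larger. For example, if the optimal release is $d_{t_0}$ in both problems and storage stays interior, then $v_{t_0}^{*(\boldsymbol\delta')}=v_{t_0}^{*(\boldsymbol\delta)}+\eta$. So (iii) holds as stated for $t<t_0$, and at $t_0$ only in your implemented-state reading. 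Theorem~\ref{theo:bias_increase_discharge} uses only $t=1<t_0$, so nothing downstream depends on it.
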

\noindent Note that this inequality is valid for a given period $t$ and initial state $v_{t-1}$. Therefore, inequalities \eqref{eq:prop_u_ordering} and \eqref{eq:prop_v_ordering} should be read respectively as $u_t^{*(\boldsymbol{\delta}')}(v_{t-1})\;\geq\;u_t^{*(\boldsymbol{\delta})}(v_{t-1})$ and $v_t^{*(\boldsymbol{\delta}')}(v_{t-1})\;\leq\;v_t^{*(\boldsymbol{\delta})}(v_{t-1})$. To keep the notation light, we omit the argument $v_{t-1}$ from all affected decision variables in what follows.

The complete mathematical proof is given in~\ref{appendix:proofs}. Broadly, the argument therein is based on a backward induction on $t$, from $t=t_0 \geq 2$ down to $t=1$. At the base case $t=t_0$, the bias enters the water balance as additional inflow at exactly that period; substituting in the recursion shows that the $\boldsymbol{\delta}'$ problem at state $v_{t_0-1}$ coincides with the $\boldsymbol{\delta}$ problem evaluated at the shifted state $v_{t_0-1}+\eta$. Convexity of $Q_{t_0}^{(\boldsymbol{\delta})}$ then translates this shift identity into the endpoint-subgradient ordering~\eqref{eq:prop_subgrad_ordering} at $t=t_0$. The induction step combines two ingredients. First, we use the fact that the difference between $Q^{(\boldsymbol{\delta}')}_{t+1}$ and $Q^{(\boldsymbol{\delta})}_{t+1}$ is non-decreasing in $v_t$ to deliver the hydro-discharge ordering \eqref{eq:prop_u_ordering}: under $\delta'$, an optimal post-state $v_t$ can be selected no greater than under $\delta$, hence the corresponding total water release $r_t$ is no smaller, and the inner subproblem~\eqref{eq:phi_definition_obj}--\eqref{eq:phi_definition_turbine_bound} allocates the increment to $u_t$ before spillage. Second, a KKT envelope-theorem case-split on the storage-bound active set transmits the subgradient ordering from stage $t+1$ to stage $t$ showing that $\partial_v Q_t^{(\boldsymbol{\delta}')}(v_{t-1})
\;\succeq\;
\partial_v Q_t^{(\boldsymbol{\delta})}(v_{t-1})$. This endpoint ordering implies that the difference between $Q_t^{(\boldsymbol{\delta}')}$ and $Q_t^{(\boldsymbol{\delta})}$ is non-decreasing in $v_{t-1}$, which closes the induction step and provides the monotonicity property needed to repeat the same argument one period earlier.


Having established Proposition \ref{prop:bias_reduces_water_value}, we are now ready to state our main result.

\begin{theorem}[Biased forecasts weakly increase first-stage hydro discharge]
\label{theo:bias_increase_discharge}
For any nonnegative forecast bias vector $\boldsymbol{\delta}=(\delta_2,\dots,\delta_T)\geq\boldsymbol{0}$ (with $\delta_1=0$ by convention), among the optimal solutions for the first stage problem, the discharges form a closed interval; taking $u_1^{*}$ to be its greatest or lowest element, we have:
\begin{align}
u_1^{*(\boldsymbol{\delta})}\;\geq\;u_1^{*(\boldsymbol{0})}.
\end{align}
\end{theorem}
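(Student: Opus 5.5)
We obtain Theorem~\ref{theo:bias_increase_discharge} by adding the bias one period at a time, starting from the last period and working backward, and applying Proposition~\ref{prop:bias_reduces_water_value}(ii) with $t=1$ at each step. Concretely, we define the chain of bias vectors
\begin{align}
\boldsymbol{\delta}^{(T+1)}:=\boldsymbol{0},\qquad \boldsymbol{\delta}^{(k)}:=\boldsymbol{\delta}^{(k+1)}+\delta_k\,\boldsymbol{e}_k,\quad k=T,T-1,\dots,2,
\end{align}
so that $\boldsymbol{\delta}^{(k)}=(0,\dots,0,\delta_k,\dots,\delta_T)$ and $\boldsymbol{\delta}^{(2)}=\boldsymbol{\delta}$. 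Each consecutive pair $(\boldsymbol{\delta}^{(k+1)},\boldsymbol{\delta}^{(k)})$ has the form required by Proposition~\ref{prop:bias_reduces_water_value}. The base vector $\boldsymbol{\delta}^{(k+1)}$ vanishes in periods $1,\dots,k$ and is nonnegative afterwards. The increment is $\eta=\delta_k\ge 0$ placed at $t_0=k\in\{2,\dots,T\}$.

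For each $k$ we invoke part (ii) of the proposition with $t=1\le t_0=k$, at the fixed initial storage $v_0$. This gives
\begin{align}
u_1^{*(\boldsymbol{\delta}^{(k)})}\ \ge\ u_1^{*(\boldsymbol{\delta}^{(k+1)})},
\end{align}
where $u_1^{*}$ is consistently the greatest element of the interval of optimal first-stage discharges, or consistently the lowest. The consistency matters because the proposition orders the greatest elements against each other and the lowest elements against each other. Using the same selection rule at every step therefore keeps the inequalities chainable. Chaining from $k=T$ down to $k=2$ yields $u_1^{*(\boldsymbol{\delta})}=u_1^{*(\boldsymbol{\delta}^{(2)})}\ge\cdots\ge u_1^{*(\boldsymbol{\delta}^{(T+1)})}=u_1^{*(\boldsymbol{0})}$, which is the claim.

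Two routine points remain to be checked.

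\textbf{Interval of optimal discharges.} The set of optimal first-stage discharges must be a nonempty closed interval. The stage problem \eqref{eq:reduced_stage_problem_obj}--\eqref{eq:reduced_stage_problem_storage_bounds} is a convex, piecewise-linear problem over a compact feasible set, assuming feasibility at $v_0$. Its optimal $(r_1,v_1)$ set is therefore a closed convex polytope. The induced set of optimal $u_1$ values, obtained through the inner LP \eqref{eq:phi_definition_obj}--\eqref{eq:phi_definition_turbine_bound}, is then a closed interval by linear-programming parametric arguments. This is the same fact already asserted in Proposition~\ref{prop:bias_reduces_water_value}(ii).

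\textbf{Step with the most care required.} The step that needs attention is confirming that, once the later biases $\delta_{k+1},\dots,\delta_T$ are already present, the proposition really applies with a base vector whose earlier entries are zero. This is exactly the ``regardless of any existing bias from $t_0+1$ to $T$'' clause of the proposition, so the backward ordering of the chain is essential. A forward ordering would violate the hypothesis that periods $1,\dots,t_0$ of $\boldsymbol{\delta}$ are zero. The boundary case $t_0=T$, where the base vector is $\boldsymbol{0}$, is covered directly by the proposition. Terms with $\delta_k=0$ are trivial equalities.
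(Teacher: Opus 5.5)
Your proposal is correct and matches the paper's proof: it builds the same backward chain $\boldsymbol{\delta}^{(T+1)}=\boldsymbol{0}$, $\boldsymbol{\delta}^{(k)}=\boldsymbol{\delta}^{(k+1)}+\delta_k\boldsymbol{e}_k$, applies Proposition~\ref{prop:bias_reduces_water_value}(ii) at $t=1$ to each consecutive pair, and chains the resulting inequalities. Your extra remarks, on using the same selection (greatest or lowest) throughout and on why the optimal discharges form a closed interval, make explicit points the paper leaves implicit.
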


The complete mathematical proof is given in~\ref{appendix:proofs}. However, the main idea therein is to build a sequence of bias vectors that adds the components of $\boldsymbol{\delta}$ one at a time, from period $T$ backward to period $2$. Each consecutive pair satisfies the hypothesis of Proposition~\ref{prop:bias_reduces_water_value}; chaining part~(ii) at $t=1$ across the sequence yields the claim. 

A direct consequence of Theorem \ref{theo:bias_increase_discharge} is the following: since the end-of-period storage satisfies $v_1 = v_0 + y_1 - u_1 - s_1$ (with $\delta_1 = 0$), the biased policy yields $v_1^{*(\boldsymbol{\delta})} \leq v_1^{*(\boldsymbol{0})}$, and total first-stage thermal dispatch satisfies $\boldsymbol{1}^\top \boldsymbol{g}_1^{*(\boldsymbol{\delta})} \leq \boldsymbol{1}^\top \boldsymbol{g}_1^{*(\boldsymbol{0})}$ by the energy balance~\eqref{eq:energy_balance_hypothesis}; by the merit-order optimality of the stage-1 subproblem, the reduction displaces the most expensive units first. Repeated across periods in a rolling-horizon implementation, the conditional over-discharge produces a sequence of storage levels systematically weakly below the unbiased optimal trajectory.

\paragraph{\textbf{Remark 1:} Spot prices and the water value}
The spot price at stage $t$, denoted $\pi_t$, is the dual variable of the energy balance constraint~\eqref{eq:phi_definition_balance} of the inner subproblem~\eqref{eq:phi_definition_obj}--\eqref{eq:phi_definition_turbine_bound}. By linear-programming duality, $\pi_t$ measures the marginal increase in dispatch cost from a one-unit increase in demand $d_t$ and equals, in the typical case, the marginal cost of the most expensive thermal unit dispatched in merit order.

\medskip
\noindent Reading the stage-$t$ problem as a function of $u_t$ by eliminating $v_t$ via the water balance~\eqref{eq:reduced_stage_problem_balance}, the KKT optimality condition with respect to $u_t$ at an interior optimum $u_t^*\in(0,\overline U)$ yields
\begin{align}
\pi_t \;=\; w_t,
\label{eq:price_equals_water_value}
\end{align}
where $w_t \in -\partial_v Q_{t+1}(v_t^*)$ is the water value. The water value plays the role of the marginal cost of hydropower in the merit order, exactly as if it were the unit's fuel cost. When hydro generation affected by the bias reaches its upper bound, then $\pi_t \ge w_t$. The reduction in water values may cause a merit-order displacement of expensive thermal generation, thereby pushing $\pi_t$ downward.

\medskip
\noindent It then follows from Theorem~\ref{theo:bias_increase_discharge} that an optimistic inflow forecast bias weakly reduces the spot price at every stage \emph{of the planning trajectory}. The reduction is artificial: it does not reflect any change in the system operative reality, but rather the price signal of a water valuation constructed from inflow forecasts that are not aligned with the true realized inflow process. In a market that uses these prices to settle dispatch, consumers receive a temporary discount paid for by depleted reservoirs and elevated reliability risk in subsequent periods. However, it is important to emphasize that, for periods $t>1$, this effect is valid only at the planning stage. As will be shown in the case studies, the trajectories resulting from lower reservoir levels may cause realized spot prices under the biased policy to significantly exceed, on average, those under the unbiased policy.

\paragraph{\textbf{Remark 2:} Insights on the effects of the bias on first-stage hydro discharges}
Three complementary insights reinforce and qualify the result of
Theorem~\ref{theo:bias_increase_discharge}. First, the backward induction in the proof of
Proposition~\ref{prop:bias_reduces_water_value} proceeds through the
merit-order channel: a bias at period $t_0$ shifts $\partial_v
Q_{t_0}^{(\boldsymbol{\delta})}$, which displaces the marginal thermal
unit at $t_0-1$, which in turn shifts $\partial_v
Q_{t_0-1}^{(\boldsymbol{\delta})}$, and so on backward in time. At each
backward step, the effect can be either preserved (when cumulative
water budgets remain binding at intermediate periods) or absorbed
(when intermediate storage is sufficient that $\partial_v Q$ is already
near zero). Consequently, the contribution of a bias at a distant
period $t_0$ to the first-stage water value $\partial_v
Q_2^{(\boldsymbol{\delta})}(v_1)$ attenuates with $T-t_0$ when
intermediate periods have abundant water; conversely, when
intermediate periods are constrained, the bias propagates forward
without being dampened by interior margins of intermediate active sets.

\medskip
\noindent Second, the same mechanism implies that the effect identified 
in Theorem~\ref{theo:bias_increase_discharge} is \emph{heterogeneous across 
hydrological conditions}. In wet periods, $Q_t^{(\boldsymbol{\delta})}$ 
is approximately flat in $v_{t-1}$ over the relevant operating range 
because water is abundant and the marginal value of an additional unit 
is nearly zero; the gap $|\partial_v Q_t^{(\boldsymbol{\delta})} - 
\partial_v Q_t^{(\boldsymbol{0})}|$ is correspondingly small. In dry 
periods, $Q_t^{(\boldsymbol{\delta})}$ has steep negative slope (the 
marginal value of water is high), and the bias produces a substantial 
reduction in this slope. By Theorem~\ref{theo:bias_increase_discharge}, this larger 
water-value gap translates into a larger upward distortion in 
first-stage hydro discharge.

\medskip
\noindent Third, combining both observations yields a policy-relevant 
conclusion: the misallocation of water induced by optimistic forecast 
bias is most severe precisely when water is scarcest. In dry periods, 
the marginal value of water is high, the merit-order channel propagates
the bias without being dampened by intermediate active-set margins, and the sensitivity of the first-stage 
decision to the water value is greatest. It is under these conditions 
that an optimistic bias is most likely to produce reservoir drawdown 
trajectories that deviate significantly from the unbiased optimum, 
with the largest potential consequences for system reliability and 
thermal cost. This asymmetry suggests that forecast bias monitoring 
should be prioritized during dry seasons or drought episodes, precisely 
when the operational stakes are highest.

\paragraph{\textbf{Remark 3:} Conditional versus dynamic effects}
Theorem~\ref{theo:bias_increase_discharge} is a conditional statement: holding the
initial storage and the planning horizon fixed, an optimistic bias
induces $u_1^{*(\boldsymbol{\delta})} \geq u_1^{*(\boldsymbol{0})}$.
More generally, Step~1 of the induction, applied along the same one-component-at-a-time argument used in  Theorem~\ref{theo:bias_increase_discharge} at any shared state $v_{t-1}$, establishes that, for
every $t = 1,\dots,T$,
$u_t^{*(\boldsymbol{\delta})} \geq u_t^{*(\boldsymbol{0})}$
at every shared state $v_{t-1}$. In a rolling-horizon implementation, however, the over-discharge at $t=1$ depletes storage, so the biased policy enters period $t=2$ from a lower state $v_1^{*(\boldsymbol{\delta})} \leq v_1^{*(\boldsymbol{0})}$. The conditional comparison at $t=2$ no longer applies between the two trajectories, since they are evaluated at different states. After some periods, the biased policy may, for the same period $t$, dispatch \emph{less} hydro and more thermal in absolute terms, because its deteriorated reservoir state pushes the conditional water value in 
$|\partial_v Q_t^{(\boldsymbol{\delta})}(v_{t-1}^{*(\boldsymbol{\delta})})|$ above the value visited by the unbiased policy. This is the expected dynamic feedback. On a dry horizon, the biased trajectory therefore incurs higher integrated thermal cost (the biased decisions are
suboptimal under the true cost-to-go) and faces an increased risk of storage exhaustion before the next wet season. The conditional result of Theorem~\ref{theo:bias_increase_discharge} is therefore a necessary piece for understanding the tendency to deplete reservoirs, especially in dry periods.

It is worth mentioning that the assumptions made in this section (e.g., deterministic inflows, a single reservoir, and no network constraints) can be deemed unrealistic for real power systems. However, it is important to emphasize that most long-term hydrothermal planning models do rely on many simplifications; for example, the official Brazilian software that computes water values still does not consider network constraints explicitly. Notwithstanding, the objective of this section is not to affirm that this effect will always be present, but rather to demonstrate the forces acting on water usage and reservoir levels in an ideal case. The insights drawn here are then used to support the empirical evidence presented in the next sections.

\subsection{Extension to the stochastic setting}

The deterministic argument can be extended directly to the stagewise-independent stochastic multistage setting (see \cite{pereira1991multi}). Conditional on each inflow innovation, the stage problem is deterministic and its cost-to-go is convex, so the argument of Proposition~\ref{prop:bias_reduces_water_value} applies at each stage; taking conditional expectations stage by stage carries the ordering~$\succeq$ to the expected cost-to-go.

\medskip

\noindent In practice, however, the stochastic model used for the inflows follows a periodic autoregressive form, i.e.,
\begin{align}
y_t = \boldsymbol{M}_t \boldsymbol{y}_{[t-1]}
      + \boldsymbol{N}_t \boldsymbol{\epsilon}_t,
\end{align}
where $\boldsymbol{y}_{[t-1]}=(y_1,\dots,y_{t-1})$ denotes the inflow history up to stage $t-1$, and $\boldsymbol{\epsilon}_t$ is a random
innovation with known distribution \citep{brigatto2025assessing} and zero mean (in the unbiased case). In this setting, a bias in a given period $t$ may affect subsequent periods in different ways, depending on the sign of matrix $\boldsymbol{M}_t$. Thus, some additional assumptions would be needed to ensure a resulting positive bias effect, such as the non-negativeness of this matrix, or on the composite effect of the bias on $t$ and its propagation on subsequent cost-to-go functions. Because the objective of this section is to shed light on the main effect of a positive bias on the operation of a hydrothermal power system, rather than to prove that this effect is always present in every possible stochastic setting, we restrict the analytical proof to the canonical stagewise-independent framework of \cite{pereira1991multi}.\footnote{This setting allows the effect of the bias on water values, dispatch decisions, and cost-to-go functions to be isolated and understood in a transparent and pedagogical manner. The more realistic periodic autoregressive model introduces propagation effects that depend on the structure of the matrices $\boldsymbol{M}_t$, potentially amplifying, attenuating, or even reversing the direct impact of the bias. Therefore, rather than imposing additional assumptions to recover a general theoretical result, we use the empirical and controlled case studies presented in the following sections to assess the magnitude and direction of the bias effect under the realistic stochastic processes currently adopted in practice.} 

In this setting, a forecast bias is modeled as a deterministic shift $\delta_t\ge 0$ added to the inflow forecast at each future stage $t\ge 2$. The stochastic cost-to-go function is then
\begin{align}
Q_t^{(\boldsymbol{\delta})}(v_{t-1})
=
\mathbb E_{\epsilon_t}
\left[
q_t^{(\boldsymbol{\delta})}
(v_{t-1},\hat{y}_t +  \epsilon_t)
\right],
\label{eq:stochastic_ctg}
\end{align}
where \(q_t^{(\boldsymbol{\delta})}\) denotes the stage-\(t\) recourse
problem conditional on \(\epsilon_t\) and $\hat{y}_t$ is the expected value of $y_t$.

\begin{theorem}[Stochastic extension]
\label{theo:theorem_stochastic}
For any forecast bias vector $\boldsymbol{\delta}=(\delta_2,\dots,\delta_T)\ge\boldsymbol 0$, and provided the inflow process is almost surely nonnegative and the standing assumptions of \ref{appendix:proofs} hold,
\begin{align}
&\partial_v Q_t^{(\boldsymbol{\delta})}
(v_{t-1})
\;\succeq\;
\partial_v Q_t^{(\boldsymbol 0)}
(v_{t-1}), 
\;\text{for all }(v_{t-1})\text{ in the support of the state process},\; t=2,\dots,T.
\end{align}
Consequently, taking $u_1^{*}$ to be the greatest or lowest optimal solution for $u_1$, we have:
\begin{align}
u_1^{*(\boldsymbol{\delta})}\;\ge\;u_1^{*(\boldsymbol 0)}.
\end{align}
\end{theorem}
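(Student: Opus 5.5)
I would mirror the deterministic argument exactly, replacing each deterministic stage problem by its scenario-wise counterpart and then averaging. The plan is first to establish a stochastic analogue of Proposition~\ref{prop:bias_reduces_water_value}, and then to chain it as in Theorem~\ref{theo:bias_increase_discharge}.

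\textbf{The stochastic marginal-bias step.} Fix $t_0$, $\boldsymbol{\delta}$ with zeros up to $t_0$, and $\boldsymbol{\delta}'=\boldsymbol{\delta}+\eta\boldsymbol e_{t_0}$. Stagewise independence means $Q_{t+1}^{(\boldsymbol{\delta})}$ depends only on $v_t$. For each realization $\epsilon_t$, the recourse problem $q_t^{(\boldsymbol{\delta})}(v_{t-1},\hat y_t+\epsilon_t)$ is then exactly the deterministic stage problem \eqref{eq:reduced_stage_problem_obj}--\eqref{eq:reduced_stage_problem_storage_bounds} with inflow $\hat y_t+\epsilon_t+\delta_t$ and continuation $Q_{t+1}^{(\boldsymbol{\delta})}$. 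The induction then runs backward from $t_0$:
\begin{itemize}
\item[] \emph{Base case.} At $t=t_0$, the shift identity $q_{t_0}^{(\boldsymbol{\delta}')}(v,\cdot)=q_{t_0}^{(\boldsymbol{\delta})}(v+\eta,\cdot)$ holds for every $\epsilon_{t_0}$. Taking expectations gives $Q_{t_0}^{(\boldsymbol{\delta}')}(v)=Q_{t_0}^{(\boldsymbol{\delta})}(v+\eta)$, and convexity gives the endpoint ordering.
\item[] \emph{Inductive step.} Assume $Q_{t+1}^{(\boldsymbol{\delta}')}-Q_{t+1}^{(\boldsymbol{\delta})}$ is non-decreasing, equivalently the endpoint ordering of subdifferentials. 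The deterministic induction step then applies verbatim for every fixed $\epsilon_t$. It yields the scenario-wise ordering $\partial_v q_t^{(\boldsymbol{\delta}')}(v,\epsilon_t)\succeq\partial_v q_t^{(\boldsymbol{\delta})}(v,\epsilon_t)$, so $q_t^{(\boldsymbol{\delta}')}-q_t^{(\boldsymbol{\delta})}$ is non-decreasing in $v$ for every $\epsilon_t$.
\item[] \emph{Passing to the expectation.} Non-decreasing differences are preserved under expectation, so $Q_t^{(\boldsymbol{\delta}')}-Q_t^{(\boldsymbol{\delta})}$ is non-decreasing. That is equivalent to the endpoint ordering of the subdifferentials of the convex functions $Q_t$.
\end{itemize}

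Working at the level of monotone differences, rather than with subdifferentials directly, avoids having to state a measurable-selection version of $\partial\mathbb E=\mathbb E\partial$. I expect this exchange to be the main technical obstacle. If the equivalence is needed explicitly, I would justify it using the standing assumptions of~\ref{appendix:proofs}: finite-valued convex $q_t$ on $[\underline V,\overline V]$ and integrable subgradients. Under these, $\partial Q_t=\mathbb E[\partial q_t]$, and one can compare left and right derivatives as monotone limits of difference quotients via monotone convergence.

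\textbf{Feasibility requirement.} Almost-sure nonnegativity of the inflows, together with free spill, is used to guarantee that every scenario subproblem is feasible and finite for all states in the support. Without it, storage could be forced below $\underline V$, making $q_t=+\infty$ and breaking the shift identity. Restricting to states in the support of the state process is exactly what keeps all the terms finite.

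\textbf{Chaining and the first-stage conclusion.} Build the sequence $\boldsymbol{\delta}^{(T+1)}=\boldsymbol 0$, $\boldsymbol{\delta}^{(k)}=\boldsymbol{\delta}^{(k+1)}+\delta_k\boldsymbol e_k$ for $k=T,\dots,2$. Each consecutive pair satisfies the hypothesis of the stochastic marginal step. Chaining the monotone-difference property across the sequence (sums of non-decreasing functions are non-decreasing) gives $Q_t^{(\boldsymbol{\delta})}-Q_t^{(\boldsymbol 0)}$ non-decreasing for $t=2,\dots,T$, which is the claimed endpoint ordering.

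The first stage is deterministic given $v_0$ and $y_1$, since $\delta_1=0$. Applying part (ii) of Proposition~\ref{prop:bias_reduces_water_value} at $t=1$, with continuation $Q_2$, then gives $u_1^{*(\boldsymbol{\delta})}\ge u_1^{*(\boldsymbol 0)}$ for the greatest or lowest selection. The argument is the monotone-comparative-statics step: the optimal $v_1$ is weakly smaller, so the release is weakly larger and is allocated to turbining before spillage.
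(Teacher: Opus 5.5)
Your proposal follows the paper's proof essentially step for step: you condition on $\epsilon_t$, run the base case and induction step of Proposition~\ref{prop:bias_reduces_water_value} realization by realization, pass the non-decreasing difference through the expectation (avoiding any $\partial\mathbb E=\mathbb E\partial$ interchange, as the paper also does), and chain single-component additions; your telescoping of monotone differences followed by one application of Step~1 at $t=1$ is equivalent to the paper's chaining of orderings. The one place where ``verbatim'' is too quick is the existence of multipliers in Step~2. The deterministic proof gets $\Lambda^*(v_{t-1})\neq\emptyset$ from polyhedrality of $Q_{t+1}$, which is lost when $Q_{t+1}$ is an expectation over a non-finitely-supported innovation. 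The paper therefore uses the relative-interior constraint qualification whenever $v_{t-1}+\hat y_t+\epsilon_t+\delta_t>\underline V$, and handles the corner singleton $(r_t,v_t)=(0,\underline V)$ by one-sided limits. It also records the bound $0\le q_t\le\sum_{\tau\ge t}\boldsymbol c_\tau^\top\overline{\boldsymbol G}$ to justify the expectations; note that finiteness comes from $y_t\ge 0$ and free spill on all of $[\underline V,\infty)$, not from restricting to the support of the state process.
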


The complete mathematical proof is given in Appendix \ref{appendix:proofs}. Broadly, the argument extends the deterministic induction in Proposition~\ref{prop:bias_reduces_water_value} by conditioning on the innovation $\epsilon_t$. Conditional on a realization of $\epsilon_t$, the stage-$t$ recourse problem is deterministic and its cost-to-go is convex, so the induction step of Proposition~\ref{prop:bias_reduces_water_value} applies pathwise. Since expectation preserves non-decreasing differences, the ordering~$\succeq$ passes to the expected cost-to-go. Adding the bias components one at a time, as in Theorem~\ref{theo:bias_increase_discharge}, then gives the first-stage discharge ordering.

\paragraph{\textbf{Remark 4:} Structural incentive, stochastic masking, and crisis amplification}
The results above characterize the structural effect of forecast bias on the optimizer's incentives: at any given state, optimistic future inflow forecasts weakly reduce the perceived opportunity cost of stored water and shift the dispatch decision toward greater hydro utilization. This structural effect should not be read as implying that the biased policy dispatches more hydro along every realized rolling-horizon trajectory. Two distinct mechanisms separate the local incentive from realized outcomes.

\medskip
\noindent (i) \emph{Trajectory divergence.} The biased policy depletes storage more aggressively in early periods, so the trajectories under the biased and unbiased policies diverge over time and future decisions are evaluated at different realized states. After sufficient depletion, the biased trajectory may reach states in which hydro availability is so scarce that realized hydro dispatch falls below the unbiased trajectory, even though the biased policy remains systematically more aggressive in its valuation of water.

\medskip
\noindent (ii) \emph{Stochastic masking.} Realized operating costs depend on the realized inflow path. Wet realizations may temporarily offset the effects of overly aggressive water use, so along such paths the biased policy can appear ex post well calibrated and even outperform the unbiased policy that conserved water. This is not evidence that the bias was correct; it is the natural consequence of a wrong forecast happening to align with a favorable realization. The masking effect is asymmetric: by Remark~2, in wet realizations the bias effect is small, whereas in dry realizations the mechanisms of Remark~3 (early depletion, higher integrated thermal cost, risk of storage exhaustion) combine to produce sharp realized costs and high spot prices. A sequence of wet years can therefore validate empirically a forecast that is systematically optimistic, building operator and regulator confidence precisely until a dry realization makes the accumulated misallocation binding and irreversible \citep{brigatto2025assessing}.

\medskip
\noindent The monotonicity results above therefore isolate the fundamental mechanism through which forecast bias distorts dispatch: it systematically lowers the optimizer's perceived value of stored water and induces more aggressive hydro utilization whenever operationally feasible. The realized operational and economic consequences emerge from the interaction of this structural distortion with endogenous state evolution and stochastic inflow realizations. In practice, what matters is whether the cumulative effect of forecast bias can materialize in actual system operation, leading reservoir levels to fall below planned trajectories, thermal generation to be postponed relative to what later proves necessary, and corrective actions to become more frequent when realized inflows fail to match optimistic forecasts. In the next section, we discuss how traces of this cumulative effect can be identified in empirical data from the Brazilian system operation.

\section{Empirical evidence of biased look-ahead in the Brazilian power system operation planning} \label{sec:empirical_evidence}

We now turn from the analytical mechanism of Section \ref{sec:implications} to empirical evidence from the Brazilian power system. The purpose of this section is to examine whether the historical planning and operation data, from January 2014 to May 2026, display patterns consistent with the forecast-bias channel identified in the previous section. Since the Brazilian ISO has historically relied on the same official forecasting framework, there is no observed real-world counterpart in which operation was planned and implemented under an unbiased inflow-forecasting methodology. The evidence in this section should therefore be interpreted as diagnostic: it assesses whether forecast errors, planned reservoir trajectories, implemented dispatch, spot prices, and out-of-merit actions move in a way consistent with biased look-ahead. The controlled counterfactual comparison is developed in Section \ref{sec:case_study}. We begin with a brief overview of the Brazilian generation mix and storage structure, because the magnitude of the effects documented below depends on the scale of hydropower and reservoir storage in the system. All data, processing scripts, and replication materials used in this Section are available in \cite{Brigatto2026BiasedInflowsData}.

In recent years, the Brazilian power system, like many other energy systems worldwide, has experienced increasing participation from wind, solar, and distributed generation. Nevertheless, when each source is considered separately, hydropower remains the country's largest source of electricity supply, as shown in Table~\ref{tab:brazil_power_2025}.

\begin{table}[H]
\centering
\caption{Brazilian Power System -- Installed Capacity, Average Generation, and
Capacity Factor by Source (2025).}
\label{tab:brazil_power_2025}
\begin{threeparttable}
\begin{tabular}{
  l
  S[table-format=3.2]
  S[table-format=2.2]
  S[table-format=2.1]
}
\toprule
\textbf{Source}
  & {\textbf{Installed}}
  & {\textbf{Avg.\ Generation}}
  & {\textbf{Capacity}} \\
  & {\textbf{Capacity (GW)}}
  & {\textbf{(GWh/h)}}
  & {\textbf{Factor (\%)}} \\
\midrule
Hydroelectric          & 110.16 & 45.12 & 41.0 \\
Thermal                &  50.98 & 11.79 & 23.1 \\
Wind                   &  34.75 & 12.91 & 37.2 \\
Solar PV               &  20.05 &  3.86 & 19.2 \\
Distributed generation &  46.06 &  6.71 & 14.6 \\
\midrule
\textbf{Total}         & 261.99 & 80.38 & 30.7 \\
\bottomrule
\end{tabular}
\begin{tablenotes}[flushleft]\footnotesize
  \item Average generation is computed as annual generation divided by 8760 hours. Capacity factors are computed by the authors as $\text{CF}=\bar{G}/\hat{C}$, where $\bar{G}$ is average generation in GWh/h and $\hat{C}$ is installed capacity in GW.
\end{tablenotes}
\end{threeparttable}
\end{table}

In addition, although battery storage is increasingly discussed in Brazilian planning studies, it is not yet a material centrally dispatched storage resource in the Brazilian power system. Therefore, the country still relies primarily on hydro reservoirs and thermal plants for centrally dispatched controllable energy. As of 2025, the maximum stored-energy capacity of the Brazilian hydro-reservoir system was approximately 292 avgGW, equivalent to more than five times the installed capacity of thermal plants and more than three times the average electricity demand in the country in 2025. Hence, the efficient management of water stored in reservoirs is crucial to ensuring supply adequacy.

In this framework, inflow forecasts become one of the most important inputs to the operation-planning models. However, as discussed in Section~\ref{sec:introduction}, recent evidence shows that the PARp-A model -- the official inflow-forecasting methodology used in Brazil -- has exhibited a systematic optimistic bias. In the terminology of Section~\ref{sec:implications}, positive forecast errors correspond to optimistic shifts in the future inflow process that enter the water balance in the cost-to-go recursion. In Brazilian long-term operation planning, natural inflows are commonly represented in energy units through the Natural Inflow Energy (NIE) metric. The NIE converts the water naturally arriving at the hydroelectric system into the amount of energy that this water could produce along the corresponding hydroelectric cascade (see \cite{brigatto2025assessing} and references therein). In other words, instead of treating inflows only in volumetric units, the planning model expresses them as an energy-equivalent input measured in avgGW. This representation has historically been central to the Brazilian SDDP-based planning framework.

Figures~\ref{fig:empirical_evidence_nie_se} and~\ref{fig:empirical_evidence_nie_ne}, reproduced from \cite{brigatto2025assessing}, compare forecast and observed NIE in percentage of the long-term mean for each month for the Southeast and Northeast subsystems, respectively. The continuous blue line with circular markers represents the observed values, while each lighter dashed line depicts the first 12-step-ahead point forecasts made from each period in the evaluation horizon.\footnote{It is important to note that each point forecast within the rolling-window evaluation horizon (2011--2024), as depicted in Figures~\ref{fig:empirical_evidence_nie_se} and~\ref{fig:empirical_evidence_nie_ne}, was generated by the official software estimated with the entire historical dataset available up to the time the forecast was made, i.e., from January 1931 up to a given period $t\in\{Jan/2011, ..., Sep/2024\}$. Moreover, the 12-step-ahead point forecasts were derived by averaging 2000 synthetically generated scenarios, each composed of 12 chronologically simulated paths as per \cite{brigatto2025assessing}. Thus, each forecast curve represents the first moment of the 12 conditional predictive distributions obtained for each period in the evaluation horizon over the subsequent 12 months.} A visual inspection already suggests a systematic overestimation of NIE, and therefore an optimistic assessment of future water availability, in both subsystems. This pattern is particularly pronounced in the Northeast subsystem. Figure~\ref{fig:empirical_evidence_bias} provides the sample average and the statistical confidence intervals for the forecast errors for each forecast step ahead ($k$ steps-ahead forecast bias). According to the official data from January 2011 to September 2024, since zero does not belong to the confidence interval, the optimistic bias is statistically significant for both the Southeast and Northeast subsystems. We refer the reader to \cite{brigatto2025assessing} for a detailed discussion and the calculation methodology.  

\begin{figure}[H]
    \centering

    \begin{minipage}{0.49\textwidth}
        \centering
        \includegraphics[
            width=\linewidth,
            trim={4.1cm 1.5cm 4.0cm 1.6cm},
            clip
        ]{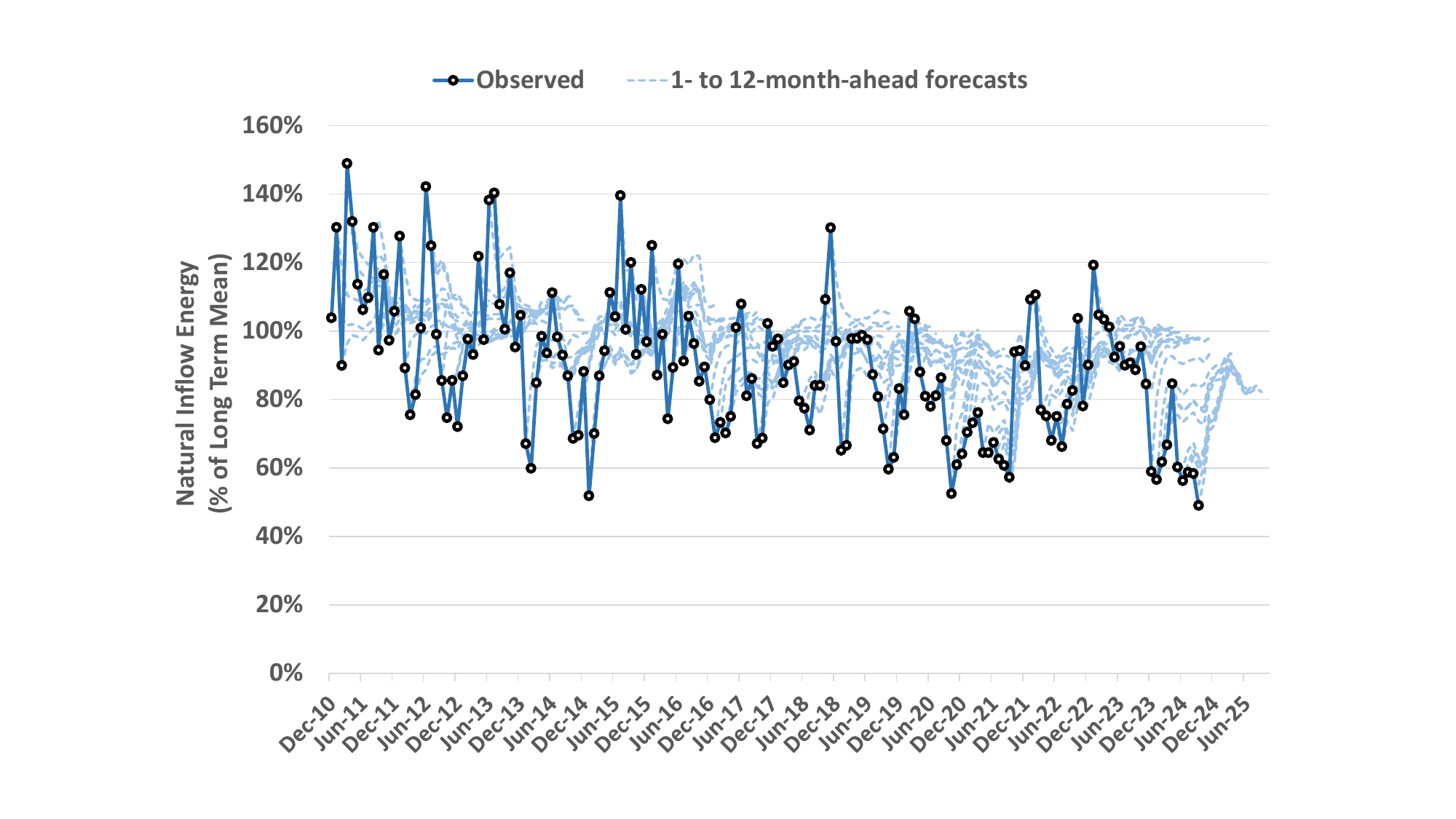}
        \captionof{figure}{PARp-A Forecasts and Observations of NIE for the Southeast subsystem in \% of the seasonal long-term mean.} 
        \label{fig:empirical_evidence_nie_se}
    \end{minipage}
    \hfill
    \begin{minipage}{0.49\textwidth}
        \centering
        \includegraphics[
            width=\linewidth,
            trim={4.1cm 1.5cm 4.0cm 1.6cm},
            clip
        ]{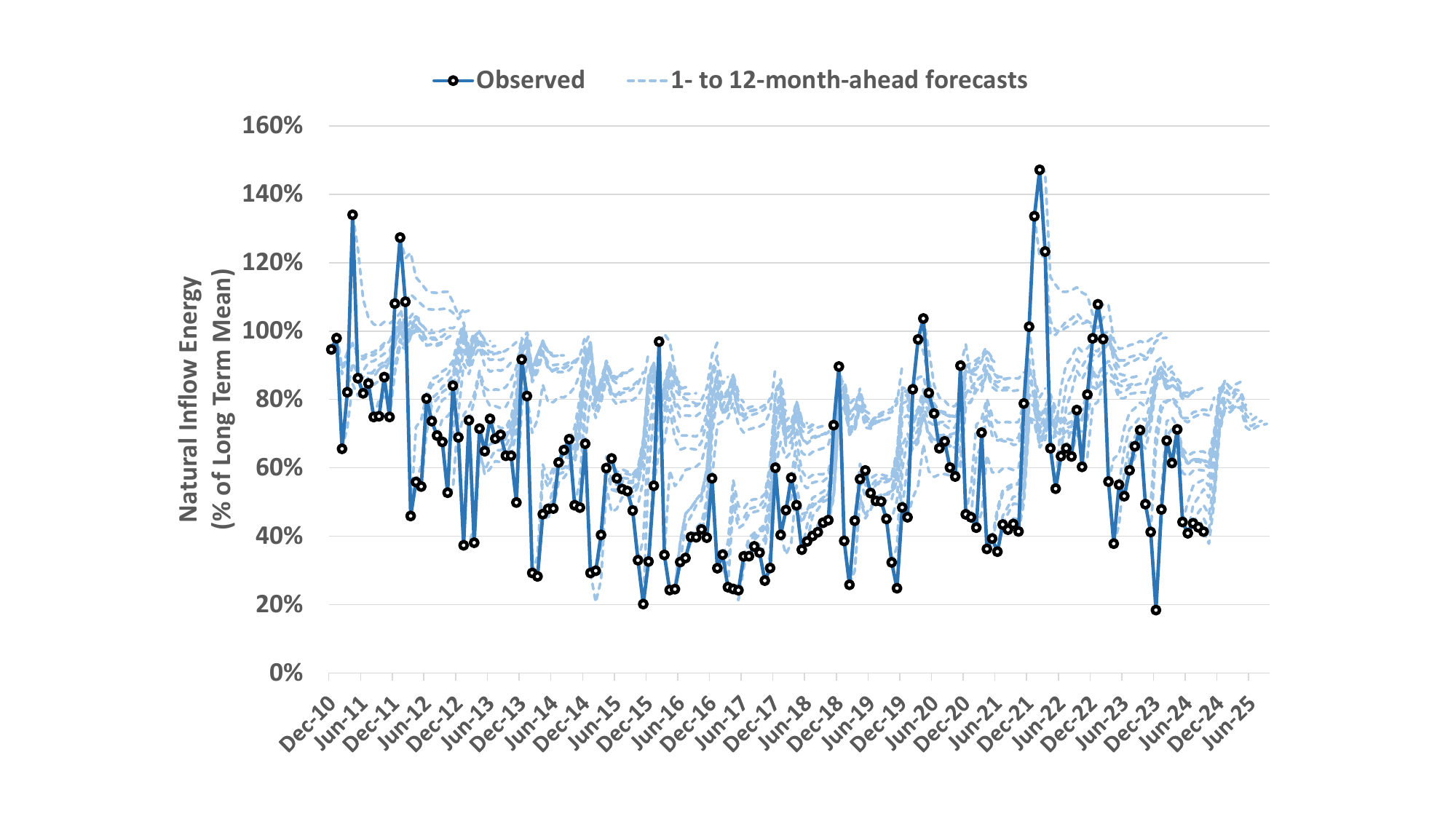}
        \captionof{figure}{PARp-A Forecasts and Observations of NIE for the Northeast subsystem in \% of the seasonal long-term mean.}
        \label{fig:empirical_evidence_nie_ne}
    \end{minipage}

\end{figure}

\begin{figure}[!htbp]
    \centering
    \includegraphics[scale = 0.35, trim={4.7cm 2.0cm 3.2cm 0.5cm},clip]{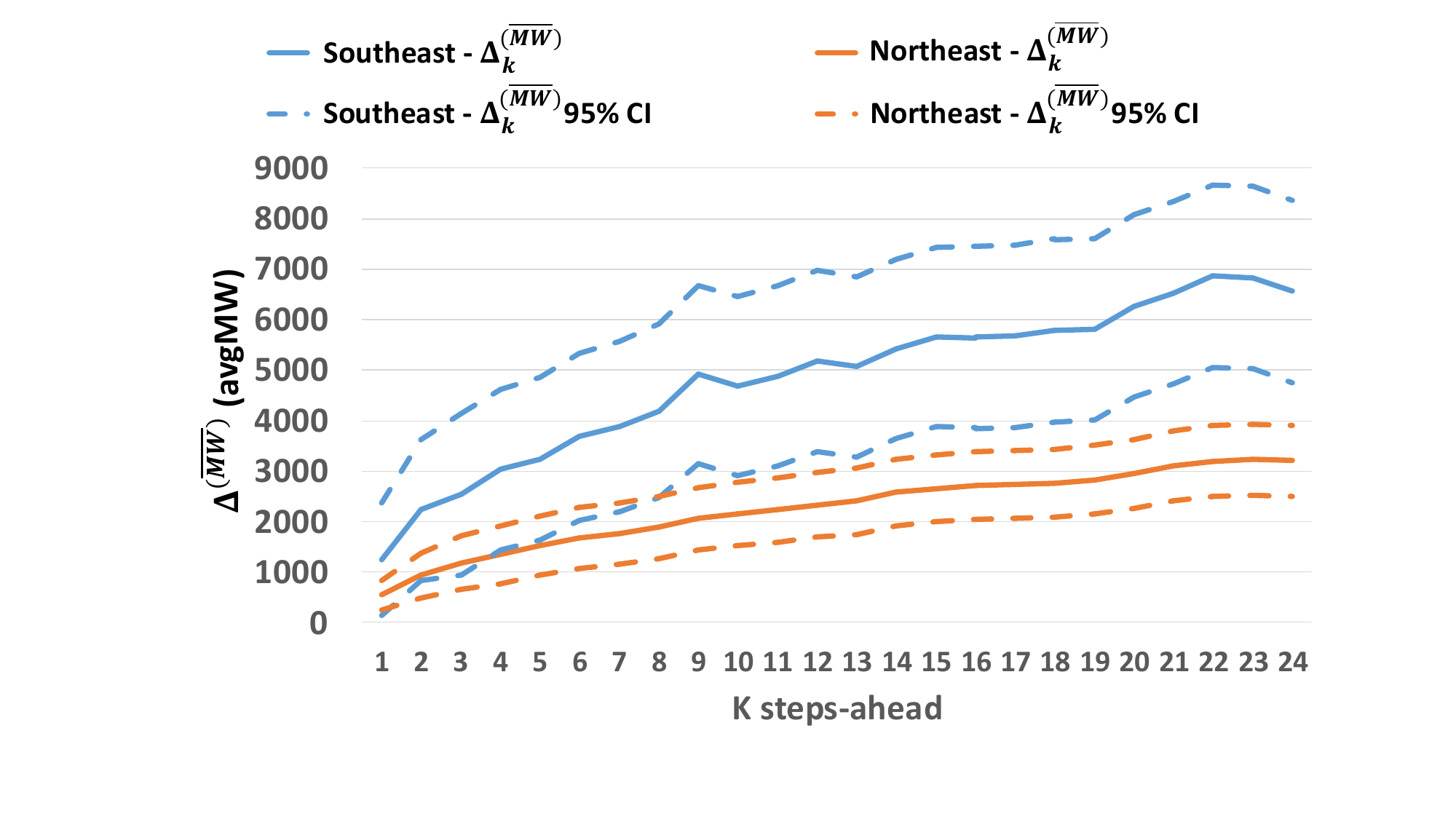}
    \caption{Southeast and Northeast NIE Forecasts bias in average MW and 95\%--confidence interval (2.5 and 97.5\% quantiles).}
    \label{fig:empirical_evidence_bias}
\end{figure}

Having established the existence of statistically significant forecast biases (see \cite{brigatto2025assessing}), we now present the original results and analysis of this paper. We begin by comparing planned and implemented trajectories for hydro generation, stored energy, thermal dispatch, and spot prices.

Figure \ref{fig:empirical_evidence_hydro_generation} compares implemented hydropower generation with the 1- to 12-month-ahead average planned values produced by the official SDDP implementation in Brazil. The black line with circle markers denotes implemented values, while the dashed blue lines denote planned values. The bars report 6-month-ahead forecast errors, computed as planned minus implemented generation; red bars therefore indicate periods in which the planning model projected more hydropower generation than was ultimately implemented, while green bars indicate the opposite.

For most of the analyzed period, implemented hydropower generation lies below the planned trajectories. This pattern does not, by itself, allow us to readily attribute the deviations between the two trajectories to forecast bias. Nevertheless, the pattern is consistent with a planning policy that anticipated higher water availability than what ultimately materialized in operation.
Figure~\ref{fig:empirical_evidence_stored_energy} presents the corresponding comparison for stored energy across all reservoirs in the Brazilian power system. The pattern is similar to that observed for hydropower generation: for much of the analyzed period, the planning model projected reservoir trajectories above the levels that were ultimately implemented, which can be related to the abundance of water perceived by the planning model that fails to materialize in practice. Noticeably, the six-month-ahead forecast error reaches values close to 100 avgGW in some periods. This is a very large discrepancy when compared with the Brazilian system's maximum stored-energy capacity of approximately 292 avgGW, since a single six-month-ahead planning error can amount to roughly one third of the total energy that can be stored in the reservoir system. 

\begin{figure}[H]
    \centering

    \begin{minipage}{0.49\textwidth}
        \centering
        \includegraphics[
            width=\linewidth,
            trim={4.1cm 1.5cm 4.0cm 1.6cm},
            clip
        ]{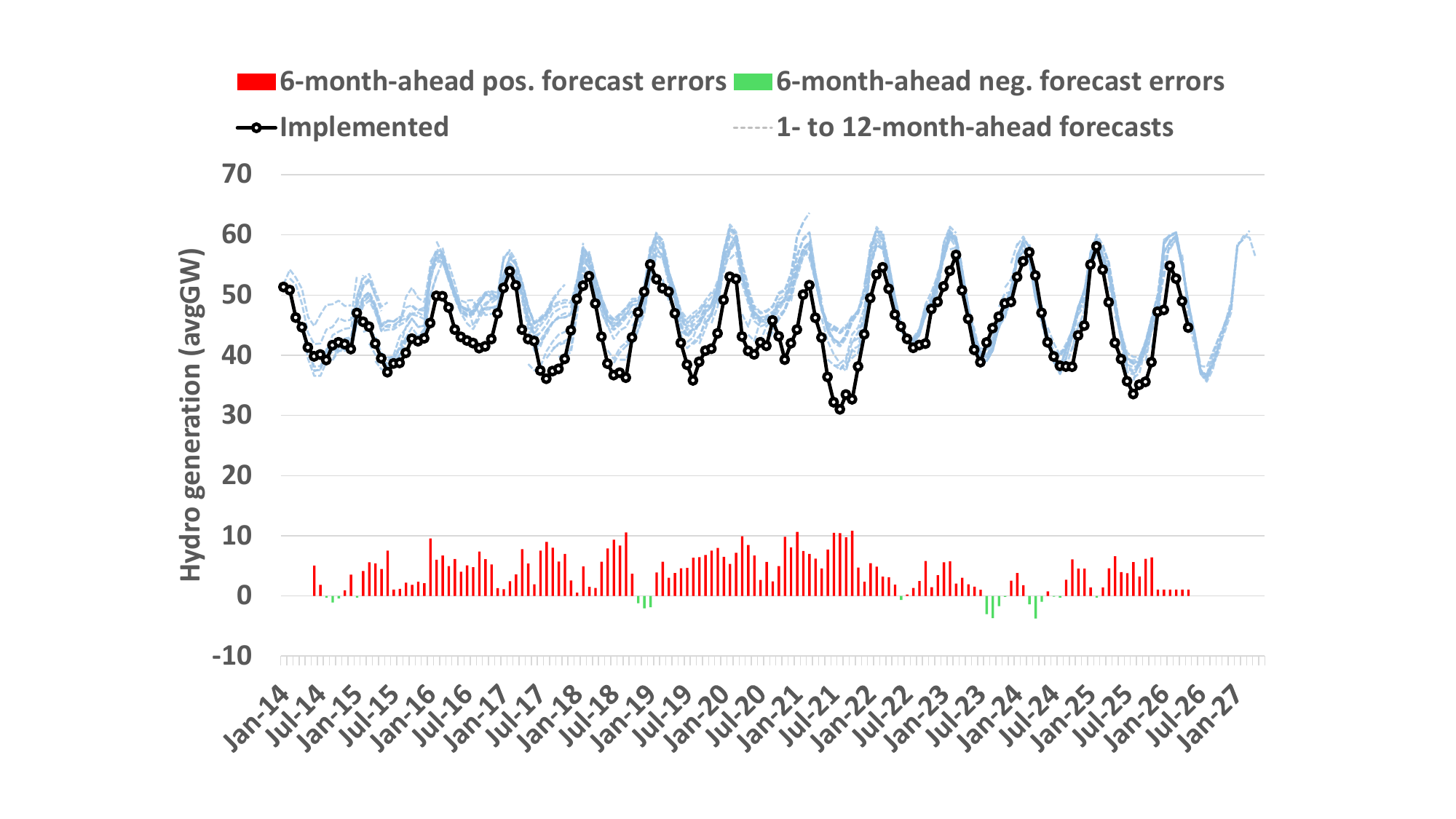}
        \captionof{figure}{Planned and implemented levels of hydro generation for the Brazilian energy system in avgGW.}
        \label{fig:empirical_evidence_hydro_generation}
    \end{minipage}
    \hfill
    \begin{minipage}{0.49\textwidth}
        \centering
        \includegraphics[
            width=\linewidth,
            trim={4.1cm 1.5cm 4.0cm 1.6cm},
            clip
        ]{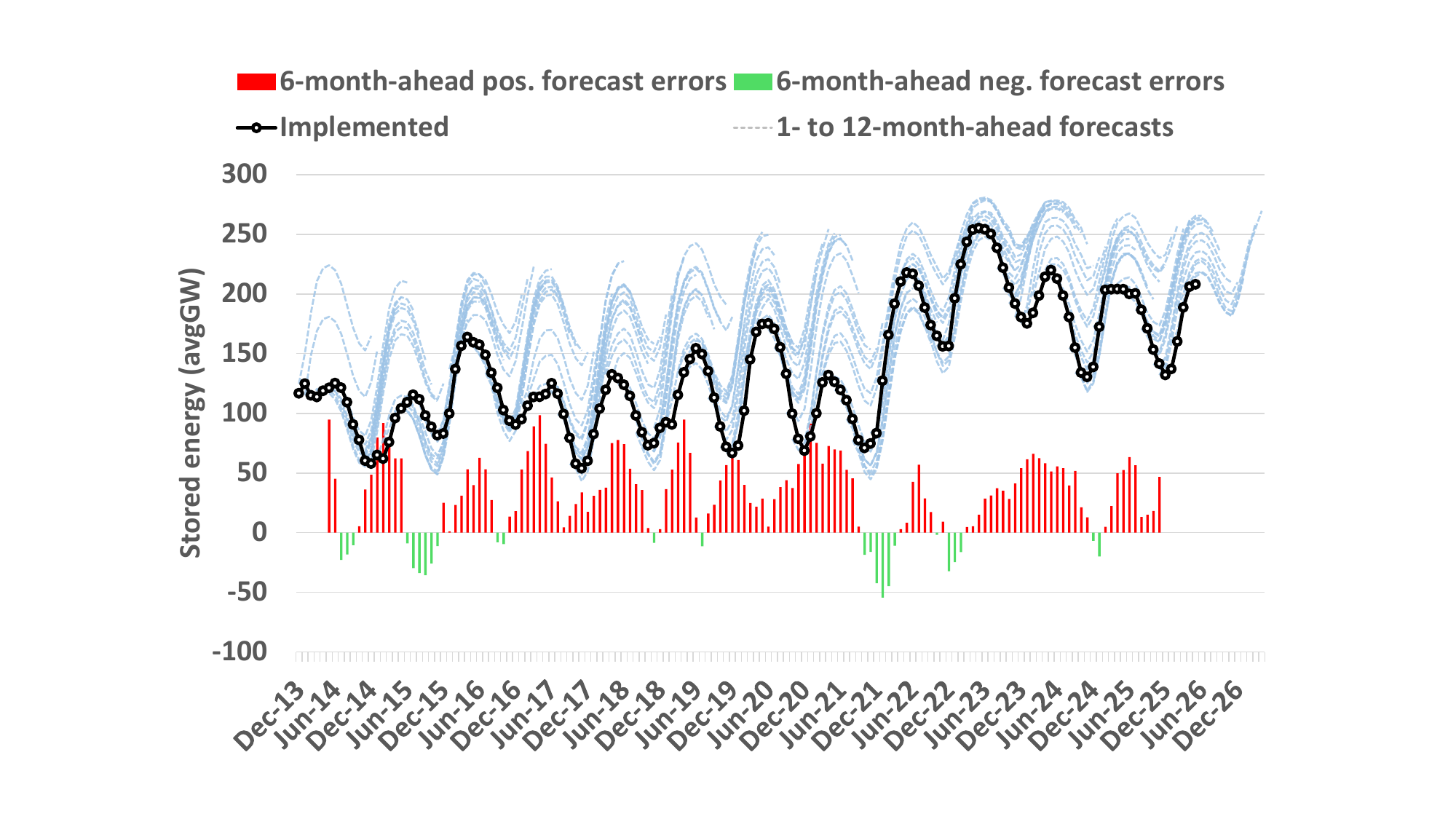}
        \captionof{figure}{Planned and implemented levels of stored energy for the Brazilian energy system in avgGW.}
        \label{fig:empirical_evidence_stored_energy}
    \end{minipage}

\end{figure}

The gap between planned and implemented storage may also reflect operational, regulatory, and modeling factors unrelated to inflow forecasts. To gather statistical evidence on whether inflow forecast errors are nevertheless associated with this gap, we estimate a regression relating cumulative storage planning errors to cumulative NIE forecast errors. Let the $k$-month-ahead NIE forecast error made at time $t$ be defined as $e^{(NIE)}_{t,k} = \hat{y}_{t+k\mid t} - y_{t+k}$, where $\hat{y}_{t+k\mid t}$ is the NIE forecast for period $t+k$ produced with information available at time $t$, and $y_{t+k}$ is the realized NIE. The cumulative $p$-month NIE forecast error is then $E^{(NIE)}_{t,p} = \sum_{k=1}^{p} e^{(NIE)}_{t,k}$. Similarly, we define $E^{(storage)}_{t,p}$ as the cumulative $p$-month-ahead storage planning error, comparing the storage trajectory planned at time $t$ by the operation-planning model with the storage levels subsequently implemented. We then estimate
\begin{align}
    E^{(storage)}_{t,p}
    =
    \beta_0
    +
    \beta_1 E^{(NIE)}_{t,p}
    +
    \Gamma_{m(t)}
    +
    \varepsilon_t,
\end{align}
where $\Gamma_{m(t)}$ denotes monthly dummy variables that model seasonal effects. Table \ref{tab:regression_results} shows overall results for the regression under different values for $p$.

\begin{table}[H]
\centering
\caption{Stored-energy and NIE forecast-error regression results}
\label{tab:regression_results}
\small
\setlength{\tabcolsep}{6pt}
\begin{tabular}{ccccc}
\hline
$p$ & $\beta_1$ [95\% CI] & HAC $p$-value
& \shortstack[c]{\rule{0pt}{2.6ex}Pearson\\correlation}
& $R^2$ \\
\hline
3  & 1.44 [1.24, 1.64] & $<0.01$ & 0.88 & 0.84 \\
6  & 2.71 [2.32, 3.10] & $<0.01$ & 0.84 & 0.77 \\
12 & 4.46 [3.45, 5.47] & $<0.01$ & 0.77 & 0.65 \\
\hline
\end{tabular}
\end{table}

The results show a strong positive relationship between cumulative NIE forecast errors and cumulative stored-energy forecast errors. For all horizons, the coefficient on NIE forecast error is positive and statistically significant. This means that a marginal increase in cumulative inflow-forecast bias is statistically associated with a marginal increase in the model's overestimation of stored energy. This marginal effect, $\beta_1$, is greater than one at every horizon. The confidence intervals and HAC p-values indicate that the estimates are statistically significant and robust to serial correlation induced by overlapping forecast windows.

The relationship is strongest at shorter horizons and weakens gradually as the forecast horizon increases. This is consistent with the fact that stored energy depends directly on realized inflows, but also reflects operational decisions, system constraints, and hydrological adjustments whose influence may become more relevant over longer forecast horizons. In \ref{sec:additional_regression_info}, we provide additional supporting evidence on the robustness and diagnostic validity of this regression.

These regression results indicate that inflow forecast errors are a major statistical correlate of the gap between planned and implemented stored energy. The positive and statistically significant coefficients show that optimistic cumulative NIE forecasts are systematically associated with optimistic stored-energy trajectories. This provides empirical evidence consistent with the interpretation that optimistic inflow forecasts affect planned storage trajectories and contribute to the deviations observed in operation.

Taken together, the empirical evidence is consistent with the accumulation of forecast-bias effects in the Brazilian system. Interpreted through the mechanism of Section \ref{sec:implications}, these patterns suggest that the planning policy may undervalue stored water when optimistic inflow expectations fail to materialize. As described by Theorem \ref{theo:bias_increase_discharge}, this undervaluation induces higher current hydropower generation than would be prescribed under unbiased forecasts. Moreover, Figure \ref{fig:empirical_evidence_hydro_generation} shows that the gap between planned and implemented hydropower generation tends to widen during the dry season. In accordance with Remark 2, this pattern indicates that the effect of forecast bias becomes more pronounced under dry hydrological conditions. This effect will be further discussed in Section \ref{sec:case_study}.

The evidence in this subsection is consistent with an affirmative answer to the first question posed in Section \ref{sec:introduction}: optimistic inflow forecasts appear to be associated with lower implied water values and higher planned hydro use than would be expected under realized water availability. The second posed question in Section \ref{sec:introduction} can be partially discussed in this section with thermal generation and electricity spot prices empirical data.

Intuitively, if the planning model relies on more hydropower generation than can be sustained under realized inflows, the opposite gap should appear in thermal dispatch. Figure \ref{fig:empirical_evidence_thermal_generation} confirms this pattern: for most of the analyzed period, implemented thermal generation exceeds planned values. The series is also highly volatile, with sharp spikes suggesting that the model delayed thermal dispatch until storage conditions had already deteriorated. This pattern reinforces the interpretation that forecast bias accumulates over time: by anticipating abundant future inflows, the model postpones thermal generation and resorts to it only as a last-minute measure. A very similar pattern can be observed in spot prices, as shown in Figure \ref{fig:empirical_evidence_spot_prices}, extending the interpretation from operational distortions to market distortions. Importantly, as discussed in Remark 1, the bias induces spot prices that are artificially low in the wet season. 

\begin{figure}[!ht]
\centering
\begin{minipage}{0.49\textwidth}
    \centering
    \includegraphics[width=\textwidth, trim={4.1cm 1.5cm 4.0cm 1.6cm},clip]{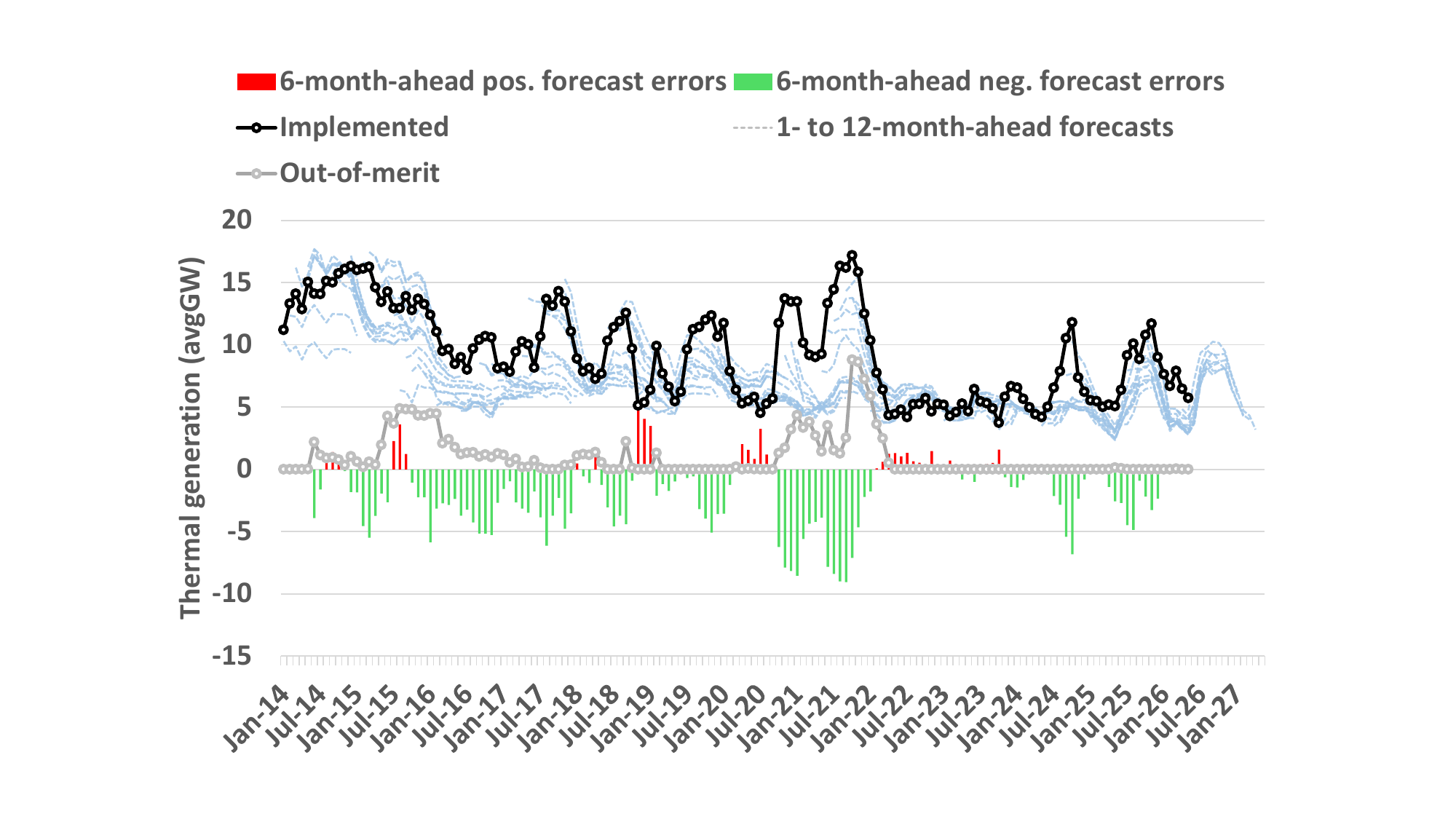}
    \captionof{figure}{Planned and implemented levels of thermal generation for the Brazilian energy system in avgGW.}
    \label{fig:empirical_evidence_thermal_generation}
\end{minipage}
\hfill
\begin{minipage}{0.49\textwidth}
    \centering
    \includegraphics[width=\textwidth, trim={4.1cm 1.5cm 4.0cm 1.6cm},clip]{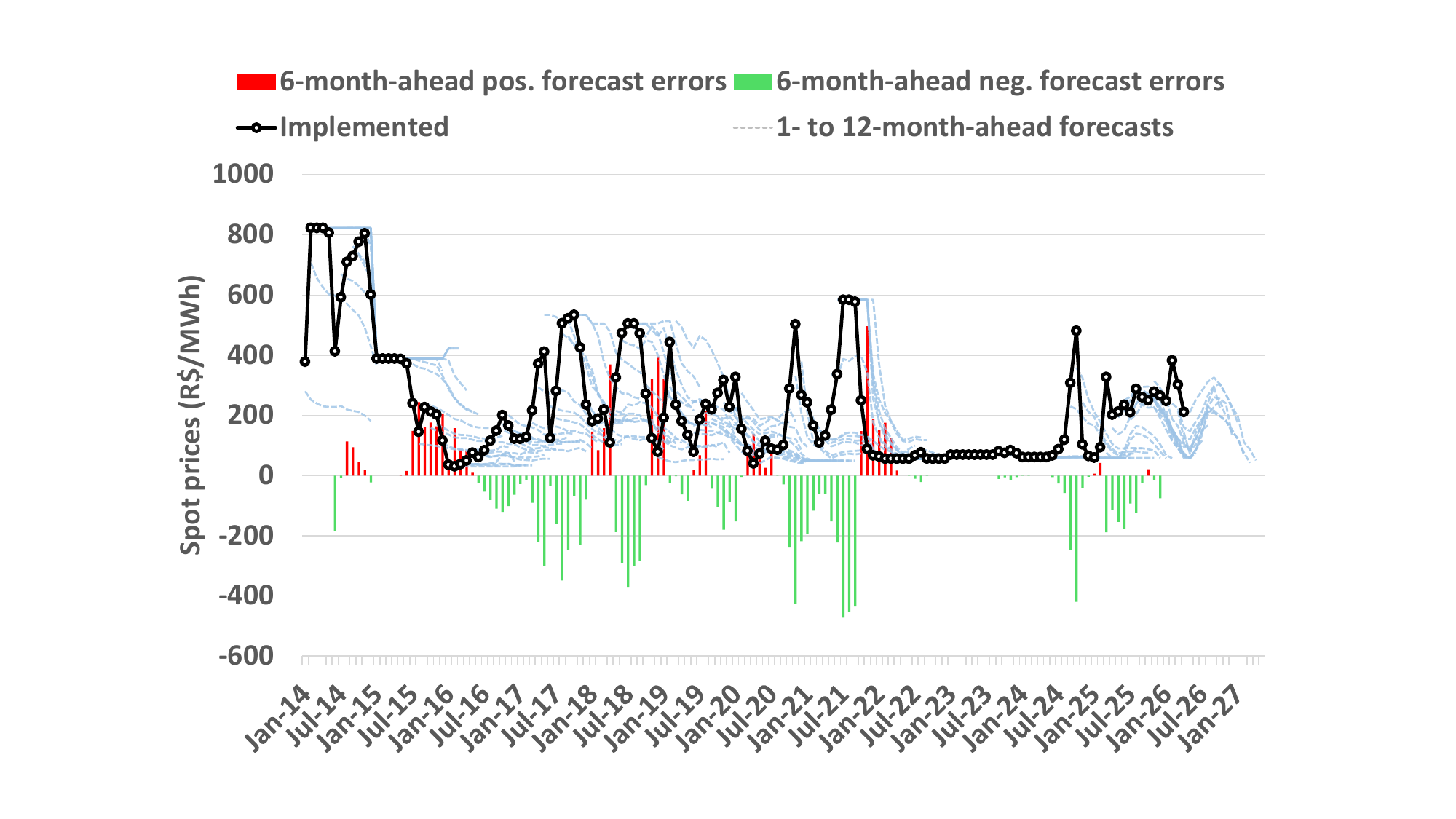}
    \captionof{figure}{Planned and implemented spot prices for the SE subsystem in R\$/MWh.}
    \label{fig:empirical_evidence_spot_prices}
\end{minipage}
\end{figure}

In addition, Figure \ref{fig:empirical_evidence_thermal_generation} also shows the total out-of-merit thermal dispatch for the same period\footnote{The Brazilian ISO classifies out-of-merit thermal dispatch according to different motivations. In Figure \ref{fig:empirical_evidence_thermal_generation}, we consider only out-of-merit dispatch associated with energy supply assurance. The operator explicitly uses this type of dispatch to preserve reservoir storage levels.}. The decision to implement out-of-merit dispatch is made on an ad hoc basis. The operator must justify the need for additional dispatch to the Brazilian Electric Sector Monitoring Committee, which may authorize a specified amount of additional generation for a limited period. Therefore, because there is no explicit rule determining this type of dispatch, we analyze its occurrence through Table \ref{tab:oom}. The table reports, for each quarter, the share of low-storage months in which out-of-merit dispatch was implemented, considering different stored-energy thresholds. It is evident that out-of-merit dispatch is frequently associated with low stored-energy conditions, especially when stored energy falls below 40\% or 50\%. This supports the interpretation that out-of-merit dispatch is often used as an ex-post corrective measure to preserve reservoir levels after the planning policy has relied excessively on hydropower generation.

\begin{table}[H]
\centering
\caption{Frequency of out-of-merit dispatch conditional on low stored energy from January 2014 to May 2026. Each entry reports the share of months with out-of-merit dispatch among the months in which stored energy was below the corresponding threshold. The number of months below the threshold is reported in parentheses.}
\label{tab:oom}
\begin{tabular}{lccc}
\hline
 & \multicolumn{3}{c}{Stored energy threshold} \\
\cline{2-4}
Quarter & $\leq 30\%$ of max capacity & $\leq 40\% $ of max capacity & $\leq 50\%$ of max capacity \\
\hline
Jan--Mar & 83\% $(6)$   & 72\% $(18)$  & 68\% $(25)$ \\
Apr--Jun & --           & 100\% $(5)$  & 88\% $(16)$ \\
Jul--Sep & 100\% $(3)$  & 79\% $(14)$  & 75\% $(20)$ \\
Oct--Dec & 76\% $(17)$  & 79\% $(24)$  & 63\% $(30)$ \\
\hline
\end{tabular}
\end{table}

Figure \ref{fig:empirical_evidence_thermal_generation} indicates that thermal dispatch reached its highest levels in 2021, during the peak of Brazil's most recent severe hydrological crisis. After the end of 2021, thermal generation fell sharply and remained close to the planned trajectories for several months. This coincided with the strong hydrological recovery observed in 2022 and 2023, when higher rainfall and inflows allowed reservoir levels to recover, as shown in Figure \ref{fig:empirical_evidence_stored_energy}. As storage conditions improved, both implemented storage and implemented hydropower generation became more aligned with the planned policies. This behavior is consistent with Remark 4: wet realizations may temporarily mask the effects of forecast bias, making the planning policy appear well calibrated when favorable inflows materialize. The relevant diagnostic pattern, therefore, is not that forecast bias produces visible distortions in every period, but that its accumulated effects become binding when optimistic forecasts are followed by dry realizations.

Taken together, the historical evidence is consistent with the operational and market distortions described in the second research question: lower-than-planned reservoir levels, delayed thermal dispatch, corrective out-of-merit actions, and spot-price deviations when realized inflows fail to match optimistic expectations.
To fully assess the effects of the reported inflow forecast bias in Brazil and relate them to the theoretical analysis of Section \ref{sec:implications}, we need controlled experiments that compare the biased planning policy with the counterfactual operation obtained under unbiased inflow forecasts. This is carried out in the next section.

\section{Controlled experiments on the effect of unbiased forecasts} \label{sec:case_study}


Because the empirical evidence in Section~\ref{sec:empirical_evidence} does not provide an observed unbiased counterfactual, we now present a controlled experiment designed to isolate the effect of the optimistic inflow-forecast bias identified in \cite{brigatto2025assessing} and assess whether it propagates through the cost-to-go functions computed by the SDDP algorithm, thereby distorting the resulting operating policy. The analysis focuses on reservoir storage, thermal dispatch, total operating costs, and load shedding. We then examine how these operational effects are transmitted to spot prices. Finally, in Section \ref{sec:forward_market}, we analyze how these distortions propagate further to market agents in the forward market. All data, processing scripts, and replication materials used in this Section and in Section \ref{sec:forward_market} are available in \cite{Brigatto2026BiasedInflowsData}.

To identify the operational effects of forecast bias, we need to compare the factual policy actually devised by the biased forecasting methodology with a counterfactual policy that would have been obtained under unbiased inflow expectations. The factual policy is constructed by training the SDDP model with scenarios generated from the official PARp-A methodology, which carries the optimistic bias documented in \cite{brigatto2025assessing}. This policy is hereinafter referred to as Policy 1. The counterfactual policy is constructed using the Adjusted Long-Term Mean (ALTM) PARp-A model proposed in \cite{brigatto2025assessing}, which is shown to produce unbiased forecasts by correcting the long-term mean of the log-transformed inflow process. This policy is hereinafter referred to as Policy 2.

Both policies are then evaluated out of sample under the same set of realistic hydrological scenarios generated with the ALTM model. This design holds the realized inflow conditions fixed across policies, so that differences in system operation, spot prices, operating costs, and load shedding can be attributed to the stochastic process used to train the SDDP policy, rather than to differences in ex post hydrology or system operation details.

The experiments are based on real data from the official operational planning models used in Brazil, covering monthly data from January 2021 to December 2025 (with an additional 60 months included to reduce end-horizon effects). The dataset includes the complete set of hydroelectric plants, represented individually, along with their cascade interconnections, storage capacities, energy conversion factors, and operational limits, such as maximum and minimum outflows and turbine capacities. Operational information is also included for all thermal power plants, energy exchange, and demand. The system representation follows the classical four-subsystem division adopted in Brazilian operation planning. The modeling follows the classical hydrothermal planning SDDP formulation presented in \ref{sec:appendix_sddp}.

The dataset described above gives rise to a large-scale stochastic optimization problem, requiring careful tuning of the SDDP parameters to ensure convergence and numerical stability. Several sampling schemes have been proposed in the literature, including single-scenario \cite{philpott2008convergence} and adaptive \cite{de2015improving,fullner2025stochastic} approaches, which have been shown to accelerate convergence in certain applications. Nevertheless, to maintain methodological consistency with the official implementation of the SDDP algorithm used in Brazil, we adopt a fixed sampling scheme with 200 forward and 30 backward scenarios.

Several stopping criteria have been proposed for SDDP. The original rule of \cite{pereira1991multi}, based on the lower bound entering the confidence interval of the upper-bound estimator, is the criterion historically adopted in Brazilian hydrothermal operation planning. However, this test is known to be imperfect and may stop prematurely, as discussed by \cite{shapiro2011analysis}. More generally, there is no universally accepted finite-sample convergence test that certifies exact convergence of large-scale SDDP policies in applications of this size \cite{homem2011sampling, fullner2025stochastic}. In Brazilian practice, the original convergence test is still officially used, but it is combined with operational safeguards, including minimum and maximum numbers of iterations. In recent years, in the official operational setting, the algorithm has typically been run with a minimum of 30 and a maximum of 50 iterations. In our experiments, we fix the number of iterations at 50 for both policies. This choice places both policies at the upper end of the official iteration window, avoids differences in stopping behavior across the biased and bias-corrected models, and ensures that the comparison is based on a common and sufficiently long training process. Our objective is not to claim exact convergence of either policy, but to compare two policies trained under the same computational protocol and evaluated under the same out-of-sample hydrological scenarios.

\subsection{Operation results}

Theorem \ref{theo:bias_increase_discharge} states that, under mild conditions, the conditional hydro discharge of the policy whose water values are distorted by the optimistic inflow bias---Policy 1 in our setting---must be weakly larger than that of the unbiased counterpart, Policy 2. Because this case study considers a more realistic description of the system, it departs from the stylized single-reservoir setting of Section~\ref{sec:implications} in several material ways: a transmission network, cascades of individually represented reservoirs, multiple hydro plants with heterogeneous productivities, and operational constraints that may, in principle, dampen the bias channel before it reaches the first-stage dispatch decision. Despite these complications, the non-parametric regressions\footnote{We discretize the initial reservoir storage $v_{t-1}$ along a grid; at each grid point, the plotted value is the sample average of all simulated dispatch decisions whose initial storage lies within a window of $\pm 5$~avgGW around that point. See \cite{gyorfi2002distribution} for the underlying theory of non-parametric regression.} reported in Figure~\ref{fig:ControlledExperiment_hydro} make clear that the mechanism identified in Theorem~\ref{theo:bias_increase_discharge} is predominant in practice. For each month of the third year of the planning horizon, conditional on the initial reservoir storage (x-axis), Policy 1 turbines more than Policy 2.

The gap is small during the wet season, when the water value is low under both policies---as reflected in the wet-season spot prices of Figure~\ref{fig:ControlledExperiment_spot}---and widens sharply through the dry season, when storage becomes scarce and the water-value gap is largest. This wet/dry heterogeneity is exactly what Remark 2 anticipates: in wet periods, the cost-to-go function is approximately flat in $v_{t-1}$, so a bias in the inflow forecast moves the slope of $Q_t$ only marginally. In dry periods, $Q_t$ has a steep negative slope, and the bias produces a substantial reduction in this slope, translating into a much larger conditional over-discharge. So, the effect is larger when it is most critical to the system. The comparison itself---between policies evaluated at similar realized state $v_{t-1}$---is precisely the conditional effect that Theorem \ref{theo:bias_increase_discharge} prescribes for the rolling-horizon implementation. 

As discussed in Remark~3, this is only valid for the conditional comparison. In Figures~\ref{fig:ControlledExperiment_hydro_mean} and \ref{fig:ControlledExperiment_hydro_difference}, however, we also observe what Remark~3 describes as the dynamic effect: in some periods, hydro generation can be, on average, higher under Policy~2 because the biased Policy~1, operating from the lower reservoir states it induced (see Figure~\ref{fig:ControlledExperiment_storage}), faces higher water opportunity costs and therefore reduces the hydro generation. This is a salient feature of Policy~2, which, informed by a more realistic look-ahead and unbiased water values, shifts more water from the wet to the dry season to prevent high-cost thermal dispatches.

\begin{figure}[!htbp]
    \includegraphics[scale = 0.5, trim={0.1cm 0.0cm 0.0cm 0.0cm},clip]{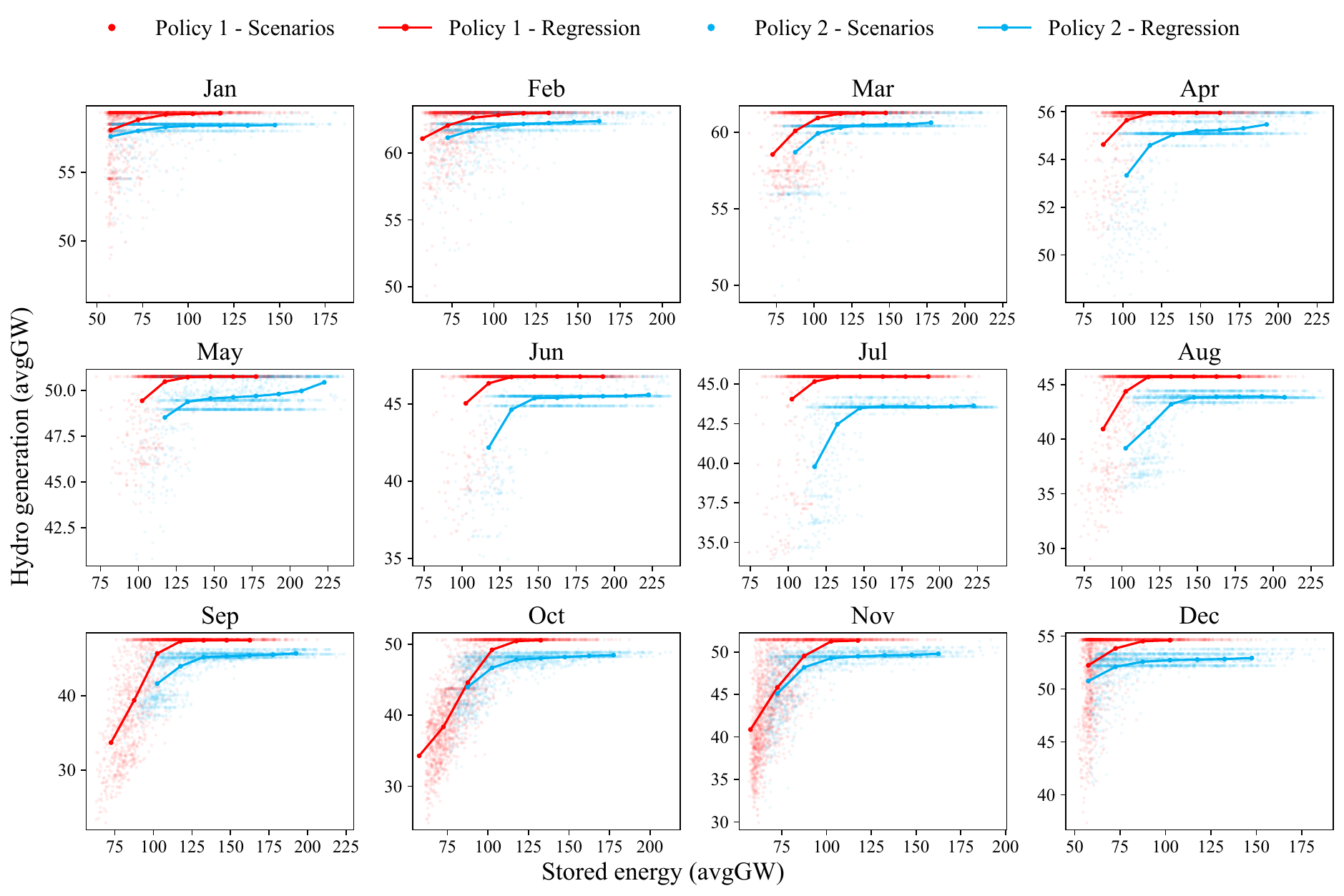}
    \centering
    \caption{Non-parametric regressions of hydro generation on initial reservoir storage, by month of the third year of the planning horizon, under Policies 1 and 2.}
    \label{fig:ControlledExperiment_hydro}
\end{figure}

\begin{figure}[!htbp]
    \centering
    \begin{minipage}{0.49\textwidth}
        \centering
        \includegraphics[
            width=\textwidth,
            trim={4.1cm 1.6cm 4.0cm 1.6cm},
            clip
        ]{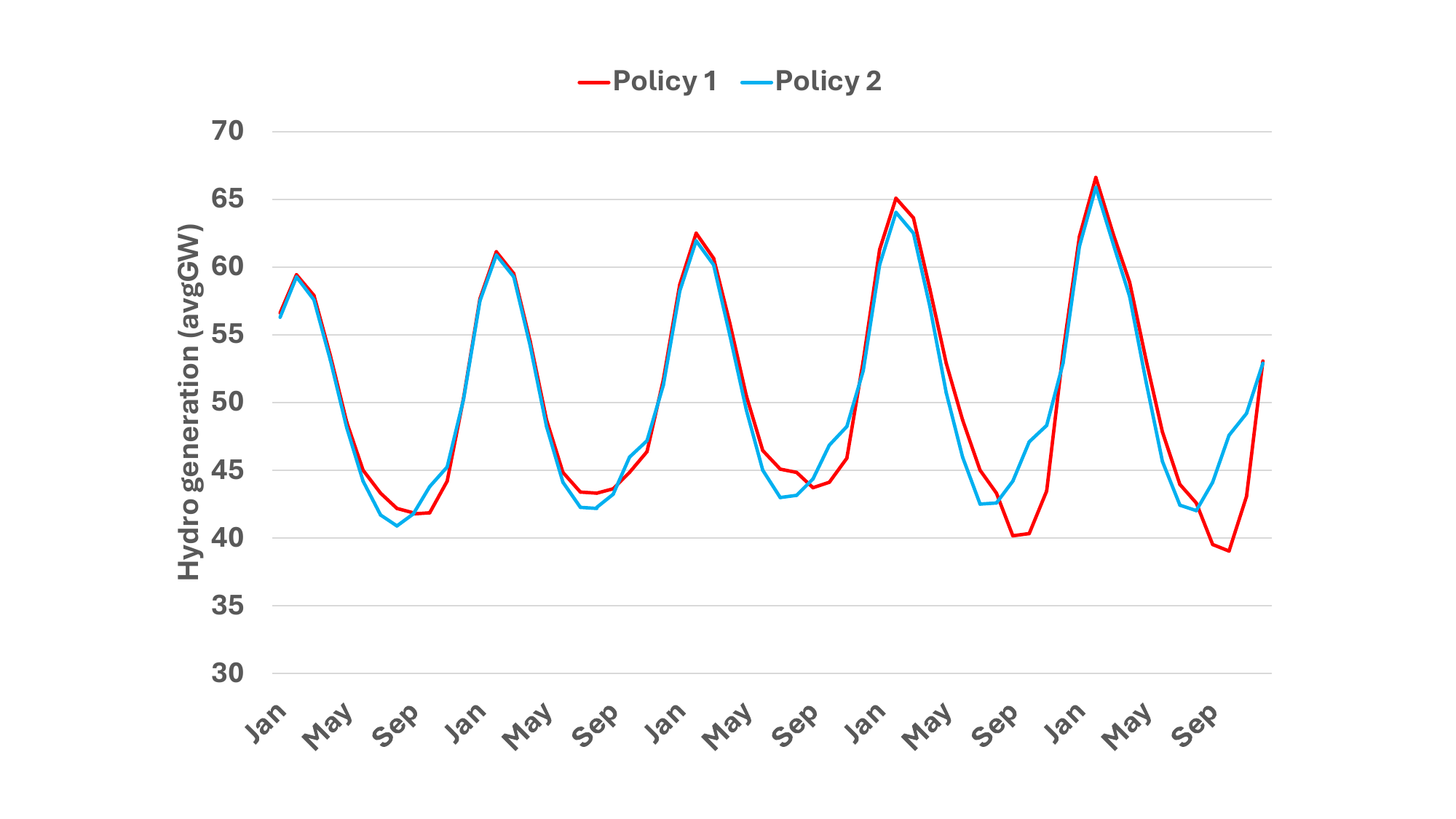}
        \captionof{figure}{Sample-average hydro generation under Policies 1 and 2 across the 2,000 out-of-sample simulations.}
        \label{fig:ControlledExperiment_hydro_mean}
    \end{minipage}
    \hfill
    \begin{minipage}{0.49\textwidth}
        \centering
        \includegraphics[
            width=\textwidth,
            trim={4.1cm 1.6cm 4.0cm 1.6cm},
            clip
        ]{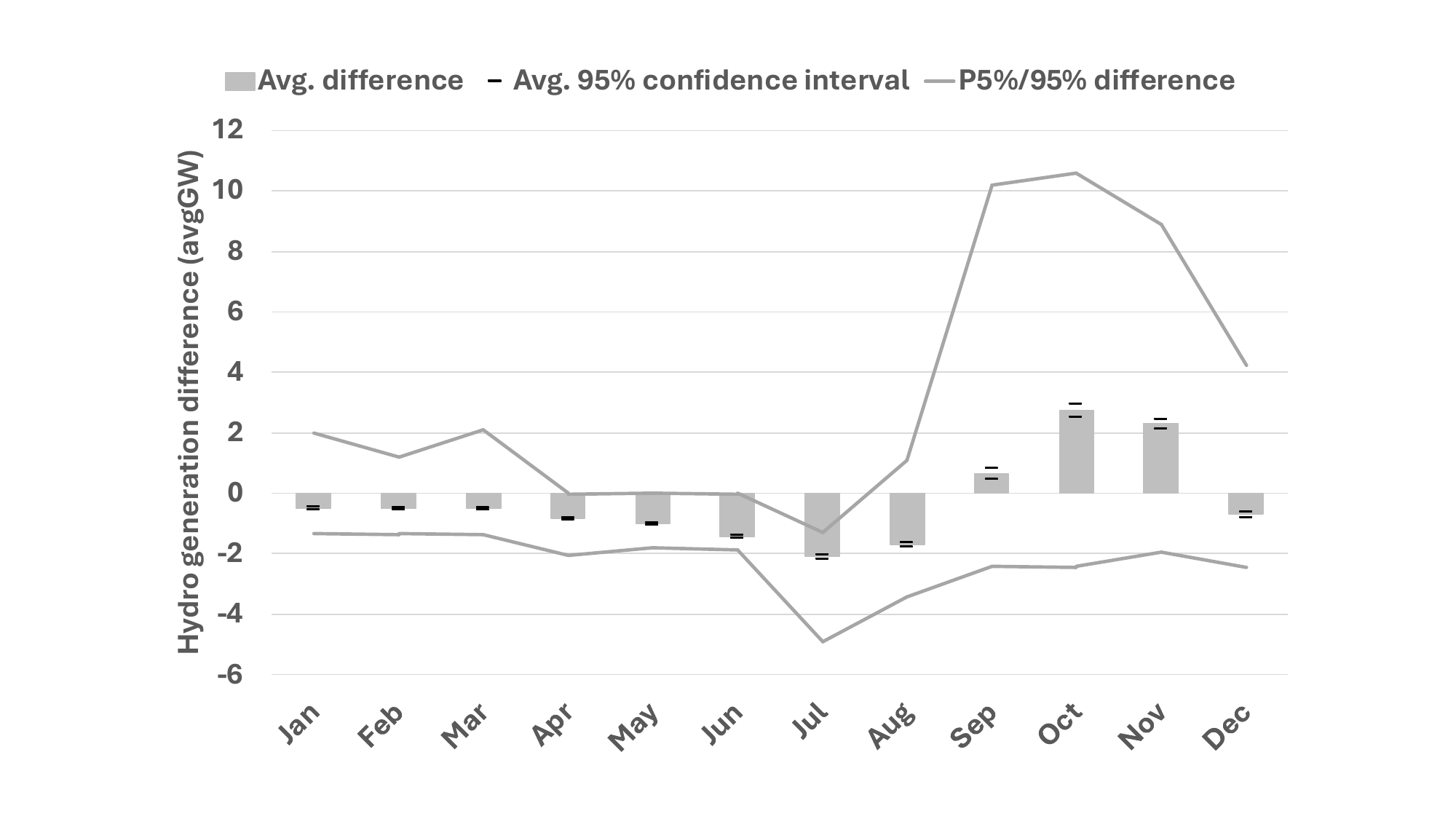}
        \captionof{figure}{Difference in hydro generation in the third year, computed as Policy 1 minus Policy 2, across the 2,000 out-of-sample simulations. All statistics are evaluated using the paired differences between policies for each period and scenario.}
        \label{fig:ControlledExperiment_hydro_difference}
    \end{minipage}
\end{figure}

While the conditional gap in hydro generation per individual month is modest, the resulting sample-average reservoir trajectories shown in Figure~\ref{fig:ControlledExperiment_storage} reveal that the cumulative effect is large. Policy~1 yields lower stored-energy levels in all periods. The difference between the two sample averages reaches 8.12 avgGW in September of the first year and 29.09 avgGW in August of the third year, corresponding to about 35\% of the average system demand in 2025. This persistent reduction in stored energy exposes the system to greater operational risk, since it leaves fewer hydro resources available to respond to adverse inflow realizations during critical dry periods.

The thermal generation profile in Figure~\ref{fig:ControlledExperiment_thermal} is consistent with the hydro-generation and storage results. As a preventive response to the possibility of severe scarcity later in the dry season, Policy 2 keeps average thermal dispatch slightly higher than Policy 1 throughout the wet season. After each May, at the end of the wet season, Policy 2 increases thermal generation earlier than Policy 1, complementing the sharper reduction observed in hydropower generation. This anticipatory thermal dispatch preserves stored energy before the peak of the dry season, allowing the system to rely less on stronger corrective thermal dispatch later and to increase hydropower generation again while still in the dry season. Policy 1, by contrast, delays this thermal adjustment until scarcity risk becomes stronger and largely unavoidable, which is consistent with the dynamic feedback described in Remark 3. As a result, it must sharply increase thermal dispatch during the dry season to levels above those observed under Policy 2. 

\begin{figure}[!htbp]
    \centering

    \begin{minipage}{0.49\textwidth}
        \centering
        \includegraphics[
            width=\linewidth,
            trim={4.1cm 1.6cm 4.0cm 1.6cm},
            clip
        ]{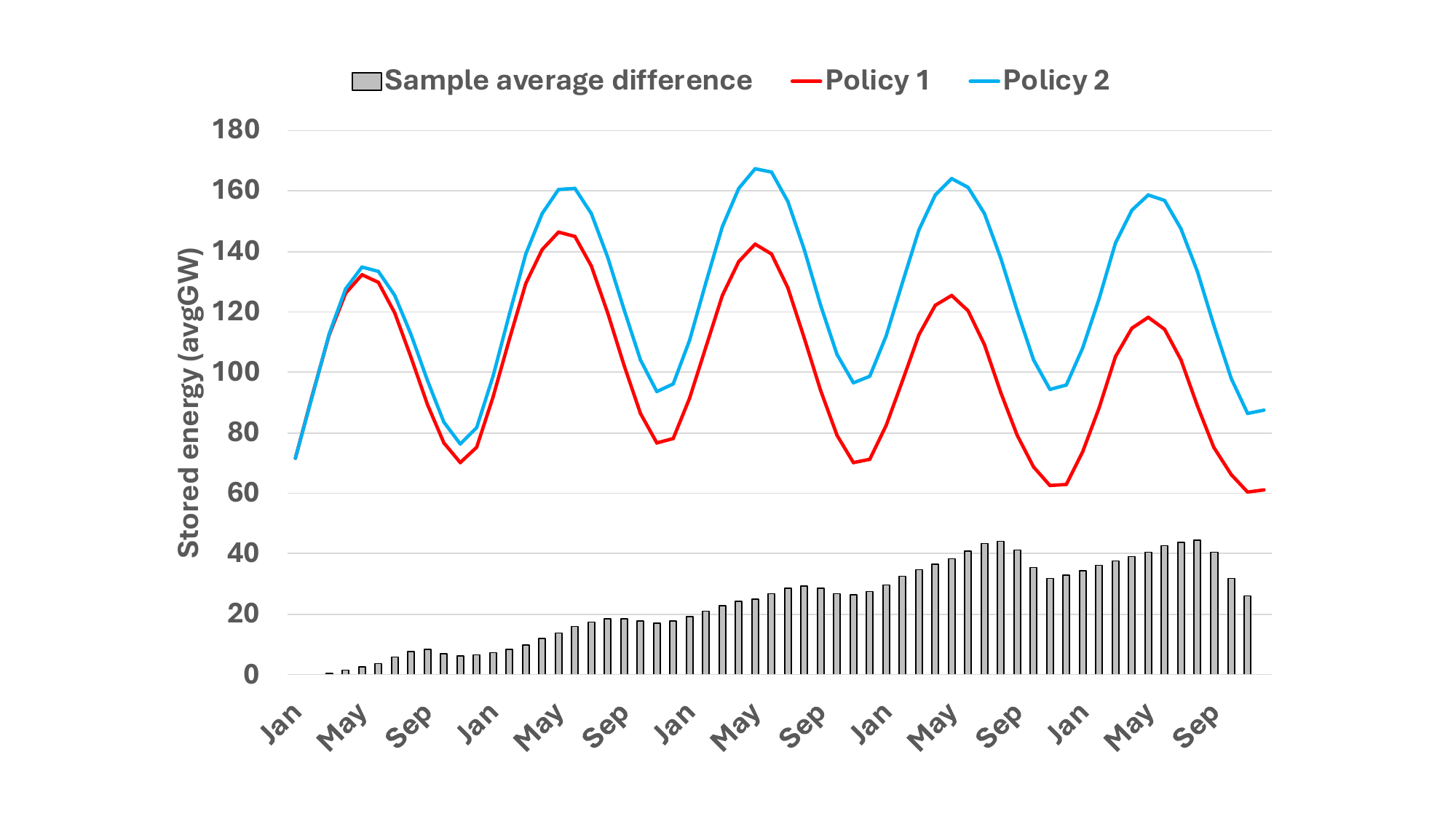}
        \captionof{figure}{Sample-average stored energy under Policies 1 and 2 across the 2,000 out-of-sample simulations.}
        \label{fig:ControlledExperiment_storage}
    \end{minipage}
    \hfill
    \begin{minipage}{0.49\textwidth}
        \centering
        \includegraphics[
            width=\linewidth,
            trim={4.1cm 1.6cm 4.0cm 1.6cm},
            clip
        ]{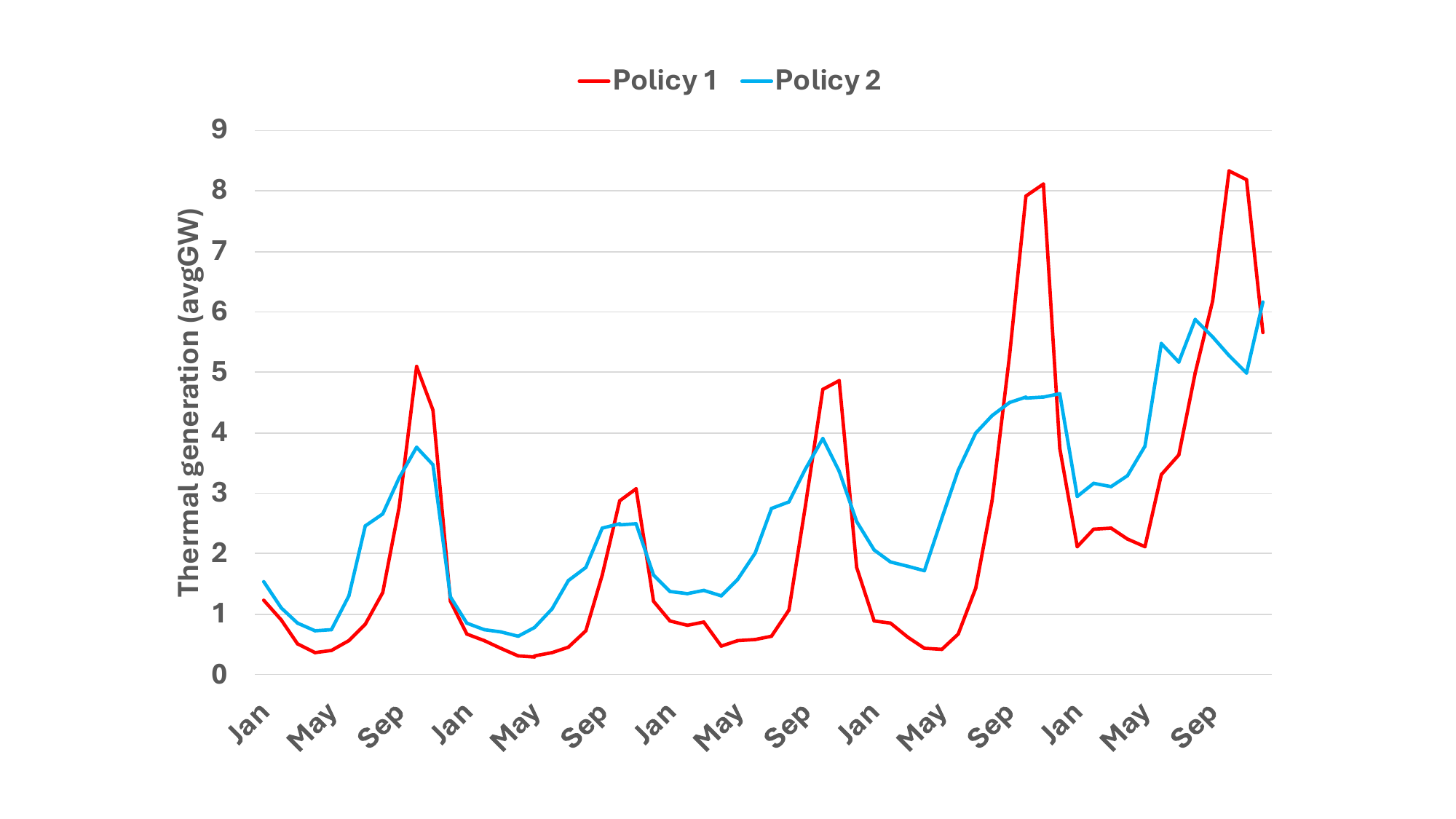}
        \captionof{figure}{Sample-average thermal generation under Policies 1 and 2 across the 2,000 out-of-sample simulations. Only flexible generation is shown.}
        \label{fig:ControlledExperiment_thermal}
    \end{minipage}

\end{figure}

Noticeably, during the dry-season thermal-generation peak, Policy 1 must rely, on average, on more expensive thermal units than Policy 2 uses throughout most of the year. This follows directly from the dispatch merit order: once cheaper thermal units are already dispatched, additional thermal generation must come from higher-cost units. Therefore, the delayed adjustment under Policy 1 increases operation costs not only because more thermal generation is required in critical periods, but also because this generation is supplied by costlier plants.

The cost impact is further amplified by load shedding. Since thermal capacity is finite, sufficiently severe reservoir depletion cannot be offset indefinitely by additional thermal dispatch. Beyond a certain point, the system must therefore resort to load shedding, whose penalty cost is much higher than regular thermal generation costs. Table \ref{tab:ControlledExperiment_costs} reports sample statistics for cumulative five-year operation costs under both policies, including thermal generation and load-shedding costs. The sample-average cumulative operation cost under Policy 1 is about seven times larger than under Policy 2. The cost difference is computed as the cost under Policy 1 minus the cost under Policy 2, and as reported in Table \ref{tab:ControlledExperiment_costs}, the difference between the two sample averages is statistically significant.

Economically, this result shows that optimistic forecasts do not merely alter dispatch quantities; they transfer costs over time, replacing relatively inexpensive preventive actions with expensive corrective actions and reliability failures later in the horizon. These higher operation costs and reliability risks must ultimately be reflected in electricity prices, which is the focus of the next subsection.

\begin{table}[!htbp]
\centering
\resizebox{\textwidth}{!}{%
\begin{tabular}{cccc|cc|c}
         & \multicolumn{3}{c|}{\begin{tabular}[c]{@{}c@{}}Five-year total thermal generation\\ and deficit costs (MMR\$)\end{tabular}} & \multicolumn{2}{c|}{\begin{tabular}[c]{@{}c@{}}Five-year total thermal generation\\ and deficit cost difference (MMR\$)\end{tabular}} & \multirow{2}{*}{\begin{tabular}[c]{@{}c@{}}Share of scenarios with \\ load shedding\end{tabular}} \\ \cline{2-6}
         & P5\%                                  & Sample Average                             & P95\%                                  & Sample average                                                     & 99\%-confidence interval error                                            &                                                                                                   \\ \hline
Policy 1 & 43,683.89                             & 208,239.53                                 & 392,959.51                             & \multirow{2}{*}{179,450.41}                                        & \multirow{2}{*}{$\pm$ 5,050.81}                                        & 10.3\%                                                                                              \\
Policy 2 & 7,704.94                              & 28,789.12                                  & 90,392.76                              &                                                                    &                                                                  & 0.9\%                                                                                              
\end{tabular}%
}
\caption{Sample statistics for cumulative five-year operation costs, cost differences, and load-shedding occurrence under Policies 1 and 2 across the 2,000 out-of-sample simulations.}
\label{tab:ControlledExperiment_costs}
\end{table}

\subsection{Spot price results} \label{sec:spot_price_results}

Figure \ref{fig:ControlledExperiment_spot} shows the monthly and yearly sample averages of spot prices under both policies.\footnote{As discussed in Section \ref{sec:implications}, spot prices in Brazil are obtained from the dual variable of the energy balance constraint. In practice, a cap and a floor are applied to the resulting dual variable. In this work, we consider the official values of R\$57.31/MWh and R\$1611.04/MWh adopted in the Brazilian market in 2026.} The pattern is similar to that observed for thermal generation. Policy 2 keeps prices higher than Policy 1 for most of the year, with prices gradually increasing after May. Policy 1, on the other hand, keeps prices at lower levels for most of the year, but exhibits sharp peaks during the dry season. These peaks reflect the delayed dispatch of expensive thermal units and the occurrence of load-shedding costs. Interestingly, the two policies present similar yearly average values for the first two years of operation, but the price starts to rise from the third year onward. Overall, the average spot prices over the five years are 311.47 R\$/MWh under Policy 1 and 232.18 R\$/MWh under Policy 2. This indicates that the two policies differ in both the long-run average price level and the price-spike distributions.

This pattern is consistent with Remark 1. The optimistic forecasts used to construct Policy 1 initially depress water values and, therefore, spot prices, especially at the beginning of the dry season. Note the pattern of lower prices under Policy~1 from May to August compared with Policy~2. However, this lower price signal is artificial: it reflects an overvaluation of conditional future water availability rather than an actual improvement in system conditions. As the resulting reservoir depletion accumulates, the system later faces higher scarcity costs, which appear as sharp dry-season price peaks.

\begin{figure}[!htbp]
    \centering

    \begin{minipage}{0.49\textwidth}
        \centering
        \includegraphics[
            width=\linewidth,
            trim={4.1cm 1.5cm 4.0cm 1.6cm},
            clip
        ]{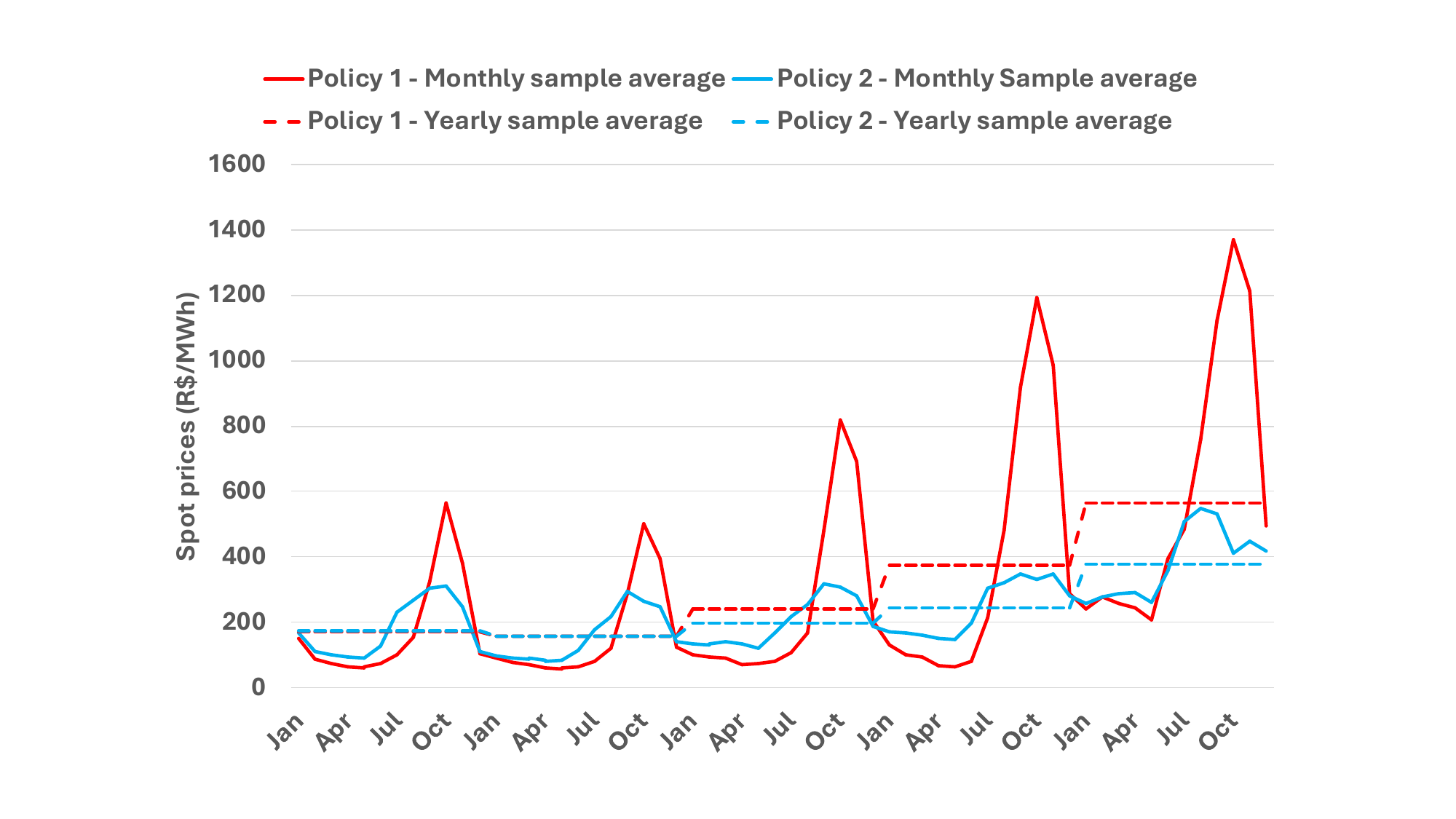}
        \captionof{figure}{Sample-average spot prices under Policies 1 and 2 across the 2,000 out-of-sample simulations. Official cap and floor for the year of 2026 are adopted.}
        \label{fig:ControlledExperiment_spot}
    \end{minipage}
    \hfill
    \begin{minipage}{0.49\textwidth}
        \centering
        \includegraphics[
            width=\linewidth,
            trim={4.1cm 1.6cm 4.0cm 1.6cm},
            clip
        ]{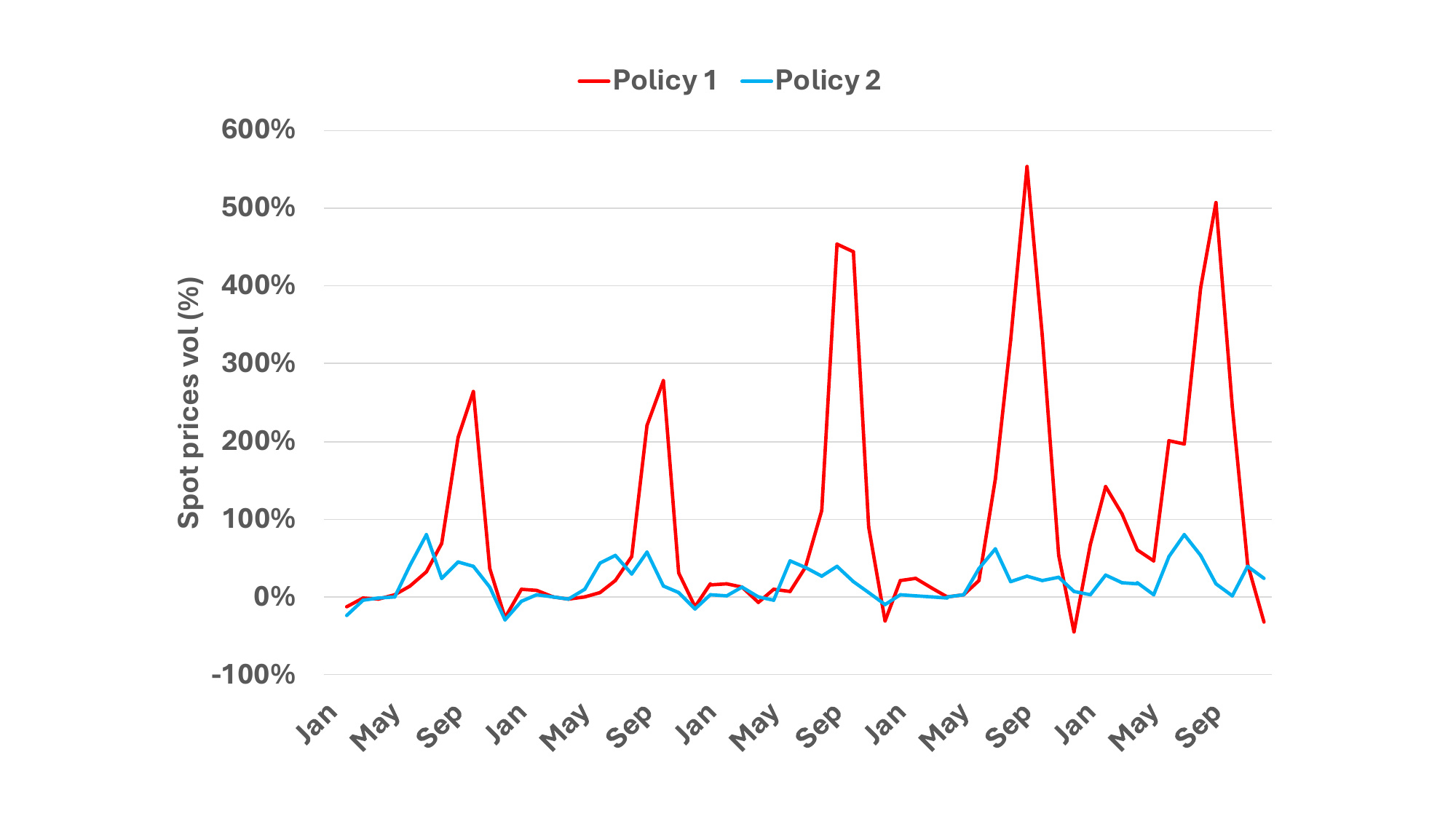}
        \captionof{figure}{Sample-average volatility under Policies 1 and 2 across the 2,000 out-of-sample simulations.}
        \label{fig:ControlledExperiment_spot_vol}
    \end{minipage}

\end{figure}

Very important to market agents is the volatility of spot prices. We evaluate price volatility between consecutive months as
\begin{align}
    vol_t = \frac{\pi_t - \pi_{t-1}}{\pi_{t-1}},
\end{align}
Figure \ref{fig:ControlledExperiment_spot_vol} presents the monthly sample-average values of $vol_t$ under both policies.

Volatility reproduces the seasonal pattern observed in the spot-price results. Price volatility tends to increase as the wet season ends and the system approaches the dry months, and then decreases again as the next wet season begins. In both policies, volatility is higher during the dry season than during the wet season, reflecting the increasing scarcity of water resources and the possibility of dispatching more expensive thermal units. However, the magnitude of these variations is substantially larger under Policy 1 than under Policy 2 during the dry season. This reinforces that Policy 1 reacts to scarcity later, leading to abrupt price adjustments when the system is already under stress. This delayed response is consistent with the sharper increase in expensive thermal generation and, in some scenarios, with the occurrence of load shedding as discussed in the previous subsection. By contrast, as Policy 2 anticipates the deterioration in system conditions, it preserves more stored water and produces a smoother spot-price trajectory, with smaller month-to-month price changes.

\section{Propagation to the forward-market layer} \label{sec:forward_market}

The controlled experiments of Section~\ref{sec:case_study} isolate the operational consequences of forecast bias on storage, dispatch, and spot prices. Because the same spot prices that emerge from centralized dispatch are also used for short-term settlement in the Brazilian contracting environment, the bias may propagate one step further into the forward-market layer. We examine this channel through a stylized contracting overlay applied to the same SDDP simulations of Policies 1 and 2, asking how forecast-induced changes in price–quantity risk affect the downside-risk-adjusted revenue of a representative hydropower producer at different forward-contracting levels.

We construct a normalized representative hydropower generation profile as follows. Let $G_{t,\omega}^{(k)}$ denote total hydropower generation in period $t$ under inflow scenario $\omega$ and Policy~$k$, and let $\overline{G}^{(k)}$ denote its sample mean across all periods and scenarios. The per unit representative plant's generation is defined as $g_{t,\omega}^{(k)} = G_{t,\omega}^{(k)}/\overline{G}^{(k)}$. Depending on the policy environment, the profile is obtained either from Policy~1, $g_{t,\omega}^{(1)}$, or from Policy~2, $g_{t,\omega}^{(2)}$. Using the controlled experiment results from Section \ref{sec:case_study} we obtain $\overline{G}^{(1)} = 50.39$ avgGW and $\overline{G}^{(2)} = 50.44$ avgGW.

Consider a representative hydropower plant that can sell a fraction $\gamma \in [0,1]$ of $Q = 1$ avgMW in the forward market at price $P$, settling the remaining position in the spot market at price $\pi_{t,\omega}$. As with $g_{t,\omega}^{(1)}$ and $g_{t,\omega}^{(2)}$, we let $\pi_{t,\omega}^{(1)}$ and $\pi_{t,\omega}^{(2)}$ denote the spot prices under each policy, and $P^{(1)}$ and $P^{(2)}$ the corresponding forward prices. For $k \in \{1,2\}$, the plant's revenue in period $t$ under scenario $\omega$ is
\begin{align}
R_{t,\omega}^{(k)} \;=\; \left(P^{(k)} Q \gamma + \bigl(g_{t,\omega}^{(k)} - Q\gamma\bigr)\pi_{t,\omega}^{(k)}\right) h_t,
\label{eq:forward_revenue}
\end{align}
where $h_t$ is the number of hours in month $t$ and the plant's variable operating cost is set to zero, consistent with the low variable costs of hydropower relative to thermal plants. Under this convention, revenue coincides with profit. 
The forward price is calibrated as the ex post expectation of future spot prices under each policy, assuming a risk-neutral market with a zero risk premium\footnote{While we calibrate the forward price as the ex post expectation of spot prices under a zero-risk-premium convention, empirical evidence from the hydro-dominated Nord Pool market shows that the spot–forward relationship carries a non-zero, state-dependent risk premium that varies with hydro inflow and reservoir levels \citep{botterud2010spotfutures}; our zero-premium assumption therefore isolates the price–quantity-risk channel from premium effects, which we leave to future work.}. In this context, we assume that: (1) agents fully internalize the future price dynamics associated with each policy scenario, and (2) contracting decisions are motivated solely by hedging considerations, with contracts offering no expected gains beyond those available in the spot market. Under the biased planning policy, the resulting contract price is $P^{(1)} = 311.47$~R\$/MWh, whereas under the unbiased planning policy it is $P^{(2)} = 232.18$~R\$/MWh.

Because hydropower revenue risk is seasonal, evaluating downside risk at the monthly level may overemphasize short-term fluctuations, while pooling all revenues into a single annual distribution may hide the concentration of risk in dry periods. We therefore evaluate downside revenue risk at the quarterly level, following typical market reports and financial evaluations. 

Following the certainty-equivalent definition based on the Conditional Value-at-Risk (CVaR) in \cite{street2010conditional}, we let $\rho_\alpha(\cdot)$ denote the operator that returns the average revenue over the $\alpha$-worst-case scenarios---the lower tail of a revenue distribution. For a random revenue $R$ and tail fraction $\alpha \in (0,1)$,
\begin{align}
    \rho_\alpha(R)
    \;:=\;
    \mathbb{E}
    \left[
        R
        \,\middle|\,
        R \leq F_R^{-1}(\alpha)
    \right],
    \label{eq:cvar_operator}
\end{align}
where $F_R^{-1}(\alpha)$ is the $\alpha$-quantile of $R$, i.e., the lower-tail Value-at-Risk in the profit convention. Let $\mathcal{T}_\tau$ denote the set of months in quarter $\tau$, and define the quarterly revenue under policy $k$ and contracting level $\gamma$ as
\begin{align}
    R_{\tau,\omega}^{(k)}(\gamma)
    =
    \sum_{t \in \mathcal{T}_\tau}
    R_{t,\omega}^{(k)}(\gamma),
\end{align}
whose randomness across inflow scenarios $\omega$ defines the quarterly revenue distribution on which $\rho_\alpha$ operates.

To summarize downside contracting risk across the full simulation horizon while preserving this seasonal structure, we study the objective given by the sum of quarterly revenue CVaRs:
\begin{align}
    f^{(k)}(\gamma)
    =
    \sum_{\tau \in \Tau}
    \rho_\alpha\!\left( R_{\tau}^{(k)}(\gamma) \right),
    \label{eq:cvar_objective}
\end{align}
where $\Tau$ is the set of all quarters in the analysis. Each term $\rho_\alpha(R_{\tau}^{(k)}(\gamma))$ evaluates lower-tail revenue quarter by quarter, allowing the contracting analysis to capture the seasonal concentration of hydropower revenue risk.

Figure \ref{fig:sum_quarterly_cvar} reports the contracting strategy using the sum of quarterly revenue CVaR values for $\alpha = 5\%$. Under Policy 1, the risk-adjusted revenue reaches its maximum at a lower contracting level, $\gamma=60\%$, with a value of R\$9.10 million. Under Policy 2, the maximum occurs at $\gamma=70\%$, with a value of R\$8.18 million. Thus, in this stylized setting, the biased policy is associated with a lower optimal contracting level for a representative risk-averse hydropower producer. This is consistent with the price–quantity-risk mechanism discussed in Section \ref{sec:spot_price_results}: under Policy~1, dry-period scarcity produces sharper spot-price peaks precisely when hydropower generation is more constrained. A larger uncontracted position, therefore, has greater upside value in high-price states, while higher forward commitments increase the risk of having to settle deficits at scarcity prices. As a result, the optimal response under the biased policy is to preserve greater exposure to the spot market.

\begin{figure}[!htbp]
    \centering
    \includegraphics[scale = 0.40, trim={4.1cm 1.6cm 4.0cm 1.6cm},clip]{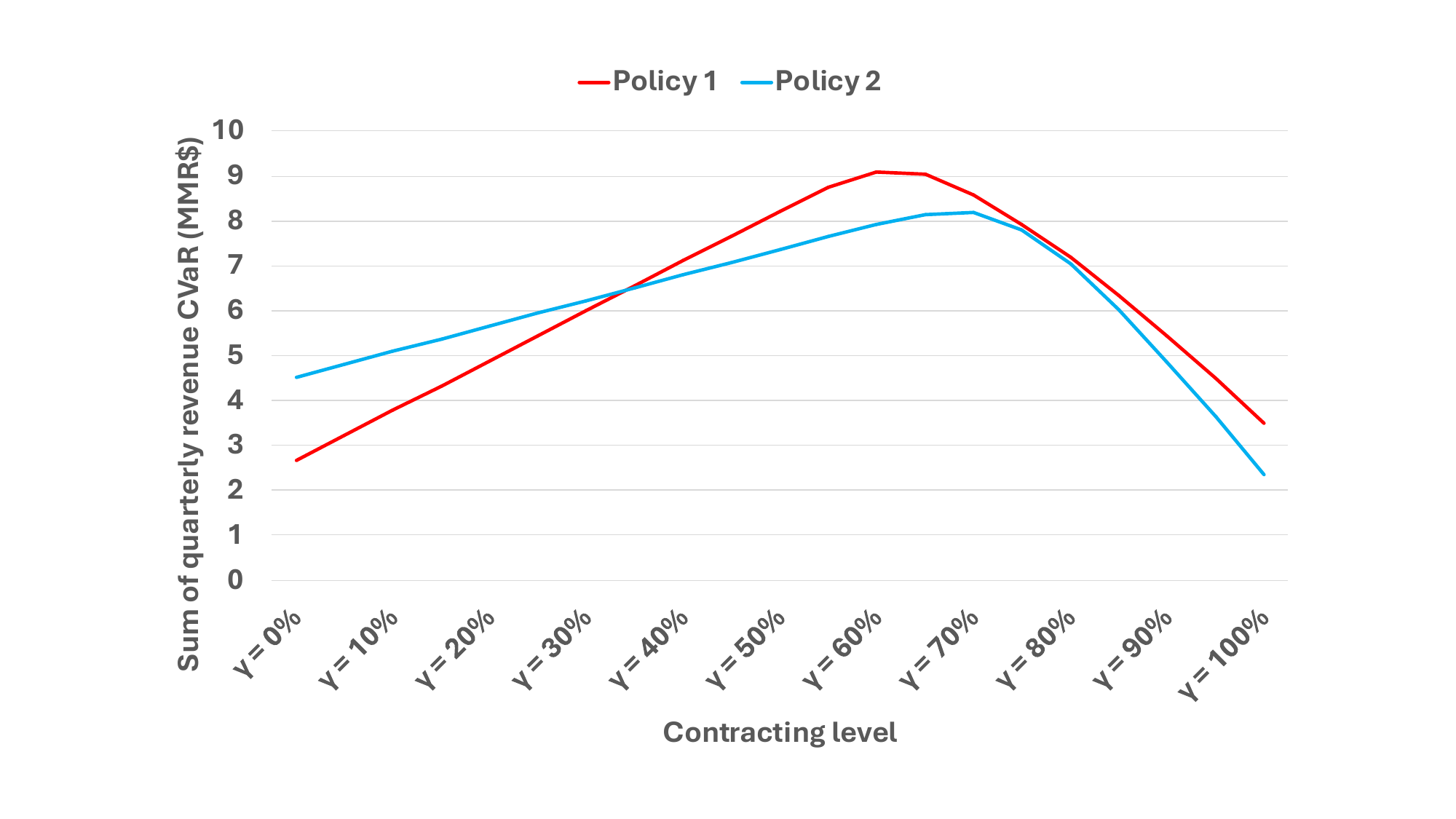}
    \caption{Sum of quarterly revenue CVaR values for different contracting levels under Policies 1 and 2.}
    \label{fig:sum_quarterly_cvar}
\end{figure}

Extrapolating the representative-unitary results to the full Brazilian hydro capacity provides a useful scale proxy for the aggregate economic magnitude of this difference. Using the respective total generation averages for each policy, $\overline{G}^{(1)} = 50.39$ and $\overline{G}^{(2)} = 50.44$ avgGW, the risk-adjusted contracting levels implied by the two policies correspond to $30.23$ avgGW under Policy~1 and $35.31$ avgGW under Policy~2. Equivalently, correcting the forecast bias---moving from Policy~1 to Policy~2---raises the willingness-to-contract by hydropower producers by approximately $16.8\%$.

This result places the contracting layer of the market on the same footing as the operational layer studied in Sections~\ref{sec:implications}--\ref{sec:case_study}. Forecast bias changes not only the timing of dispatch and spot prices, but also the joint distribution of generation and prices faced by hydropower producers. Under the biased policy, low generation is more likely to coincide with high spot prices in dry conditions, increasing price–quantity risk and reducing the hedging value of additional forward commitments. Therefore, the bias may affect not only operating costs and spot-price formation, but also the risk-management incentives of market agents.

\section{Policy implications} \label{sec:policy_implications}

First, these results have direct implications for the governance of forecasting models in hydropower-dominated systems. When inflow forecasts, SDDP cost-to-go functions, and market outcomes are tightly coupled through a centralized planning framework, persistent optimistic bias in the forecasts is not a simple statistical issue: it has the potential to depress water values, accelerate reservoir depletion, postpone thermal commitment, and translate into scarcity episodes, increasing price spikes and volatility, as well as emergency out-of-merit interventions paid by the consumer. Inflow forecasting models used in official planning should therefore be evaluated not only on traditional statistical metrics but also on their ability to maintain consistency between planned and implemented policies, with explicit institutional mechanisms for transparency, independent validation, and (agile) periodic updates.

Second, the relevance of this mechanism extends beyond centralized, cost-based dispatch to competitive market settings. In many hydro-dominated systems, cost-based planning models (and the water values and prices they produce) are used as benchmarks to support decision making and regulatory studies \citep{rangel2008competition, wolfgang2009hydro}, as well as for market-monitoring purposes, providing a reference against which observed bids, prices, and dispatch can be assessed \citep{wolak2009colombia}. The bias documented here can therefore distort not only the operation itself but also these downstream benchmarking and surveillance activities. For instance, a benchmark that systematically understates scarcity will misrepresent the competitive reference and may flag efficient behavior as anomalous, or fail to detect genuinely distorted outcomes. This concern becomes even more salient under alternative market designs that have recently been studied in Brazil \citep{ccee2025meta2}. If Brazil were to move toward a bid-based market, market-monitoring procedures anchored on the existing cost-based methodology could inherit the same optimistic bias and, rather than disciplining market power, interfere with the normal functioning of the market. Ensuring that benchmark models are unbiased is thus a precondition for their legitimate use in market monitoring and oversight, independently of the dispatch paradigm in place.

Third, the analysis demonstrates the need for regulators and system operators to align reliability instruments and market signals with the behavior of planning models. When optimistic look-ahead understates scarcity and delays thermal dispatch in the face of imminent crises, short-term corrective actions and out-of-merit dispatch substitute for missing forward signals, increasing out-of-market costs and undermining the credibility of the planning framework. Regulators and system operators may therefore complement model-governance reforms with mechanisms that (i) explicitly internalize out-of-merit interventions into planning models through security constraints or scarcity pricing, (ii) monitor the gap between planned and implemented decisions to ensure that these internalized mechanisms are not themselves distorted by biased look-ahead information, and (iii) publicly disclose standardized indicators of this planning–implementation gap to promote transparency, reproducibility, and independent monitoring of planning performance.

Fourth, the bias propagation reported here does not stop at the spot-price signal. The willingness-to-contract study shows that biased forecasts can change the joint distribution of hydro generation and spot prices, increasing price–quantity risk and reducing the contracting level of a representative risk-averse hydropower producer. The governance of long-term inflow forecasts is, in this sense, also relevant to market performance: when spot prices used for settlement misrepresent scarcity, the incentives created by forward contracts may be distorted as well. Quantitatively, the contracting exercise of Section~\ref{sec:forward_market} indicates that correcting the bias raises the willingness-to-contract of a representative risk-averse hydropower producer by approximately $16.8\%$, so that improved forecast governance can directly impact forward-market liquidity. As the forward market represents the main competitive arena in the Brazilian electricity market, as well as in most Latin American electricity markets, market operators should monitor and publicly disclose standardized inconsistency indices and ensure that the planning models underlying spot-price formation remain empirically validated and free from systematic forecast bias, thereby preserving efficient market signals, forward-market liquidity, and price discovery. 

Fifth, the distortions documented here are not transient deviations but the standing difference between two long-run operating regimes. Storage is a stock variable that accumulates each period's dispatch decisions, so the small monthly over-discharge induced by the bias (modest in isolation) builds up over time into the large differences in storage, prices, costs, and reliability documented in Sections~\ref{sec:empirical_evidence}--\ref{sec:forward_market}. In the Brazilian case, this bias has persisted in the official process for more than a decade, so these differences characterize a long-standing operating regime rather than a temporary departure from it. Correcting the bias offers a real long-run gain, on the order of the cost and reliability differences in Table~\ref{tab:ControlledExperiment_costs}, but the move from the biased to the unbiased regime is itself an economic event: it raises spot prices on impact and keeps them higher during wet seasons, precisely when the inconsistent policy artificially signals abundance, in order to halt the reservoir-depletion trend. This repricing creates winners and losers, as short-term-exposed consumers pay more in normal periods while the system gains protection against dry-season scarcity, and it shifts the equilibrium contracting levels between regimes (Section~\ref{sec:forward_market}). Prices, storage, and contract positions therefore pass through a transient before settling into the unbiased regime; characterizing this transient---its duration, welfare incidence, and consequences for contract design---is beyond the scope of this paper and constitutes a relevant topic for future research.

\section{Conclusion}

This paper examined whether optimistic inflow-forecast bias in Brazil’s official planning framework distorts water values and whether the resulting dispatch policy propagates into operational and market outcomes. The results answer both questions affirmatively.

While the theoretical analysis shows, under mild conditions, that optimistic future inflow forecasts reduce the perceived opportunity cost of stored water and induce greater hydro use than would arise under unbiased forecasts, the empirical and experimental evidence traces this distortion through the successive layers of the system. The depressed marginal valuation of storage advances hydropower discharge and accelerates reservoir drawdown; the resulting depletion postpones the preventive commitment of thermal capacity and concentrates it into sharper dry-season dispatch; and the merit-order substitution that artificially lowers the wet-season price signal magnifies the scarcity rents realized under dry conditions. The Brazilian operation record is consistent with this chain: planned hydro generation and storage trajectories systematically exceed their implemented counterparts, while thermal generation, spot prices, and out-of-merit dispatch reveal corrective action deferred until realized inflows fall short of expectations. The controlled SDDP experiments isolate and quantify the counterfactual: relative to the bias-corrected policy, the biased policy yields lower reservoir levels, sharper dry-season thermal dispatch, higher spot-price volatility, a 10.3\% load-shedding probability, and substantially higher expected operating costs.

The forward-market analysis shows that the distortion may extend beyond physical operation and spot-price formation. In the stylized contracting overlay, the biased policy changes the joint distribution of hydropower generation and spot prices, increasing price–quantity risk: low generation becomes more likely to coincide with scarcity prices in dry conditions. This reduces the hedging value of forward commitments for a risk-averse hydropower producer and lowers its optimal contracting level. Thus, the same forecast bias that distorts water values may also affect the contractual layer organized around the spot-price signal. These findings indicate that forecast-bias monitoring and correction are central to the governance of SDDP-based hydrothermal planning systems. 

Future research may extend this work by jointly treating inflow bias and other structural simplifications in planning models, and integrating external information and statistical corrections directly into the SDDP loop. Additionally, over the last decade, the discrepancies documented in Section~\ref{sec:empirical_evidence} have raised concerns, as discussed in the Introduction, leading the system operator to increase the model’s risk aversion as an empirical compensation mechanism. The annual calibration of this risk-aversion parameter relies on ad hoc procedures that are beyond the scope of this paper. Assessing how the effects identified in the controlled experiments vary with the level of risk aversion is an important direction for future research.

\section*{Declaration of Generative AI and AI-assisted technologies in the writing process}

During the preparation of this work the authors used both ChatGPT 5.5 and Claude Opus 4.8 interchangeably in order to improve language and readability. After using these tools, the authors reviewed and edited the content as needed and take full responsibility for the content of the publication.

\appendix 

\section{Proofs} \label{appendix:proofs}

This appendix collects the full proofs of Proposition~\ref{prop:bias_reduces_water_value}, Theorem~\ref{theo:bias_increase_discharge}, and Theorem~\ref{theo:theorem_stochastic}. The proofs work with the dynamic-programming recursion stated in Section~\ref{sec:implications}: for each period $t$, given the storage state $v_{t-1}$ and a forecast bias vector $\boldsymbol{\delta}=(\delta_2,\dots,\delta_T)\geq\boldsymbol{0}$ (with $\delta_1=0$ by convention), the cost-to-go function $Q_t^{(\boldsymbol{\delta})}$ is defined by~\eqref{eq:reduced_stage_problem_obj}--\eqref{eq:reduced_stage_problem_storage_bounds}, and the immediate cost $\phi_t(r_t)$, written as a function of the total water release $r_t$, is defined by~\eqref{eq:phi_definition_obj}--\eqref{eq:phi_definition_turbine_bound}. Throughout, $\overline V$ and $\underline V$ denote the storage bounds, $\overline U$ the turbine upper bound, $\boldsymbol{c}_t$ the vector of thermal marginal costs, and $\boldsymbol{e}_{t_0}$ the standard basis vector with a one in position $t_0$. The set ordering $\succeq$ between subdifferential intervals is the endpoint ordering defined in the footnote of Proposition~\ref{prop:bias_reduces_water_value}: for nonempty closed intervals $A,B\subseteq\overline{\mathbb R}$, $A\succeq B$ iff $\inf A\ge \inf B$ and $\sup A\ge \sup B$. 

Throughout the appendix we maintain the following standing assumptions, which hold in the hydrothermal setting analyzed here. For every period $t$: (i) inflows are nonnegative, $y_t\ge 0$ (almost surely, in the stochastic setting of Theorem~\ref{theo:theorem_stochastic}); (ii) the available thermal capacity is sufficient to ensure power supply, $\boldsymbol{1}^\top\underline{\boldsymbol G}\le d_t\le \boldsymbol{1}^\top\overline{\boldsymbol G}$, which holds automatically when a finite-cost deficit unit is included; (iii) marginal costs are nonnegative, $\boldsymbol{c}_t\ge\boldsymbol{0}$; and (iv) the analysis adopts the free-spill specialization $\mathcal P=\{(u_t,s_t): 0\le u_t\le\overline U,\ s_t\ge 0\}$, so spillage disposes of any surplus inflow at zero cost. Under (i)--(iv), each $\phi_t$ is finite, piecewise-linear convex, and non-increasing on $\mathbb R_+$, and each $Q_t^{(\boldsymbol{\delta})}$ is finite, convex, and non-increasing on $[\underline V,\infty)$.

\subsection*{Proof of Proposition~\ref{prop:bias_reduces_water_value}}

\begin{proof}
We argue by backward induction on $t$, from $t=t_0$ down to $t=1$. At each step the inductive hypothesis is the subgradient ordering~\eqref{eq:prop_subgrad_ordering} at stage $t+1$. The induction step proves~\eqref{eq:prop_u_ordering} at stage $t$ (for every $t=1,\dots,t_0$) and, when $t\geq 2$, also~\eqref{eq:prop_subgrad_ordering} at stage $t$.

\medskip
\noindent\textbf{Base case ($t=t_0$).} For periods $\tau>t_0$, the two bias vectors coincide, hence $Q_{t_0+1}^{(\boldsymbol{\delta}')}=Q_{t_0+1}^{(\boldsymbol{\delta})}$. By construction $\delta_{t_0}=0$ and $\delta'_{t_0}=\eta$, so the two stage-$t_0$ recursions in~\eqref{eq:reduced_stage_problem_obj}--\eqref{eq:reduced_stage_problem_storage_bounds} differ only by the inflow term in the water balance~\eqref{eq:reduced_stage_problem_balance}, which is $y_{t_0}$ under $\boldsymbol{\delta}$ and $y_{t_0}+\eta$ under $\boldsymbol{\delta}'$. Substituting $v_{t_0}=(v_{t_0-1}+\eta)+y_{t_0}-r_{t_0}$ shows that the $\boldsymbol{\delta}'$ problem at state $v_{t_0-1}$ coincides with the $\boldsymbol{\delta}$ problem at the shifted state $v_{t_0-1}+\eta$, since the storage bound~\eqref{eq:reduced_stage_problem_storage_bounds} restricts only the post-state $v_{t_0}$ and the inner subproblem~\eqref{eq:phi_definition_obj}--\eqref{eq:phi_definition_turbine_bound} can absorb any surplus through spillage. Therefore,
\begin{align}
Q_{t_0}^{(\boldsymbol{\delta}')}(v)\;=\;Q_{t_0}^{(\boldsymbol{\delta})}(v+\eta),\qquad \forall\,v\in[\underline V,\overline V].
\label{eq:base_shift_identity}
\end{align}
The pre-state $v_{t_0-1}$ enters the stage-$t_0$ problem only through the right-hand side of the balance~\eqref{eq:reduced_stage_problem_balance}, and the problem is feasible for every $v_{t_0-1}\ge\underline V$ because the spillage variable in~\eqref{eq:phi_definition_release} absorbs any surplus inflow at zero cost. Hence $Q_{t_0}^{(\boldsymbol{\delta})}$ is finite, convex, and non-increasing on $[\underline V,\infty)$, and identity~\eqref{eq:base_shift_identity} holds for every $v\in[\underline V,\overline V]$, including arguments with $v+\eta>\overline V$.

The cost-to-go function is convex, hence its subdifferential is monotone non-decreasing in the endpoint sense \citep{rockafellar1997convex}. Together with $\eta\geq 0$,~\eqref{eq:base_shift_identity} delivers
\begin{align}
\partial_v Q_{t_0}^{(\boldsymbol{\delta}')}(v)
\;=\;\partial_v Q_{t_0}^{(\boldsymbol{\delta})}(v+\eta)
\;\succeq\;
\partial_v Q_{t_0}^{(\boldsymbol{\delta})}(v),
\end{align}
which is~\eqref{eq:prop_subgrad_ordering} at $t=t_0$. The discharge ordering~\eqref{eq:prop_u_ordering} at $t=t_0$ follows from the shift identity~\eqref{eq:base_shift_identity}: the greatest optimal release is non-decreasing in the pre-state, so $r_{t_0}^{*(\boldsymbol{\delta}')}(v)=r_{t_0}^{*(\boldsymbol{\delta})}(v+\eta)\ge r_{t_0}^{*(\boldsymbol{\delta})}(v)$, and in~\eqref{eq:phi_definition_obj}--\eqref{eq:phi_definition_turbine_bound} the greatest optimal discharge is the greatest minimizer of the inner problem over the growing feasible interval $[0,\min(r_{t_0},\overline U)]$, hence non-decreasing in $r_{t_0}$; therefore $u_{t_0}^{*(\boldsymbol{\delta}')}\ge u_{t_0}^{*(\boldsymbol{\delta})}$. As a consequence of this inequality and the water balance equation, it is straightforward to see that $v_{t_0}^{*(\boldsymbol{\delta}')}\le v_{t_0}^{*(\boldsymbol{\delta})}$. Finally, for completeness, the same rationale applies to the lowest minimizer. Note that the set of minimizers of a convex linear program is a closed and convex set.   

\medskip
\noindent\textbf{Induction step ($1\leq t<t_0$).} Suppose~\eqref{eq:prop_subgrad_ordering} holds at stage $t+1$. For finite convex functions $F,G$ on $[\underline V,\overline V]$, the endpoint ordering $\partial F(v)\succeq \partial G(v)$ at every $v$ is equivalent to $F-G$ being non-decreasing on $[\underline V,\overline V]$: indeed, let $\partial F(v)=[F'_-(v),F'_+(v)]$ and likewise for $G$, the ordering reads $F'_-(v)\geq G'_-(v)$ and $F'_+(v)\geq G'_+(v)$, which is precisely $(F-G)'_\pm(v)\geq 0$ at every $v$. Applying this equivalence at stage $t+1$,
\begin{align}
\psi(v)\;:=\;Q_{t+1}^{(\boldsymbol{\delta}')}(v)-Q_{t+1}^{(\boldsymbol{\delta})}(v)\quad\text{is non-decreasing on $[\underline V,\overline V]$.}
\label{eq:psi_nondecreasing}
\end{align}

For $t<t_0$ we have $\delta_t=\delta'_t=0$, so the two stage-$t$ recursions differ only in the cost-to-go function. Eliminating $r_t=v_{t-1}+y_t-v_t$ from the water balance, define
\begin{align}
H^{(i)}(v_t;v_{t-1})\;:=\;\phi_t(v_{t-1}+y_t-v_t)+Q_{t+1}^{(i)}(v_t),\qquad i\in\{\boldsymbol{\delta},\boldsymbol{\delta}'\},
\end{align}
which is convex in $v_t$ on the closed feasible interval
\(
\mathcal V(v_{t-1}):=[\underline V,\overline V]\cap(-\infty,v_{t-1}+y_t]
\)
(the upper limit encodes $r_t\geq 0$). By construction,
\(
H^{(\boldsymbol{\delta}')}(v_t;v_{t-1})-H^{(\boldsymbol{\delta})}(v_t;v_{t-1})=\psi(v_t),
\)
which is non-decreasing in $v_t$.

\medskip
\noindent\emph{Step 1: hydro-discharge ordering~\eqref{eq:prop_u_ordering} at stage $t$.}
Let $v_t^{*(i)}=\inf[\arg\min_{v_t\in\mathcal V(v_{t-1})}H^{(i)}(v_t;v_{t-1})]$, and suppose, for contradiction, that $v_t^{*(\boldsymbol{\delta}')}>v_t^{*(\boldsymbol{\delta})}$. Adding the two optimality inequalities,
\begin{align}
H^{(\boldsymbol{\delta}')}(v_t^{*(\boldsymbol{\delta}')})\le H^{(\boldsymbol{\delta}')}(v_t^{*(\boldsymbol{\delta})}),\qquad
H^{(\boldsymbol{\delta})}(v_t^{*(\boldsymbol{\delta})})\le H^{(\boldsymbol{\delta})}(v_t^{*(\boldsymbol{\delta}')}),
\end{align}
and rearranging, $\psi(v_t^{*(\boldsymbol{\delta}')})\le \psi(v_t^{*(\boldsymbol{\delta})})$. Combining the last inequality with~\eqref{eq:psi_nondecreasing} and the hypothesis $v_t^{*(\boldsymbol{\delta}')}>v_t^{*(\boldsymbol{\delta})}$, then $\psi(v_t^{*(\boldsymbol{\delta}')}) = \psi(v_t^{*(\boldsymbol{\delta})})$, i.e., $\psi$ is constant on $[v_t^{*(\boldsymbol{\delta})},v_t^{*(\boldsymbol{\delta}')}]$. Thus, $v_t^{*(\boldsymbol{\delta})}$ is itself optimal for $H^{(\boldsymbol{\delta}')}$.\footnote{To keep the exposition concise, we leave the proof of this step to the reader.} Therefore, the lowest optimal post-state $v_t^{*}$ is non-increasing in the bias, violating the step 1 assumption that $v_t^{*(\boldsymbol{\delta}')}>v_t^{*(\boldsymbol{\delta})}$.

The water balance gives $r_t^{*(\boldsymbol{\delta}')}\ge r_t^{*(\boldsymbol{\delta})}$. Within the inner subproblem~\eqref{eq:phi_definition_obj}--\eqref{eq:phi_definition_turbine_bound} that defines $\phi_t$, the greatest optimal discharge is the greatest minimizer of the fixed convex inner objective over the growing feasible interval $[0,\min(r_t,\overline U)]$, hence non-decreasing in $r_t$ (which also shows the optimal discharges form a closed interval, as used in part~(ii)). Hence the greatest optimal discharge $u_t^{*}$ is non-decreasing in the bias, which is~\eqref{eq:prop_u_ordering} for the selection $u_t^{*}$ of part~(ii) at stage $t$.

\medskip
\noindent\emph{Step 2: subgradient ordering~\eqref{eq:prop_subgrad_ordering} at stage $t$.} For $t=1$ there is no subgradient ordering to prove, so suppose $t\geq 2$ and define, for $i\in\{\boldsymbol{\delta},\boldsymbol{\delta}'\}$,
\begin{align}
A^{(i)}\;:=\;\partial \phi_t(r_t^{*(i)}),
\qquad
B^{(i)}\;:=\;\partial Q_{t+1}^{(i)}(v_t^{*(i)}),
\qquad
D^{(i)}\;:=\;
\begin{cases}
(-\infty,\max B^{(i)}], & v_t^{*(i)}=\underline V,\\
B^{(i)}, & \underline V<v_t^{*(i)}<\overline V,\\
[\min B^{(i)},\infty), & v_t^{*(i)}=\overline V.
\end{cases}
\end{align}
The Lagrangian of~\eqref{eq:reduced_stage_problem_obj}--\eqref{eq:reduced_stage_problem_storage_bounds} is
\begin{align}
\mathcal L^{(i)}\;=\;\phi_t(r_t)+Q_{t+1}^{(i)}(v_t)+\lambda_t\,(v_t-v_{t-1}-y_t-\delta_t+r_t)+\underline\mu_t(\underline V-v_t)+\overline\mu_t(v_t-\overline V),
\end{align}
with $\underline\mu_t,\overline\mu_t\ge 0$, $\lambda_t\in\mathbb R$, and $\delta_t=0$ throughout the induction step since $t<t_0$. The stage problem is a polyhedral convex program ($\phi_t$ and $Q_{t+1}^{(i)}$ are value functions of linear programs, hence polyhedral) with finite optimal value under the standing assumptions, so strong duality holds and the set $\Lambda^*(v_{t-1})$ of optimal multipliers in the balance constraint is nonempty \citep[Cor.~28.2.2]{rockafellar1997convex}. 

We want to show that $-\Lambda^*(v_{t-1})=A^{(i)}\cap D^{(i)}$. To do that, we check this equality by showing that the elements of each set belong to the other. So, let $(r_t^{*(i)},v_t^{*(i)},\lambda_t^{*(i)},\underline\mu_t^{*(i)},\overline\mu_t^{*(i)})$ be any saddle point of $\mathcal L^{(i)}$. KKT stationarity in $r_t$ gives $-\lambda_t^{*(i)}\in A^{(i)}$; stationarity in $v_t$, combined with complementary slackness on the storage bounds, gives $-\lambda_t^{*(i)}\in D^{(i)}$, by case-split on whether $v_t^{*(i)}$ lies on $\underline V$, in the interior, or on $\overline V$. Conversely, fix any $\mu\in A^{(i)}\cap D^{(i)}$ and set $\lambda_t^{*(i)}:=-\mu$. Choose the bound multipliers as $\underline\mu_t^{*(i)}=\overline\mu_t^{*(i)}=0$ if $v_t^{*(i)}$ is interior; if $v_t^{*(i)}=\underline V$, set $\overline\mu_t^{*(i)}=0$ and $\underline\mu_t^{*(i)}=b-\mu\ge 0$ for some $b\in B^{(i)}$ with $b\ge\mu$ (such a $b$ exists because $\mu\in D^{(i)}=(-\infty,\max B^{(i)}]$); the case $v_t^{*(i)}=\overline V$ is symmetric. The resulting tuple is a saddle point of $\mathcal L^{(i)}$, hence $\lambda_t^{*(i)}\in\Lambda^*(v_{t-1})$. Together, the two directions give $-\Lambda^*(v_{t-1})=A^{(i)}\cap D^{(i)}$. 

By the convex-perturbation envelope theorem \citep[Thm.~29.1]{rockafellar1997convex}, the subdifferential of the optimal-value function in the right-hand-side parameter $v_{t-1}$ equals $-\Lambda^*(v_{t-1})$, since $v_{t-1}$ enters $\mathcal L^{(i)}$ linearly through the balance with coefficient $-\lambda_t$. Hence
\begin{align}
\partial_{v_{t-1}} Q_t^{(i)}(v_{t-1})\;=\;A^{(i)}\cap D^{(i)},\qquad i\in\{\boldsymbol{\delta},\boldsymbol{\delta}'\}.
\label{eq:envelope_intersection}
\end{align}
Notice that $\Lambda^*(v_{t-1})$ depends on the parameter $v_{t-1}$ alone and not on the choice of primal optimum, since in a convex program with strong duality every primal optimum pairs with every dual optimum at a saddle point \citep[Thm.~28.3]{rockafellar1997convex}. The right-hand side of~\eqref{eq:envelope_intersection} is therefore the same regardless of which optimal $(r_t^{*(i)},v_t^{*(i)})$ is used to compute $A^{(i)}$ and $D^{(i)}$. In particular $A^{(i)}\cap D^{(i)}\neq\emptyset$, inheriting nonemptiness from $\Lambda^*(v_{t-1})\neq\emptyset$.

We now compare $A^{(\boldsymbol{\delta}')}\cap D^{(\boldsymbol{\delta}')}$ with $A^{(\boldsymbol{\delta})}\cap D^{(\boldsymbol{\delta})}$ in the two cases left open by Step~1.

\medskip
\noindent\emph{Case A: $r_t^{*(\boldsymbol{\delta}')}>r_t^{*(\boldsymbol{\delta})}$.} Since $\phi_t$ is convex, its subdifferential is monotone non-decreasing, hence $\min A^{(\boldsymbol{\delta}')}\ge \max A^{(\boldsymbol{\delta})}$. Therefore every element of $A^{(\boldsymbol{\delta}')}\cap D^{(\boldsymbol{\delta}')}$ is at least as large as every element of $A^{(\boldsymbol{\delta})}\cap D^{(\boldsymbol{\delta})}$; in particular,
\begin{align}
\min\bigl(A^{(\boldsymbol{\delta}')}\cap D^{(\boldsymbol{\delta}')}\bigr)
\;\ge\;
\max\bigl(A^{(\boldsymbol{\delta})}\cap D^{(\boldsymbol{\delta})}\bigr),
\end{align}
which a fortiori delivers~\eqref{eq:prop_subgrad_ordering} at stage $t$.

\medskip
\noindent\emph{Case B: $r_t^{*(\boldsymbol{\delta}')}=r_t^{*(\boldsymbol{\delta})}=:r^*$.} The water balance gives $v_t^{*(\boldsymbol{\delta}')}=v_t^{*(\boldsymbol{\delta})}=:v^*$, so $A^{(\boldsymbol{\delta}')}=A^{(\boldsymbol{\delta})}=:A$. The inductive hypothesis~\eqref{eq:prop_subgrad_ordering} at stage $t+1$, evaluated at $v^*$, gives
\begin{align}
\min B^{(\boldsymbol{\delta}')}\ge \min B^{(\boldsymbol{\delta})},\qquad \max B^{(\boldsymbol{\delta}')}\ge \max B^{(\boldsymbol{\delta})}.
\end{align}
We examine the two endpoints of $A\cap D^{(i)}$ in each sub-case of $v^*$:

\begin{itemize}
\item If $\underline V<v^*<\overline V$, then $D^{(i)}=B^{(i)}$ and
$
A\cap B^{(i)}=\bigl[\max(\min A,\min B^{(i)}),\;\min(\max A,\max B^{(i)})\bigr].
$
Both endpoints are non-decreasing in $\min B^{(i)}$ and $\max B^{(i)}$, so $A\cap B^{(\boldsymbol{\delta}')}\succeq A\cap B^{(\boldsymbol{\delta})}$.

\item If $v^*=\underline V$, then $D^{(i)}=(-\infty,\max B^{(i)}]$ and
$
A\cap D^{(i)}=\bigl[\min A,\;\min(\max A,\max B^{(i)})\bigr].
$
The lower endpoint $\min A$ is the same for both $i$; the upper endpoint is non-decreasing in $\max B^{(i)}$. Hence $A\cap D^{(\boldsymbol{\delta}')}\succeq A\cap D^{(\boldsymbol{\delta})}$.

\item If $v^*=\overline V$, then $D^{(i)}=[\min B^{(i)},\infty)$ and
$
A\cap D^{(i)}=\bigl[\max(\min A,\min B^{(i)}),\;\max A\bigr].
$
The upper endpoint $\max A$ is the same for both $i$; the lower endpoint is non-decreasing in $\min B^{(i)}$. Hence $A\cap D^{(\boldsymbol{\delta}')}\succeq A\cap D^{(\boldsymbol{\delta})}$.
\end{itemize}

In every case, $\partial_{v_{t-1}} Q_t^{(\boldsymbol{\delta}')}(v_{t-1})\succeq \partial_{v_{t-1}} Q_t^{(\boldsymbol{\delta})}(v_{t-1})$, which is~\eqref{eq:prop_subgrad_ordering} at stage $t$. The induction step is complete.

\medskip
By backward induction,~\eqref{eq:prop_subgrad_ordering} holds for every $t=2,\dots,t_0$ and~\eqref{eq:prop_u_ordering} holds for every $t=1,\dots,t_0$.
\end{proof}

\subsection*{Proof of Theorem~\ref{theo:bias_increase_discharge}}

\begin{proof}
We build a sequence of bias vectors by adding the components of $\boldsymbol{\delta}$ one at a time, from period $T$ backward to period $2$. Let $\boldsymbol{\delta}^{(T+1)}:=\boldsymbol{0}$ and, for $k=T,T-1,\dots,2$,
\begin{align}
\boldsymbol{\delta}^{(k)}\;:=\;\boldsymbol{\delta}^{(k+1)}+\delta_k\,\boldsymbol{e}_k\;=\;(0,\dots,0,\delta_k,\delta_{k+1},\dots,\delta_T),
\end{align}
so that $\boldsymbol{\delta}^{(2)}=\boldsymbol{\delta}$. By construction, each pair $\bigl(\boldsymbol{\delta}^{(k+1)},\boldsymbol{\delta}^{(k)}\bigr)$ satisfies the hypothesis of Proposition~\ref{prop:bias_reduces_water_value} with $t_0=k$ and $\eta=\delta_k$: the vector $\boldsymbol{\delta}^{(k+1)}$ has zero entries in periods $1,\dots,k$, and $\boldsymbol{\delta}^{(k)}$ adds $\delta_k\geq 0$ at period $k$.

Applying part~(ii) of Proposition~\ref{prop:bias_reduces_water_value} at $t=1$ for each $k=T,T-1,\dots,2$,
\begin{align}
u_1^{*(\boldsymbol{\delta}^{(k)})}\;\geq\;u_1^{*(\boldsymbol{\delta}^{(k+1)})}.
\end{align}
Chaining these inequalities,
\begin{align}
u_1^{*(\boldsymbol{\delta})}
\;=\;u_1^{*(\boldsymbol{\delta}^{(2)})}
\;\geq\;u_1^{*(\boldsymbol{\delta}^{(3)})}
\;\geq\;\cdots
\;\geq\;u_1^{*(\boldsymbol{\delta}^{(T)})}
\;\geq\;u_1^{*(\boldsymbol{\delta}^{(T+1)})}
\;=\;u_1^{*(\boldsymbol{0})}.
\end{align}
\end{proof}

\subsection*{Proof of Theorem~\ref{theo:theorem_stochastic}}

The proof extends the backward induction of Proposition~\ref{prop:bias_reduces_water_value} to the expected cost-to-go, applying the stage-$t$ argument conditionally on the innovation $\boldsymbol{\epsilon}_t$ and taking expectations at each stage.

\begin{proof}
We argue by backward induction on $t$ and then add the bias components one at a time as in Theorem~\ref{theo:bias_increase_discharge}. We show that for any single-component pair $\boldsymbol\delta'=\boldsymbol\delta+\eta\,\boldsymbol e_{t_0}$ ($\eta\ge0$) the difference $Q_t^{(\boldsymbol\delta')}-Q_t^{(\boldsymbol\delta)}$ is non-decreasing in $v_{t-1}$, equivalently $\partial_v Q_t^{(\boldsymbol\delta')}\succeq\partial_v Q_t^{(\boldsymbol\delta)}$.

\emph{Single-component step.} Conditional on $\epsilon_t$, the stage-$t$ recourse problem
\begin{align}
q_t^{(i)}(v_{t-1},\epsilon_t)=\min_{r_t,v_t}\Big\{\phi_t(r_t)+Q_{t+1}^{(i)}(v_t):\ v_t=v_{t-1}+\hat{y}_t+\epsilon_t+\delta_t^{(i)}-r_t,\ \underline V\le v_t\le\overline V\Big\},\quad i\in\{\boldsymbol\delta,\boldsymbol\delta'\},
\end{align}
is a deterministic single-stage problem whose cost-to-go function $Q_{t+1}^{(i)}$ is convex and non-increasing, being an expectation of such functions. The induction step of Proposition~\ref{prop:bias_reduces_water_value} (Steps~1 and~2) uses convexity and finiteness of the cost-to-go function, the non-decreasing-difference hypothesis at $t+1$, and existence of an optimal multiplier for the conditional stage problem; for every state with $v_{t-1}+\hat{y}_t+\epsilon_t+\delta_t>\underline V$ the feasible set meets the relative interior of the domain of the objective, so the multiplier exists per realization by the relative-interior constraint qualification, while at the corner state $v_{t-1}+\hat{y}_t+\epsilon_t+\delta_t=\underline V$ the feasible set is the singleton $(r_t,v_t)=(0,\underline V)$ and the ordering follows by one-sided limits of the finite convex functions involved. Applied for each realization $\epsilon_t$, it gives that $q_t^{(\delta')}-q_t^{(\delta)}$ is non-decreasing in $v_{t-1}$ and $u_t^{*(\delta')}\ge u_t^{*(\delta)}$ for $t<t_0$, while at the added stage $t=t_0$ the base case of Proposition~\ref{prop:bias_reduces_water_value} (inflow shift $\eta$) gives the same two conclusions. Under the standing assumptions, $0\le q_t\le\sum_{\tau\ge t}\boldsymbol c_\tau^\top\overline{\boldsymbol G}$ is uniformly bounded, so all expectations below are finite. Since expectation preserves monotonicity,
\begin{align}
Q_t^{(\boldsymbol\delta')}-Q_t^{(\boldsymbol\delta)}=\mathbb E_{\epsilon_t}\!\big[q_t^{(\boldsymbol\delta')}-q_t^{(\boldsymbol\delta)}\big]
\end{align}
is non-decreasing in $v_{t-1}$, which by the equivalence in the proof of Proposition~\ref{prop:bias_reduces_water_value} (valid for any finite convex functions, and $Q_t^{(i)}$ is convex as an expectation of convex functions) is $\partial_v Q_t^{(\boldsymbol\delta')}\succeq\partial_v Q_t^{(\boldsymbol\delta)}$. For $t>t_0$ the two bias vectors coincide, so $Q_t^{(\boldsymbol\delta')}=Q_t^{(\boldsymbol\delta)}$ and the ordering holds with equality; the inequality is established at stages $t\le t_0$. The leaf $Q_{T+1}^{(i)}\equiv0$ starts the backward induction.

\emph{Combining the single-component steps.} Writing $\boldsymbol\delta$ as the sum of single-component additions from period $T$ down to $2$ as in Theorem~\ref{theo:bias_increase_discharge}, the single-component step applies to each consecutive pair and yields $\partial_v Q_t^{(\boldsymbol\delta)}\succeq\partial_v Q_t^{(\boldsymbol0)}$ for $t=2,\dots,T$. Chaining the orderings of the greatest optimal first-stage discharge gives $u_1^{*(\boldsymbol\delta)}\ge u_1^{*(\boldsymbol0)}$.
\end{proof}

\section{Hydrothermal power systems operation planning} \label{sec:appendix_sddp}

This appendix provides the compact hydrothermal SDDP formulation used as background for the controlled experiments in Section \ref{sec:case_study}. It also connects the large-scale hydrothermal planning model used in the numerical analysis to the stylized dispatch problem analyzed in Section \ref{sec:implications}. In Brazil and other hydro-dominated systems, hydro plants are arranged in cascades of reservoirs and run-of-river units, coupled with thermal generation and transmission networks in an integrated hydrothermal system. Operational decisions must balance short-term adequacy with the long-term management of stored water. The cascade structure and inflow uncertainty create strong spatial and temporal couplings, so that current discharge decisions directly affect future availability, motivating a multistage stochastic optimization formulation for long-term operation planning.

To approximate this cost-to-go function the current state-of-the-art technique is the Stochastic Dual Dynamic Programming (SDDP) algorithm \cite{pereira1991multi}. SDDP decomposes the multistage problem into stage-wise subproblems and iteratively builds a piecewise-linear approximation of the cost-to-go function using supporting hyperplanes (Benders cuts). The resulting approximation is then used to derive reservoir policies and to inform short-term dispatch models.

To apply this methodology, we must specify a mathematical representation of the hydrothermal system. For didactic purposes, we assume a compact model for the system
dispatch constraints, similarly to modeling in \cite{rosemberg2021assessing}. In our model, only water and energy balance constraints and the stochastic evolution of inflows are explicitly represented. The remaining operative constraints and decision variable bounds are supposed to be contained in a polyhedral set $\mathcal{X}_t$.

The SDDP algorithm iteratively constructs the cost-to-go function by solving, at each stage $t \in \{1,\dots,T\}$, the corresponding economic dispatch subproblem under a sampled inflow scenario $\omega \in \Omega_t$. The state of the system at stage $t$ is defined by the vector of stored energy across all hydro plants at the end of the previous period, denoted $\boldsymbol{v}_{t-1}$, together with the inflow realization for the current period, denoted $\boldsymbol{y}_t$. The corresponding subproblem for a given period $t$ and inflow scenario $\omega$ can then be written as:
\begin{align}
q_{t}\left(\boldsymbol{v}_{t-1}, \boldsymbol{y}_{[t-1] },\boldsymbol{\epsilon}_{t\omega}\right)=
~&\underset{\substack{\boldsymbol{v}_t, \boldsymbol{g}, \boldsymbol{u}, \boldsymbol{s}, \boldsymbol{f} 
}
}{\min}~
\boldsymbol{c}^\top_t\boldsymbol{g} + JQ_{t+1}\left(\boldsymbol{v}_{t}, \boldsymbol{y}_{[t]}\right)
\label{eq:objective} \\
&\text{subject to:} \notag \\
&\boldsymbol{A}_t \boldsymbol{f} + \boldsymbol{B}_t \boldsymbol{g} + \boldsymbol{P}_t \boldsymbol{u} = \boldsymbol{d}_t \label{eq:energy_balance}\\
&\boldsymbol{v}_t = \boldsymbol{v}_{t-1} + \boldsymbol{y}_t - \boldsymbol{H}_t(\boldsymbol{u} + \boldsymbol{s}): (\boldsymbol{\lambda}^{(v)}_t) \label{eq:water_balance}\\
& \boldsymbol{y}_t = \boldsymbol{M}_t \boldsymbol{y}_{[t-1]} + \boldsymbol{N}_t\boldsymbol{\epsilon}_{t\omega}: (\boldsymbol{\lambda}^{(y)}_t) \label{eq:inflows_transition} \\
&\left(\boldsymbol{v}_t, \boldsymbol{g}, \boldsymbol{u}, \boldsymbol{s}, \boldsymbol{f}\right) \in \mathcal{X}_t 
\label{eq:box_constraints}.
\end{align}

In problem \eqref{eq:objective}-\eqref{eq:box_constraints}, $\boldsymbol{g}$, $\boldsymbol{f}$, $\boldsymbol{u}$, $\boldsymbol{s}$, and $\boldsymbol{v}_t$ denote, respectively, thermal generation, power flows in transmission lines, hydro discharge, spillage, and stored water at the end of period $t$. The polyhedral set $\mathcal{X}_t$ contains the remaining operational constraints and bounds on these variables.

Constraint \eqref{eq:energy_balance} represents the energy balance in all buses and $\boldsymbol{A}$ represents the network incidence matrix, $\boldsymbol{B}$ accounts for the total thermal generation in each bus of the system, $\boldsymbol{P}_t$ considers the productivity of each hydroelectric unit to account for the total hydro generation
in each bus, and $\boldsymbol{d}_t$ is the vector of the nodal net demand of period
$t$ in each bus of the system. Constraint \eqref{eq:water_balance} represents the water balance for period $t$ and $\boldsymbol{H}_t$ translates the cascading topology of the hydro plants and $\boldsymbol{\lambda}^{(v)}_t$ denotes its vector of dual variables. Constraint \eqref{eq:inflows_transition} with dual variable $\boldsymbol{\lambda}^{(y)}_t$ represents the stochastic inflow transition for period $t$. The matrices $\boldsymbol{M}_t$ and $\boldsymbol{N}_t$ define a linear state-space representation of the inflow dynamics: $\boldsymbol{M}_t$ propagates information from the lagged inflow states $\boldsymbol{y}_{[t-1]} = [\boldsymbol{y}_{t-1},\boldsymbol{y}_{t-2},\dots,\boldsymbol{y}_{1}]^\top$, while $\boldsymbol{N}_t$ maps the noise realization $\boldsymbol{\epsilon}_{t\omega}$ into the current inflow vector. This structure allows inflows $\boldsymbol{y}_t$ to follow any linear stochastic process consistent with the assumed transition model. Constraint \eqref{eq:box_constraints} defines the feasible set $\mathcal{X}_t$, which compactly captures all remaining operational limits and model-specific constraints. Finally, $J$ is a discount factor and the cost-to-go function is defined as follows:
\begin{align}
&Q_{t+1}\left(\boldsymbol{v}_{t}, \boldsymbol{y}_{[t]}\right) = \mathbb{E}_{\boldsymbol{\epsilon}_{t+1}} \left[q_{t+1}\left(\boldsymbol{v}_{t}, \boldsymbol{y}_{[t]},\boldsymbol{\epsilon}_{t+1}\right) \right] = \nonumber \\ &\sum_{\omega \in \Omega_{t+1}}\Pr(\omega) q_{t+1}\left(\boldsymbol{v}_{t}, \boldsymbol{y}_{[t]},\boldsymbol{\epsilon}_{t+1\omega}\right),
\end{align}
where $\Pr(\omega)$ is the probability of scenario $\omega$. 

This function is convex and can, therefore, be represented by the maximum of a collection of supporting hyperplanes \cite{boyd2004convex}. Each supporting hyperplane—also known as a Benders cut—is defined by the value of $Q_{t+1}$ and its
subgradients evaluated at a trial point $(\boldsymbol{v}_{t,n}, \boldsymbol{y}_{[t],n})$.
Let $\mathcal{N}_t$ denote the set of trial points used to construct these cuts.
In this setting, $Q_{t+1}(\boldsymbol{v}_t, \boldsymbol{y}_{[t]})$ in \eqref{eq:objective} is replaced by the auxiliary
variable $\alpha$ and the following set of cuts:
\begin{align}
&\alpha \ge\;
Q_{t+1}\left(\boldsymbol{v}_{t,n}, \boldsymbol{y}_{[t],n}\right)
+ \left(\boldsymbol{\lambda}^{(v)}_{t,n}\right)^\top\left(\boldsymbol{v}_{t} - \boldsymbol{v}_{t,n}\right) + \left(\boldsymbol{\lambda}^{(y)}_{t,n}\right)^\top\left(\boldsymbol{y}_{[t]} - \boldsymbol{y}_{[t],n}\right), \quad \forall (\boldsymbol{v}_{t,n}, \boldsymbol{y}_{[t],n}) \in \mathcal{N}_t,
\label{eq:cuts}
\end{align}
where $\boldsymbol{\lambda}^{(v)}_{t,n}$ and $\boldsymbol{\lambda}^{(y)}_{t,n}$ represent the components of the subgradient of $Q_{t+1}$ with respect to the storage and inflow variables, evaluated at the trial point
$(\boldsymbol{v}_{t,n}, \boldsymbol{y}_{[t],n})$:
\begin{align}
\boldsymbol{\lambda}^{(v)}_{t,n} &\in
\frac{\partial}{\partial \boldsymbol{v}_{t}}
Q_{t+1}\left(\boldsymbol{v}_{t,n}, \boldsymbol{y}_{[t],n}\right)
\label{eq:pi_v}\\
\boldsymbol{\lambda}^{(y)}_{[t],n} &\in
\frac{\partial}{\partial \boldsymbol{y}_{[t]}}
Q_{t+1}\left(\boldsymbol{v}_{t,n}, \boldsymbol{y}_{[t],n}\right).
\label{eq:gamma_y}
\end{align}

The role of the SDDP algorithm is to iteratively construct a tractable approximation of the theoretical cost-to-go function by means of supporting hyperplanes. In practice, $Q_{t+1}$ cannot be evaluated or differentiated exactly, as it is defined recursively through a nested expectation over future stages. Instead, SDDP builds a finite piecewise-linear approximation of $Q_{t+1}$ by sampling realizations of the stochastic process and successively generating supporting hyperplanes of the form \eqref{eq:cuts}. For a detailed analysis and implementation procedures of the SDDP algorithm we refer to \cite{shapiro2011analysis}.

In a high-level description, the SDDP algorithm alternates between a forward simulation and a backward recursion. In the forward pass, it samples realizations of the noise term $\boldsymbol{\epsilon}_{t,\omega}$ from the stochastic process defined in \eqref{eq:inflows_transition}, generates the corresponding inflow scenarios, and sequentially solves problem \eqref{eq:objective}–\eqref{eq:box_constraints} across the planning horizon. Each forward pass may involve multiple scenario paths, providing a richer exploration of the state space. The resulting trajectories of reservoir storages and inflows $(\boldsymbol{v}_{t,n}$, $\boldsymbol{y}_{[t],n})$ constitute the trial points defined earlier. In the subsequent backward recursion, these trial points are used to generate supporting hyperplanes that refine the piecewise-linear approximation of $Q_{t+1}$. Through successive forward–backward iterations, the collection of cuts converges (from below) to the true cost-to-go function under standard assumptions as proven in \cite{philpott2008convergence}.

\section{Additional regression analysis results} \label{sec:additional_regression_info}

In this appendix, we provide additional diagnostics for the regression of cumulative stored-energy forecast errors on cumulative NIE forecast errors, carried out in Section~\ref{sec:empirical_evidence}.

We first examine whether the cumulative forecast-error series exhibit deterministic linear trends. For the 3- and 6-month horizons, neither the NIE forecast-error series nor the stored-energy forecast-error series displays a statistically significant linear trend. The only exception is the 12-month NIE forecast-error series, which exhibits a statistically significant negative trend. This trend, however, explains only a small share of the variation in the series. Thus, the trend evidence does not alter the main interpretation of the regression results, which is driven primarily by the 3- and 6-month horizons.

We then test stationarity using an Augmented Dickey--Fuller (ADF) test. For each cumulative forecast-error series, the baseline specification is
\begin{align}
\Delta E_{t,p}
=
\alpha
+
\theta E_{t-1,p}
+
\sum_{\ell=1}^{L}
\gamma_{\ell}\Delta E_{t-\ell,p}
+
u_t,
\end{align}
where $E_{t,p}$ denotes either $E^{(NIE)}_{t,p}$ or $E^{(storage)}_{t,p}$. This specification includes an intercept but no deterministic trend, reflecting the fact that forecast errors may have a nonzero mean but are not expected, in general, to follow a deterministic linear path. The null hypothesis is $H_0:\theta=0$, corresponding to a unit root, while the alternative is $H_1:\theta<0$, corresponding to stationarity. We report results for $L=3$ and $L=12$ lags. The latter allows for serial correlation over a full annual cycle in monthly data.

Table~\ref{tab:adf_forecast_errors} reports the ADF results. The unit-root null is rejected at the 5\% level for all cumulative forecast-error series under both lag specifications. This supports the use of the forecast-error variables in levels in the regression analysis.

\begin{table}[H]
\centering
\caption{ADF stationarity tests for cumulative forecast-error series}
\label{tab:adf_forecast_errors}
\small
\begin{tabular}{llcccc}
\toprule
Series & $p$ & \multicolumn{2}{c}{3 lags} & \multicolumn{2}{c}{12 lags} \\
\cmidrule(lr){3-4} \cmidrule(lr){5-6}
 & & ADF statistic & $p$-value & ADF statistic & $p$-value \\
\midrule
\multirow{3}{*}{$E^{(NIE)}_{t,p}$}
 & 3  & -6.49 & $<0.001$ & -3.82 & 0.003 \\
 & 6  & -7.01 & $<0.001$ & -3.49 & 0.008 \\
 & 12 & -3.86 & 0.002    & -3.52 & 0.008 \\
\midrule
\multirow{3}{*}{$E^{(storage)}_{t,p}$}
 & 3  & -6.89 & $<0.001$ & -2.93 & 0.042 \\
 & 6  & -6.24 & $<0.001$ & -3.12 & 0.025 \\
 & 12 & -4.20 & $<0.001$ & -2.89 & 0.046 \\
\bottomrule
\end{tabular}
\end{table}

To complement the regression and stationarity diagnostics, Figure~\ref{fig:storage_nie_error_scatter} provides a visual representation of the relationship between cumulative NIE forecast errors and cumulative stored-energy forecast errors. Each panel corresponds to one forecast horizon and plots the cumulative NIE forecast error on the horizontal axis against the corresponding cumulative stored-energy forecast error on the vertical axis. The positive slope of the fitted lines illustrates the same relationship documented in Table~\ref{tab:regression_results}: months in which the model overestimated cumulative inflows also tended to be months in which it overestimated future stored energy. The visual association appears strongest at the 3-month horizon and weakens as the horizon increases to 6 and 12 months, consistent with the declining correlations and $R^2$ values reported in the regression table.

\begin{figure}[H]
\centering

\begin{minipage}{0.32\textwidth}
    \centering
    \includegraphics[width=\linewidth, trim={0.1cm 0.0cm 0.0cm 0.0cm}, clip]{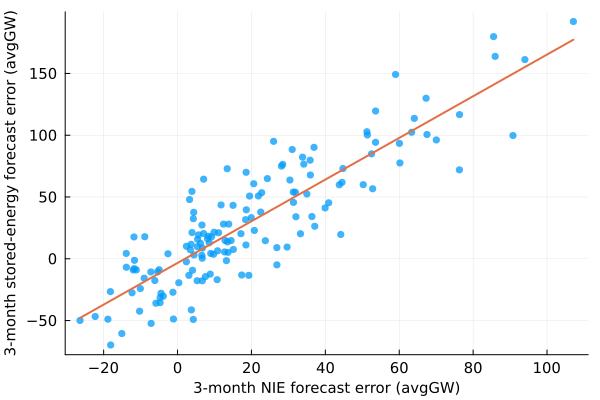}
    \caption*{(a) 3-month horizon}
\end{minipage}
\hfill
\begin{minipage}{0.32\textwidth}
    \centering
    \includegraphics[width=\linewidth, trim={0.1cm 0.0cm 0.0cm 0.0cm}, clip]{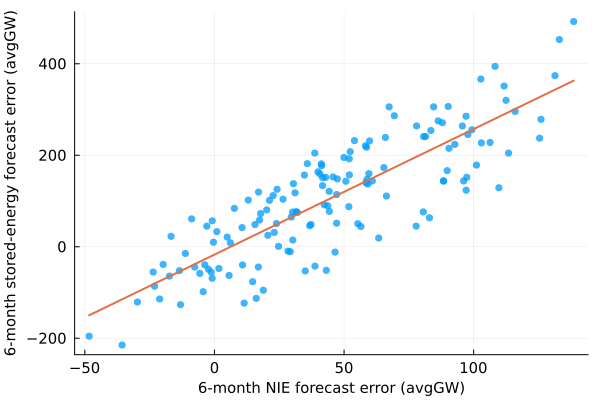}
    \caption*{(b) 6-month horizon}
\end{minipage}
\hfill
\begin{minipage}{0.32\textwidth}
    \centering
    \includegraphics[width=\linewidth, trim={0.1cm 0.0cm 0.0cm 0.0cm}, clip]{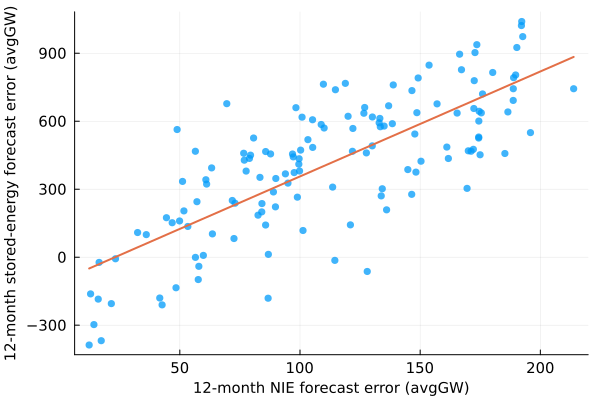}
    \caption*{(c) 12-month horizon}
\end{minipage}

\caption{Scatter plots of cumulative stored-energy forecast errors against cumulative NIE forecast errors for the 3-, 6-, and 12-month horizons. The fitted line in each panel is the corresponding OLS fit.}
\label{fig:storage_nie_error_scatter}
\end{figure}

\bibliographystyle{elsarticle-harv}
\bibliography{bibliography}
\end{document}